%%%%%%%
%%%%%%% Polyhomogeneous expansions from time symmetric data
%%%%%%% 
%%%%%%%
%%%%%%% Gasperin and Juan's paper on 
%%%%%%%   spatial infinity in Minkowski like spacetimes
%%%%%%%
%%%%%%% Started on: 2 March 2017
%%%%%%% Current version: 11 June 2017
%%%%%%% 

\documentclass[a4paper,10pt]{article}

\usepackage{amssymb}
\usepackage{amsmath}
\usepackage{subfigure}
\usepackage{verbatim}

%%% to add color to the text and cross it over for showing
%%% changes and corrections 
\usepackage{color}
%\usepackage[usenames,dvipsnames]{xcolor}
%\usepackage[normalem]{soul}

%%%%%%% Needed to add a fourth level of sectioning.
\usepackage{titlesec}

\setcounter{secnumdepth}{4}
\titleformat{\paragraph}
{\normalfont\normalsize\bfseries}{\theparagraph}{1em}{}
\titlespacing*{\paragraph}
{0pt}{3.25ex plus 1ex minus .2ex}{1.5ex plus .2ex}

%%%%%%%%%%%%%%%%%%%%%%% needed for long lists of equations
\allowdisplaybreaks
%%%%%%%%%%%%%%%%%%%%%%%%

\usepackage{graphicx}
\usepackage{amsthm}
\usepackage{latexsym}
\usepackage[dvips]{epsfig}
\usepackage{wasysym}
\usepackage{mathrsfs}
\usepackage{eufrak}
\usepackage{bm}
%\usepackage{tikz}
%\usetikzlibrary{calc,through,backgrounds}
\usepackage{authblk}
\usepackage{slashed}
\usepackage{yhmath} 
\usepackage{stmaryrd}
%\usepackage{authblk}

%Large comented parts
\usepackage{verbatim}

\theoremstyle{plain}
\newtheorem{proposition}{Proposition}
\newtheorem{lemma}{Lemma}
\newtheorem{theorem}{Theorem}
\newtheorem{assumption}{Assumption}
\newtheorem{conjecture}{Conjecture}

\newtheorem{remark}{Remark}

\setlength{\textwidth}{148mm}           % Width of text on page- max 148
\setlength{\textheight}{220mm}          % height of text on page-max 235
\setlength{\topmargin}{-5mm}            % Margin at top ofpage- max -5
\setlength{\oddsidemargin}{5mm}         % Odd page sidemargin max 15
\setlength{\evensidemargin}{5mm}

% Underlined lowcase latin letters

% Boldface mathmode lowcase latin letters
\def\bma{{\bm a}}
\def\bmb{{\bm b}}
\def\bmc{{\bm c}}
\def\bmd{{\bm d}}
\def\bme{{\bm e}}

\def\bmg{{\bm g}}
\def\bmh{{\bm h}}
\def\bmi{{\bm i}}
\def\bmj{{\bm j}}
\def\bmk{{\bm k}}

\def\bmx{{\bm x}}

% Boldface mathmode numbers
\def\bmzero{{\bm 0}}
\def\bmone{{\bm 1}}

\def\bmthree{{\bm 3}}

% Boldface mathmode uppercase latin letters
\def\bmA{{\bm A}}
\def\bmB{{\bm B}}
\def\bmC{{\bm C}}
\def\bmD{{\bm D}}

\def\bmF{{\bm F}}

\def\bmH{{\bm H}}
\def\bmK{{\bm K}}

\def\bmP{{\bm P}}
\def\bmQ{{\bm Q}}

\def\bmX{{\bm X}}

% Mathbf letters

% Boldface mathmode lowcase greek letters

\def\bmdelta{{\bm \delta}}

\def\bmphi{{\bm \phi}}

% Boldface mathmode uppercase greek letters
\def\bmGamma{{\bm \Gamma}}
\def\bmPhi{{\bm \Phi}}

% Boldface operators
\def\bmpartial{{\bm \partial}}

\def\bmhbar{{\bm \hbar}}

%Yang-Mills indices

%Counter variable for the margin notes
\newcounter{mnotecount}%[section]

% This code generates the margin notes
\newcommand{\mnotex}[1]%{}
{\protect{\stepcounter{mnotecount}}$^{\mbox{\footnotesize $\bullet$\themnotecount}}$ 
\marginpar{%\color{red}%
\raggedright\tiny\em
$\!\!\!\!\!\!\,\bullet$\themnotecount: #1} }

%%%%%%%%%%

%%%%%%% Needed to add a fourth level of sectioning.
\usepackage{titlesec}
\setcounter{secnumdepth}{4}
\titleformat{\paragraph}
{\normalfont\normalsize\bfseries}{\theparagraph}{1em}{}
\titlespacing*{\paragraph}
{0pt}{3.25ex plus 1ex minus .2ex}{1.5ex plus .2ex}

%%%%%% Letters with stroke like h bar

%\usepackage{wsuipa}

\begin{document}

\title{\textbf{Polyhomogeneous expansions from time symmetric initial data}}

\author[,1]{E. Gasper\'in\footnote{E-mail address:{\tt
      e.gasperingarcia@qmul.ac.uk}}} \author[,1]{J. A. Valiente
  Kroon \footnote{E-mail address:{\tt j.a.valiente-kroon@qmul.ac.uk}}}
%\author[1]{Con T. Ributor}
\affil[1]{School of Mathematical Sciences, Queen Mary, University of
  London, Mile End Road, London E1 4NS, United Kingdom.}

\maketitle

\begin{abstract}
We make use of Friedrich's construction of the cylinder at spatial
infinity to relate the logarithmic terms appearing in asymptotic
expansions of components of the Weyl tensor to  the freely
specifiable parts of time symmetric initial data sets for the Einstein
field equations. Our analysis is based on the assumption that a
particular type of formal expansions near the cylinder at spatial
infinity corresponds to the leading terms of actual solutions to the
Einstein field equations. In particular, we show that if the Bach tensor of
the initial conformal metric does not vanish at the point at infinity
then the most singular component of the Weyl tensor decays near null
infinity as
$O(\tilde{r}^{-3}\ln \tilde{r})$ so that spacetime will not peel. We
also provide necessary conditions on the initial data which should lead to a
peeling spacetime. Finally, we show how to construct global spacetimes
which are candidates for non-peeling (polyhomogeneous) asymptotics.
\end{abstract}

\textbf{Keywords:} Conformal methods, spinors, spatial infinity, peeling

\medskip
\textbf{PACS:} 04.20.Ex, 04.20.Ha, 04.20.Gz

%\setcounter{tocdepth}{2}
%\tableofcontents

\section{Introduction}
The \emph{Peeling theorem} has played a very important role in the
development of the modern notion of gravitational radiation. It is
usually formulated within the context of \emph{asymptotically simple
  spacetimes} ---see \cite{CFEBook}, Section 10.2 for a definition
of this class of spacetimes. The Peeling theorem can be formulated as:

\begin{theorem}
\label{Theorem:Peeling}
Let $(\tilde{\mathcal{M}},\tilde{\bmg})$ denote a vacuum asymptotically simple
spacetime with vanishing Cosmological constant. Then the components of
the Weyl tensor with respect to a frame adapted to a foliation of
outgoing light cones satisfy
\[
\tilde\psi_0 = O\left( \frac{1}{\tilde{r}^5}\right),\quad \tilde\psi_1
= O\left( \frac{1}{\tilde{r}^4}\right),\quad \tilde\psi_2 = O\left(
  \frac{1}{\tilde{r}^3}\right),\quad \tilde\psi_3 = O\left(
  \frac{1}{\tilde{r}^2}\right),\quad \tilde\psi_4 = O\left( \frac{1}{\tilde{r}}\right),
\]
where $\tilde{r}$ is a suitable parameter along the generators of the
light cones.
\end{theorem}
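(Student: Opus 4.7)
The plan is to invoke Penrose's conformal compactification. By the definition of asymptotic simplicity there exist an unphysical spacetime $(\mathcal{M},\bmg)$ and a conformal factor $\Xi$ with $\bmg=\Xi^2\tilde\bmg$ and $\Xi=0$, $\mathrm{d}\Xi\neq 0$ on the boundary $\mathscr{I}$. The vacuum conformal Einstein field equations imply that the rescaled Weyl spinor $\phi_{ABCD}:=\Xi^{-1}\Psi_{ABCD}$ extends smoothly across $\mathscr{I}^+$. This is the crucial analytic input: it converts the decay problem into a statement about smooth $O(1)$ quantities multiplied by powers of $\Xi$.

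The first step is to introduce a Newman--Penrose dyad $\{o^A,\iota^A\}$ in the unphysical spacetime which is smooth and nonvanishing at $\mathscr{I}^+$, adapted so that $\ell^a=o^A\bar{o}^{A'}$ is tangent to the generators of an outgoing null foliation crossing $\mathscr{I}^+$ transversally while $n^a=\iota^A\bar{\iota}^{A'}$ is tangent to $\mathscr{I}^+$. Conformally rescaling the dyad by $\tilde{o}^A=\Xi\, o^A$, $\tilde{\iota}^A=\iota^A$ yields a physical tetrad whose vector $\tilde\ell^a=\Xi^2\ell^a$, after a harmless reparametrization, is tangent to affinely parametrized outgoing null geodesics in $(\tilde{\mathcal{M}},\tilde\bmg)$; this follows from the transformation of the geodesic equation under $\bmg=\Xi^2\tilde\bmg$.

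Next comes the conformal weight bookkeeping. Combining $\Psi_{ABCD}=\Xi\,\phi_{ABCD}$ with the dyad rescaling gives component-wise
\[
\tilde{\psi}_k \,=\, \Xi^{5-k}\phi_k, \qquad k=0,1,\ldots,4,
\]
where $\phi_k$ are the components of $\phi_{ABCD}$ in the unphysical dyad and are therefore $O(1)$ at $\mathscr{I}^+$. To convert powers of $\Xi$ into powers of $\tilde{r}$, I would establish $\Xi=O(\tilde{r}^{-1})$ along the outgoing null geodesics: the ODE satisfied by $\Xi$ as a function of the physical affine parameter $\tilde{r}$ becomes, upon using the chosen rescaling together with $\mathrm{d}\Xi\neq 0$ at $\mathscr{I}^+$, of the form $\mathrm{d}\Xi/\mathrm{d}\tilde{r}\sim -c\,\Xi^2$ with $c>0$, which integrates to $\Xi\sim (c\,\tilde{r})^{-1}$. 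Substituting into the displayed identity produces the claimed rates.

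The principal obstacle is the verification that $\phi_{ABCD}$ actually extends smoothly across $\mathscr{I}^+$ starting from the vacuum conformal equations. Penrose's original argument made this plausible, but a rigorous justification requires Friedrich's regular conformal field equations, which furnish a regular evolution system whose solutions control the regularity of the rescaled Weyl tensor at $\mathscr{I}^+$, together with initial data actually realising asymptotic simplicity---precisely the obstruction that the body of the present paper is devoted to analysing via the cylinder at spatial infinity. Once the smooth extension is granted, the remainder of the argument is the tetrad and weight bookkeeping sketched above.
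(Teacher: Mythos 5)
The paper does not actually prove Theorem~\ref{Theorem:Peeling}: it quotes it as the classical Peeling theorem and refers the reader to \cite{CFEBook}, Section 10.2, noting only that a $C^4$ conformal extension suffices. Your argument is essentially the standard Penrose conformal-rescaling proof given in that reference --- the weight identity $\tilde\psi_k=\Xi^{5-k}\phi_k$, the asymptotics $\Xi\sim 1/\tilde r$ along outgoing null geodesics from $\mathrm{d}\Xi/\mathrm{d}\tilde r\sim-c\,\Xi^2$, and the regularity of the rescaled Weyl spinor at $\mathscr{I}^+$ --- and it is correct in its essentials; the only quibble is that your closing paragraph is overly cautious, since with asymptotic simplicity taken as a \emph{hypothesis} the vanishing of the Weyl tensor on $\mathscr{I}^+$ (hence the boundedness of $\phi_{ABCD}$ there) follows from the vacuum Bianchi identities and the assumed regularity of the conformal extension, without invoking Friedrich's regular conformal field equations, which are needed only for the existence/genericity question that the body of this paper addresses.
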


The definition of asymptotically simple spacetimes involves an
assumption on the existence of a smooth (i.e. $C^\infty$) conformal
extension of the spacetime $(\tilde{\mathcal{M}},\tilde{\bmg})$. An inspection
of the proof of the Peeling theorem reveals that, in fact, it is only
necessary to assume that the conformal extension is $C^4$. In view of
the latter, the question of the existence and genericity of spacetimes
satisfying the \emph{peeling behaviour} can be rephrased in terms of
the construction of asymptotically flat spacetimes with, at least,
this minimum of differentiability.

\medskip
There exists a vast body of work aimed at the construction of
spacetimes satisfying the peeling behaviour and at understanding the
genericity of this property ---see e.g. \cite{Fra04,Fri99,CFEBook} for an entry
point to the literature on this subject. The seminal work in \cite{Pen65a} already
singles out a key feature of the problem ---namely, that whereas
\emph{Penrose's compactification procedure} applied to the Minkowski
spacetime renders a fully smooth conformal extension, for spacetimes
with a non-vanishing mass (e.g. the Schwarzschild spacetime) the
conformal structure degenerates at \emph{spatial infinity}. As spatial
infinity can be regarded as the (past/future) endpoint of the generators of
(future/past) null infinity, it is natural to expect that the behaviour of
the gravitational field near spatial infinity will, somehow, reflect
on the peeling properties of the spacetime ---particularly, if one
tries to analyse these from the point of view of a Cauchy initial
value problem.

\medskip
The first systematic attempt to understand the \emph{generic} properties of the
Einstein field equations near spatial infinity from the point of view
of an initial value problem are due to Beig \& Schmidt \cite{BeiSch82,Bei84} who
integrate the equations along the so-called \emph{hyperboloid} at
spatial infinity. Further insight on the relation between the peeling
property and spatial infinity was provided by Friedrich's proof of the
semiglobal existence and stability of perturbations of the Minkowski
spacetime from \emph{hyperboloidal initial data}
---see\cite{Fri86b}. This result ensures the existence of (semiglobal)
developments with a smooth null infinity (and thus peeling) if suitable hyperboloidal
initial data is provided. A posterior analysis of the solutions to the
Einstein constraint equations in the hyperboloidal setting by
Andersson, Chrusciel \& Friedrich \cite{AndChrFri92} and later
Andersson \& Chrusciel \cite{AndChr93,AndChr94} revealed that the
initial data sets considered in Friedrich's semiglobal results are
non-generic. More precisely, their results reveal the existence of
certain obstructions to the smoothness null infinity and suggest that
a consistent framework for the analysis of the asymptotics of the
gravitational field of isolated bodies is that of
\emph{polyhomogeneous expansions} ---i.e expansions involving powers
of $1/\tilde{r}$ and $\ln \tilde{r}$. The formal properties of
spacetimes possessing polyhomogeneous asymptotic expansions were explored 
in \cite{ChrMacSin95} ---similar types of expansions had been
considered earlier in e.g. \cite{NovGol82,Win85}; further properties
have been later discussed in \cite{Val98,Val99a,Val99b}. 

\medskip
The proof of the non-linear stability of the Minkowski spacetime by
Christodoulou \& Klainerman \cite{ChrKla93} provides a further body of
evidence of the non-generic character of spacetimes satisfying the
peeling behaviour. Indeed, their analysis provides spacetimes, which,
in the notation of Theorem \ref{Theorem:Peeling} satisfy:
\begin{equation}
\tilde\psi_0 = O\left( \frac{1}{\tilde{r}^{7/2}}\right),\quad \tilde\psi_1
= O\left( \frac{1}{\tilde{r}^{7/2}}\right),\quad \tilde\psi_2 = O\left(
  \frac{1}{\tilde{r}^3}\right),\quad \tilde\psi_3 = O\left(
  \frac{1}{\tilde{r}^2}\right),\quad \tilde\psi_4 = O\left(
  \frac{1}{\tilde{r}}\right),
\label{NonPeelingCK}
\end{equation}
---see \cite{Fri92}. Christodoulou \& Klainerman's original results
left open the question of whether these decay estimates are sharp or
could actually be improved through further analysis.  The relation
between non-peeling expansions and a \emph{no incoming radiation
condition} expressed in terms of the constancy of the Bondi mass on
$\mathscr{I}^-$ in Christodoulou \& Klainerman's construction has been
explored in \cite{Chr03} ---remarkably, it turns out that the no
incoming radiation condition is not enough
to preclude the development of logarithmic terms in the
expansions. Later, by making further assumptions on the initial data
Klainerman \& Nicol\`o have been able to ensure the existence of
spacetimes with the peeling property ---see
\cite{KlaNic03}\footnote{Notice, however, that the conditions on the
initial data considered in this reference exclude the Kerr
spacetime.}.

\subsubsection*{The cylinder at spatial infinity} 
Arguably, the most systematic approach to the analysis of the
properties of the Einstein field equation near spatial infinity is
based on Friedrich's construction of the \emph{cylinder at spatial
  infinity} \cite{Fri98a}. This framework makes use of the so-called \emph{extended conformal
Einstein field equations} and a gauge based on the properties of
conformal geodesics to formulate a regular initial value problem at
spatial infinity. Crucial in this framework is a careful
description of the singular behaviour of the Weyl tensor and the degeneracy of
the conformal field equations near spatial infinity ---both at the
level of data and solutions. In particular, the use of a gauge based on
conformal invariants supports the intuition that the singularities
identified in this construction are truly geometric ---as opposed to
singularities due to a deficiency of the gauge. This point is further
supported by the analysis of static and stationary solutions given in
\cite{Fri04,AceVal11}.

Friedrich's framework represents spatial infinity through an extended
set obtained first by blowing-up the point of spatial infinity to a
2-sphere and then by considering the timelike conformal geodesics
passing trough this sphere and orthogonal to some fiduciary initial
hypersurface $\mathcal{S}$. These curves rule the cylinder at spatial infinity
$\mathcal{I}$ and allow to transport information from past null
infinity to future null infinity. The cylinder $\mathcal{I}$
intersects null infinity transversally at so-called \emph{critical
  sets} $\mathcal{I}^\pm$ ---see Figure \ref{Figure:CylinderHorizontal}. 

\begin{figure}[t]
\centering
\includegraphics[width=0.9\textwidth]{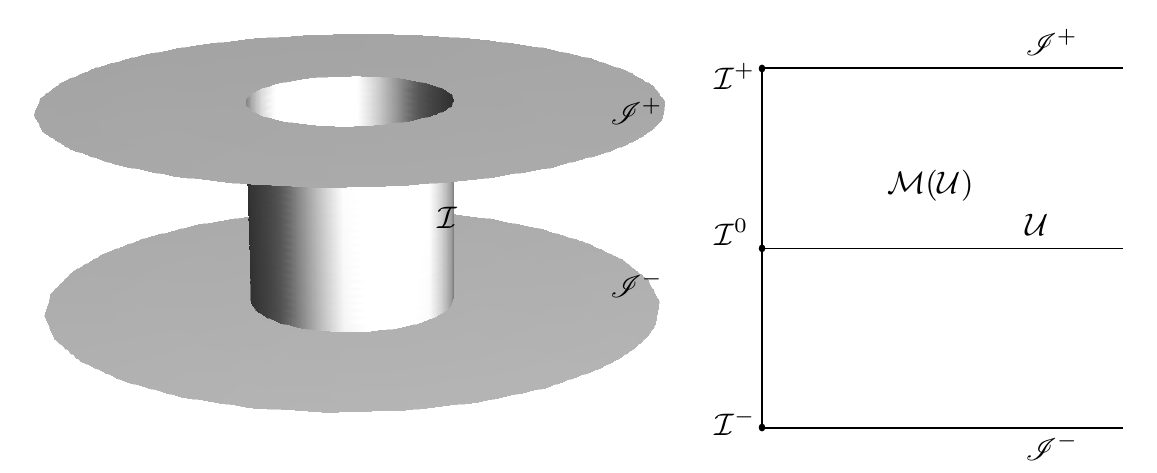}
\caption{Left: schematic representation of the cylinder at spatial
infinity in the so-called \emph{horizontal representation} where null
infinity corresponds to the locus of points with $\tau=\pm 1$. The
cylinder $\mathcal{I}$ is a total characteristic of the conformal
evolution equations ---see Section \ref{Section:ManifoldMa} for
further details. Right: longitudinal section in which the angular
dependence has been suppressed. Here $\mathcal{U}$ denotes an open set
in a neighbourhood of $i$ and $\mathcal{M}(\mathcal{U})$ its
development; $\mathcal{I}^\pm$ are the critical sets where the
cylinder meets spatial infinity and $\mathcal{I}^0$ is the
intersection of the cylinder with the initial hypersurface. These
figures are coordinate rather than conformal diagrams ---in
particular, conformal geodesics correspond to vertical lines.}
\label{Figure:CylinderHorizontal}
\end{figure}

From the point of view of the
conformal Einstein field equations the set $\mathcal{I}$ is very special as it
turns out to be a \emph{total characteristic} ---that is, the full
conformal evolution equations reduce to a system of transport
equations on the cylinder. Thus, it is not possible to prescribe
boundary data on $\mathcal{I}$. Rather, the value of the various
conformal fields is obtained from an integration of the transport
equations using the data at the intersection of the $\mathcal{I}$ with
$\mathcal{S}$. The integration of these transport equations reveals
that, in fact, the conformal evolution equations degenerate at the
critical sets ---in the sense that the matrix associated to the the
time derivatives in the evolution equations loses rank. 

Crucially, the total characteristic nature of the cylinder at spatial
infinity implies the existence of a hierarchy of transport equations
at $\mathcal{I}$. This, in turn, allows to
construct formal expansions of the solutions to the extended conformal
Einstein field equations ---so-called \emph{F-expansions}. These expansions provide valuable
information about the type of singular behaviour one can expect at the
\emph{intersection of null and spatial infinity}. In particular, in
\cite{Fri98a} it has been shown that for \emph{time symmetric initial data
sets}, the solutions to the hierarchy of transport equations at
$\mathcal{I}$ develop a particular type of logarithmic singularities
at the critical sets. This class of singularities 
are a manifestation of the degeneracy of the conformal
evolution equations at $\mathcal{I}^\pm$ and, roughly speaking, are
associated to the \emph{linear part} of the evolution equations. It is worth noticing that
the structural properties leading to this type of singular behaviour
is shared by a large class of equations ---the Maxwell, Yang-Mills,
scalar field, etc.

The key result in \cite{Fri98a} is that the class of logarithmic
singularities alluded to in the previous paragraph do not arise if the
\emph{regularity condition}
\begin{equation}
D_{\{i_p} D_{i_{p-1}} \cdots D_{i_1} b_{jk\}}(i) =0, \qquad
p=0,\,1,\,\ldots p
\label{FriedrichRegularityCondition}
\end{equation}
on the Bach tensor $b_{jk}$ of the \emph{initial conformal metric}
$\bmh$ and its derivatives at $i$ is satisfied. In condition
\eqref{FriedrichRegularityCondition} $D$ is the Levi-Civita connection
of $\bmh$ and ${}_{\{ab\cdots c\}}$ denotes the operation of taking
the symmetric, $\bmh$-tracefree part.  If condition
\eqref{FriedrichRegularityCondition} is not satisfied at some order
$p_\lightning$ then logarithmic singularities arise in the solutions
to the transport equations at order $p_\lightning+2$. The regularity
condition \eqref{FriedrichRegularityCondition} is a condition on the
conformal class $[\bmh]$ of the metric $\bmh$ and thus can be verified
to be conformally invariant. It is satisfied by static initial data
sets ---see \cite{Bei91b,Fri86a,Fri04}. It is also (trivially)
satisfied by conformally flat time symmetric initial data sets
---e.g. the Brill-Lindquist and Misner initial data sets
\cite{BriLin63,Mis63}. Another class of \emph{obstructions to the
smoothness of null infinity}, has been identified in
\cite{Val04a,Val04d}. The logarithmic singularities associated to
these obstructions are related to the specific nonlinear structure of
the Einstein field equations.

\medskip
As already emphasised, the asymptotic expansions obtained through the
framework of the cylinder at spatial infinity are obtained in a very
specific gauge (the \emph{F-gauge}) based on the properties of
conformal geodesics. Accordingly, a natural question to be asked is the
following:

\smallskip
\noindent
\emph{\textbf{Q}: what type of asymptotic expansions near null infinity is implied by
the formal F-expansions near the cylinder at spatial infinity?}

\smallskip
As the discussion of asymptotic properties near null infinity
(cf. e.g. Theorem \ref{Theorem:Peeling}) is
usually expressed in terms of a gauge hinged on null infinity (the
\emph{Newman-Penrose (NP) gauge}), in order to address \textbf{\em Q}
 one needs to relate the F and NP gauges. The required
analysis has been discussed in \cite{FriKan00}.

\subsubsection*{Main results}
In this article we address \textbf{\em Q} for time symmetric initial
data sets admitting a conformal metric that is analytic near spatial
infinity ---this is the class of initial data sets considered in
\cite{Fri98a}. For this class of initial data there exist detailed
computations of the solution to the transport equations near spatial
infinity ---see e.g. \cite{Val04a,Val04d,Val07a}. Under the assumption
\emph{that the formal expansions correspond to actual solutions to the
conformal Einstein field equations} (Assumption
\ref{Assumption:FormalExpansionsAreNotFormal} in Section
\ref{Section:CylinderSpatialInfinity}) our main result is that generic
time symmetric initial data (see Assumption \ref{Assumption:Initial
data} in Section \ref{Section:CylinderSpatialInfinity}) gives rise to
developments with expansions near null infinity of the form
\begin{equation}
\tilde\psi_0 = O\left( \frac{\ln \tilde{r}}{\tilde{r}^3}\right),\quad \tilde\psi_1
= O\left( \frac{\ln \tilde{r}}{\tilde{r}^3}\right),\quad \tilde\psi_2 = O\left(
  \frac{\ln \tilde{r} }{\tilde{r}^3}\right),\quad \tilde\psi_3 = O\left(
  \frac{1}{\tilde{r}^2}\right),\quad \tilde\psi_4 = O\left(
  \frac{1}{\tilde{r}}\right),
\label{NonPeelingExpansions}
\end{equation}
thus, suggesting \emph{polyhomogeneous asymptotics}. As 
\begin{equation}
\left( \frac{\ln \tilde{r}}{\tilde{r}^3}\right)\bigg/\left(
  \frac{1}{\tilde{r}^{7/2}} \right) = \tilde{r}^{1/2}\ln \tilde{r}
\longrightarrow \infty \qquad \mbox{as}\quad \tilde{r}\longrightarrow
\infty,
\label{UsVsCK}
\end{equation}
the expansions in \eqref{NonPeelingExpansions} imply a decay of the
components of the Weyl tensor which is
even slower than the one given by Christoduolou \& Klainerman
---cf. \eqref{NonPeelingCK}. Moreover, our analysis shows how the
non-peeling decay in \eqref{NonPeelingExpansions} is related to
Friedrich's regularity condition
\eqref{FriedrichRegularityCondition}. More precisely, the decay
\eqref{NonPeelingExpansions} is obtained if the regularity condition
\eqref{FriedrichRegularityCondition} is violated at order $p=0$. We
also show that, under our assumptions, a necessary condition to obtain
 peeling decay is that \eqref{FriedrichRegularityCondition} holds up
to order $p=3$. 

\medskip
Once the part of the initial data responsible for the decay in
\eqref{NonPeelingExpansions} has been identified, it is natural to ask
whether it is possible to construct a candidate (global) spacetime to
have such asymptotic behaviour. We show that this is indeed possible 
by first constructing a family of initial data sets which violate the
regularity condition \eqref{FriedrichRegularityCondition} to order
$p=0$ and which can be regarded as perturbations of time symmetric
Minkowski initial data. Consistent with the limit in
\eqref{UsVsCK},  it turns out that Christodoulou \&
Klainerman's result is not applicable to initial data sets that
violate Friedrich's regularity condition to order $p=0$ ---the
quantity measuring global smallness is not well defined. Fortunately, Bieri's
generalisation of Christodoulou \&
Klainerman's global existence result \cite{Bie10} can, nevertheless, be used
to construct the maximal global hyperbolic development of the initial
data. This construction render the desired candidate spacetime. 

\begin{remark}
{\em The various pieces of information required to obtain the
  expressions \eqref{NonPeelingExpansions} have been available in the
  literature for sometime. However, they have remained scattered as,
  for historical reasons, the main focus has for a while been to understand the
  conditions leading to a smooth conformal extension. The
  main task in this article is to put these pieces together.}
\end{remark}

\begin{remark}
{\em The analysis in this article can be generalised, to the expense
  of lengthier computations, to classes of initial with a non-vanishing
extrinsic curvature. As the decays in \eqref{NonPeelingExpansions} are
almost borderline with what is allowed by the Einstein field
equations, it is conjectured that the inclusion of a non-vanishing
extrinsic curvature without linear momentum will not modify our main
result.}
\end{remark}

\begin{remark}
{\em It should be stressed that our main result is formal ---i.e. it
  assumes the existence of a spacetime with the given asymptotics
  ---cf. Assumption \ref{Assumption:FormalExpansionsAreNotFormal}. The
main open problem in the subject of the asymptotics of the
gravitational field of isolated systems can be described as showing
that there indeed exist solutions to the Einstein field equations with
the given asymptotics ---in other words, one needs to analyse the
relation between the expansions used to obtain
\eqref{NonPeelingExpansions} and actual solutions. This is a challenging and deep problem which
deserves careful consideration. A possible line of attack of this
question is to obtain a suitable generalisation of the estimates for
the spin-2 equation in \cite{Fri03b} to the full conformal Einstein equations.}
\end{remark}

\subsection*{Outline of the article}
The article is structured as follows. Section \ref{Section:CFE}
provides a brief review of the main technical tool in this article,
the extended conformal Einstein field equations ---including a
discussion of gauge issues and the construction of initial
data. Section \ref{Section:CylinderSpatialInfinity} provides an
overview of the construction of the cylinder at spatial infinity. The
purpose of this section is not so much to describe the construction
but to set notation and describe the features relevant for the purposes of
the present article ---in particular, Subsection
\ref{Section:TransportEquations} provides a description of the
asymptotic expansions of the solutions to the conformal field
equations near the cylinder for time symmetric initial data. Section
\ref{Section:NPGauge} provides a discussion of the relation between
the F-gauge used in the construction of the cylinder at spatial
infinity and the NP-gauge used in the discussion of peeling
properties. Our main results are provided in Section
\ref{Section:PolyhomogeneousExpansions}. Section \ref{Section:Candidates} discusses the
construction of global spacetimes which are candidates to have the
non-peeling asymptotics of our main results. We provide some
concluding remarks and outlooks in Section
\ref{Section:Conclusions}. Finally, in the Appendix we provide
detailed expressions of the solutions to the transport equations on
the cylinder at spatial infinity. 

\subsection*{Notation and conventions}
In this article $\{_a ,_b , _c ,
 . . .\}$ denote abstract tensor indices and $\{_\bma ,_\bmb , _\bmc ,
 . . .\}$ will be used as spacetime frame indices taking the values ${
   0, . . . , 3 }$.  In this way, given a basis
$\{\bme_{\bma}\}$ a generic tensor is denoted by $T_{ab}$ while its
components in the given basis are denoted by $T_{\bma \bmb}\equiv
T_{ab}\bme_{\bma}{}^{a}\bme_{\bmb}{}^{b}$.
  Additionally, spatial frame indices respect to an
 adapted frame will be denoted by ${}_\bmi,\, {}_\bmj,\,
 {}_\bmk,\ldots$ and will take the values $1,\,2,\,3$.  Since part of
 the analysis will require the use of spinors the notation and
 conventions of Penrose \& Rindler \cite{PenRin84} will be followed.
 In particular, capital Latin indices $\{ _A , _B , _C , . . .\}$ will
 denote abstract spinor indices while boldface capital Latin indices
 $\{ _\bmA , _\bmB , _\bmC , . . .\}$ will denote frame spinorial
 indices with respect to some specified spin dyad ${
   \{\epsilon_\bmA{}^{A} \} }.$

\section{The conformal Einstein field equations}
\label{Section:CFE}
This section provides a brief discussion of the basic technical tools
used in this article. For a more detailed discussion on the structural
properties of the conformal
Einstein field equations, the reader is referred to \cite{CFEBook}. 

\subsection{The extended conformal Einstein field equations}
The \emph{extended conformal Einstein field equations} are a conformal
representation of the Einstein field equations expressed in terms of a
Weyl connection. In what follows let
$(\tilde{\mathcal{M}},\tilde\bmg)$ denote a spacetime satisfying the
vacuum Einstein field equations 
\begin{equation}
\tilde{R}_{ab}=0,
\label{EFE}
\end{equation}
and let $(\mathcal{M},\bmg)$ be a conformal extension thereof with
\[
g_{ab} = \Theta^2 \tilde{g}_{ab},
\]
where $\Theta$ is some suitable conformal factor. Furthermore, let
$\{ \bme_a \}$ denote a $\bmg$-orthonormal frame. In terms of the
above, the extended conformal Einstein field equations are given by
the concise expressions 
\begin{subequations}
\begin{eqnarray}
&& [\bme_\bma,\bme_\bma] = ( \hat\Gamma_\bma{}^\bmc{}_\bmb
   -\hat{\Gamma}_\bmb{}^\bmc{}_\bma  )\bme_\bmc, \label{XCFE1}\\
&& \hat{P}^\bmc{}_{\bmd\bma\bmb} = \hat{\rho}^\bmc{}_{\bmd\bma\bmb}, \label{XCFE2}
  \\
&& \hat{\nabla}_\bmc\hat{L}_{\bmd\bmb} -
   \hat{\nabla}_\bmd\hat{L}_{\bmc\bmb} = d_\bma
   d^\bma{}_{\bmb\bmc\bmd}, \label{XCFE3}\\
&& \hat{\nabla}_\bma d^\bma{}_{\bmb\bmc\bmd} = f_\bma
   d^\bma{}_{\bmb\bmc\bmd}. \label{XCFE4}
\end{eqnarray}
\end{subequations}
In the previous equations $\hat{\nabla}$ denotes the Weyl connection
defined through the relation
\[
\hat{\nabla}_a g_{bc} = -2 f_a g_{bc},
\]
with $f_a$ a smooth covector and $\hat\Gamma_\bma{}^\bmb{}_\bmc$ the
associated connection coefficients. Moreover, 
\[
d_\bma = \Theta f_\bma + \nabla_\bma \Theta,
\]
while
\[
d^\bma{}_{\bmb\bmc\bmd} \equiv \Theta^{-1} C^\bma{}_{\bmb\bmc\bmd}
\]
are the components of the \emph{rescaled Weyl tensor} and
$\hat{L}_{\bma\bmb}$ denote components of the Schouten
tensor of the connection $\hat{\nabla}$. Finally, $\hat{P}^\bmc{}_{\bmd\bma\bmb}$ and
$\hat{\rho}^\bmc{}_{\bmd\bma\bmb}$ denote, respectively, the
\emph{geometric} and \emph{algebraic curvatures} ---i.e. the
classical expression of the curvature in terms of the connection
coefficients and its irreducible decomposition in terms of
$d^\bma{}_{\bmb\bmc\bmd}$ and $\hat{L}_{\bma\bmb}$. 

\begin{remark}
{\em Equations \eqref{XCFE1} and \eqref{XCFE2} are a rewriting of the
first and second Cartan structure equations. In particular, equation
\eqref{XCFE2} provides a differential condition for the components $f_\bma$.}
\end{remark}

\begin{remark}
{\em A key property of the system \eqref{XCFE1}-\eqref{XCFE4} is that
  a solution thereof implies, whenever $\Theta\neq 0$, a solution to
  the Einstein field equations \eqref{EFE} ---see e.g. \cite{CFEBook},
  Proposition 8.3. }
\end{remark}

\begin{remark}
{\em Most applications of the extended conformal Einstein field
  equations make use of a spinorial frame version of thereof. The latter
  is readily obtained by contraction with the constant Infeld-van der
Waerden symbols $\sigma^\bma{}_{\bmA\bmA'}$. In particular, exploiting
the symmetries of the rescaled Weyl tensor one has that its spinorial
counterpart is defined as 
\[
d_{\bmA\bmA'\bmB\bmB'\bmC\bmC'\bmD\bmD'} \equiv
\sigma^\bma{}_{\bmA\bmA'}\sigma^\bmb{}_{\bmB\bmB'}\sigma^\bmc{}_{\bmC\bmC'}\sigma^\bmd{}_{\bmD\bmD'} d_{\bma\bmb\bmc\bmd},
\]
so that one has the decomposition
\[
d_{\bmA\bmA'\bmB\bmB'\bmC\bmC'\bmD\bmD'} = -\phi_{\bmA\bmB\bmC\bmD}
\epsilon_{\bmA'\bmB'} \epsilon_{\bmC\bmC'} -
\bar{\phi}_{\bmA'\bmB'\bmC'\bmD'} \epsilon_{\bmA\bmB}\epsilon_{\bmC\bmD},
\]
with $\phi_{\bmA\bmB\bmC\bmD}=\phi_{(\bmA\bmB\bmC\bmD)}$ the
components of the \emph{Weyl spinor} $\phi_{ABCD}$. Equation
\eqref{XCFE4} then takes the form
\[
\hat{\nabla}^\bmA{}_{\bmA'} \phi_{\bmA\bmB\bmC\bmD} = f^\bmA{}_{\bmA'} \phi_{\bmA\bmB\bmC\bmD} .
\]
}
\end{remark}

\subsection{Gauge considerations}
A conformal geodesic on $(\tilde{\mathcal{M}},\tilde\bmg)$ can be
viewed as a curve $x(\tau)$, $\tau\in I$ with $I$ some interval, for
which there exists a Weyl connection along the curve $\hat{\nabla}$ such that 
\[
\hat{\nabla}_{\dot{\bmx}} \dot{\bmx}=0,
\]
with $\dot{\bmx}$ the tangent vector to the curve ---in other words,
$x(\tau)$ is an affine geodesic of the curve, see \cite{CFEBook}
Section 5.5.2. This property can be used to construct \emph{Gaussian
gauge systems} if a given region of spacetime is covered by a
non-intersecting congruence of conformal geodesics. In this type of
gauge systems an orthonormal frame $\{ \bme_\bma \}$ is Weyl
propagated and spatial coordinates $\underline{x}_\star$ are extended
off an initial hypersurface by requiring them to be constant along the
curve. These \emph{spatial} coordinates are completed by including the
parameter $\tau$ of the curve so as to obtain the spacetime
coordinates $(\tau,\underline{x}_\star)$. Moreover, conformal geodesics
single out a \emph{canonical representative} of the conformal class
$[\tilde\bmg]$ by imposing the condition
\begin{equation}
\bmg(\dot{\bmx},\dot{\bmx}) = \Theta^2 \tilde\bmg(\dot\bmx,\dot\bmx)
=1.
\label{NormalisationCondition}
\end{equation}
Remarkably, the above condition leads to an explicit expression for
the conformal factor $\Theta$ in terms of the parameter of the curve
---namely
\begin{equation}
\Theta(\tau) = \Theta_\star + \dot\Theta_\star (\tau-\tau_\star) +
\frac{1}{2}\ddot\Theta_\star (\tau-\tau_\star)^2, 
\label{ConformalFactorCG}
\end{equation}
where the coefficients $\Theta_\star$, $\dot\Theta_\star$ and
$\ddot\Theta_\star$ are constant along a given conformal geodesic and
are expressible in terms of initial data for the congruence ---see
\cite{CFEBook}, Proposition 5.1. In view of
\eqref{NormalisationCondition} one can set
$\bme_\bmzero=\dot{\bmx}$. It follows then that
\[
\hat{\Gamma}_\bmzero{}^\bma{}_\bmb=0, \qquad f_\bmzero =0, \qquad \hat{L}_{\bmzero\bma}=0.
\]

\medskip
The extended conformal Einstein field equations
\eqref{XCFE1}-\eqref{XCFE4} expressed in terms of a conformal Gaussian
gauge system as described in the previous paragraph take the form
\begin{subequations}
\begin{eqnarray}
&& \partial_\tau \bme = \mathbf{Q}_1(\bmGamma,\bme) + \mathbf{K}_1(\bme), \label{CEE1}\\
&& \partial_\tau\bmGamma =
   \mathbf{Q}_2(\bmGamma,\bmGamma) + \mathbf{K}_2(\bmPhi)+\mathbf{L}_2\bmphi, \label{CEE2}\\
&& \partial_\tau \bmPhi = \mathbf{Q}_3(\bmGamma,\bmPhi )
   + \mathbf{L}_3\bmphi, \label{CEE3}\\
&& \big( \mathbf{I} + \mathbf{A}^0(\bme)  \big)\partial_\tau \bmphi +
   \mathbf{A}^\alpha \partial_\alpha \bmphi
   =\mathbf{B}(\bmGamma,\bmphi), \label{CEE4}
\end{eqnarray}
\end{subequations}
where $\bme$, $\bmGamma$, $\bmPhi$ and $\phi$ denote the independent components of the frame,
connection, Schouten tensor and Weyl tensor. In equations
\eqref{CEE1}-\eqref{CEE3} one has that 
$\mathbf{Q}_1$, $\mathbf{Q}_2$ and $\mathbf{Q}_3$ denote a quadratic
expression with constant coefficients of the arguments, $\mathbf{K}_1$
and $\mathbf{K}_2$ denote linear functions with constant coefficients
of the argument and $\mathbf{L}_2(x)$ and $\mathbf{L}_2(x)$ denote smooth matrix-values
functions of the coordinates. Moreover, in equation \eqref{CEE4},
$\mathbf{I}$ is the $5\times 5$ unit matrix, $\mathbf{A}^\mu$ are
  Hermitian matrices depending smoothly on the frame coefficients
  $\bme$ and $\mathbf{B}$ is a quadratic expression with constant
  coefficients of its arguments. 

\subsection{Initial data for the evolution equations}
Initial data for the conformal evolution equations
\eqref{CEE1}-\eqref{CEE4} can be constructed if a conformal metric
$\bmh$ and a conformal factor $\Omega$ are provided such that 
\[
r[\Omega^{-2} \bmh]=0
\]
so that the time symmetric Hamiltonian constraint is satisfied. In
particular, for the components $d_{\bmi\bmj}$ of the electric part of
the Weyl tensor (the magnetic part vanishes due to the time symmetry)
one has that
\begin{equation}
d_{\bmi\bmj} =\frac{1}{\Omega^2}\big( D_{\{\bmi} D_{\bmj\}} \Omega +
\Omega s_{\bmi\bmj}  \big)
\label{InitialData:ElectricPartWeyl}
\end{equation}
with $s_{\bmi\bmj}$ the components of the tracefree Ricci tensor of the metric
$\bmh$. Further details on the construction of initial data for the
conformal evolution equations can be found in \cite{CFEBook}, Section
11.4.3. 

\medskip
\noindent
\textbf{The Bach tensor.} The Bach tensor of the conformal metric
$\bmh$ will play key a role in the subsequent discussion. Given the
Schouten tensor $l_{ij}$ of $\bmh$, the \emph{Cotton tensor} is given by
\[
b_{kij} \equiv D_i l_{jk} -D_j l_{ik}.
\]
Its Hodge dual $b_{ij}$, is called the \emph{Bach
  tensor}, and is given by
\[
b_{ij} \equiv \frac{1}{2}\epsilon_j{}^{kl} b_{ikl},
\]
so that
\[
b_i{}^i=0, \qquad b_{ij}=b_{ji}, \qquad D^ib_{ij}=0.
\]
The spinorial counterpart of $b_{ij}$ is a totally symmetric
spinor $b_{ABCD}$ ---the \emph{Bach spinor}. Finally it is observed that under the rescaling
$\bmh \mapsto \bmh'=\phi^2 \bmh$ one has that
\[
b'_{ijk} = b_{ijk}, \qquad b'_{ij} =\phi^{-1}b_{ij}, \qquad b'_{ABCD}
= \phi^{-1} b_{ABCD}.
\]

\section{The cylinder at spatial infinity}
\label{Section:CylinderSpatialInfinity}
In this section we provide a brief discussion of the construction of
the cylinder at spatial infinity and the asymptotic expansions (\emph{F-expansions}) it
gives rise to. The
basic references on the construction of the cylinder at spatial infinity are
\cite{Fri98a,Fri04} ---see also Chapter 21 in \cite{CFEBook}.

\subsection{Boundary conditions}
In the present discussion we will restrict our attention to time symmetric initial
data sets for the Einstein field equations which are asymptotically
Euclidean and admit a point compactification yielding an analytic
conformal metric. More precisely, we assume that there exists a
3-dimensional Riemannian manifold $(\mathcal{S},\bmh)$ with a point
$i\in\mathcal{S}$ and a function $\Omega\in C^2$ such that
\begin{equation}
\Omega(i)=0, \qquad \mathbf{d}\Omega(i) =0, \qquad \mathbf{Hess}\,
\Omega(i) =-2 \bmh(i),
\label{BoundaryConditions}
\end{equation}
with $\Omega>0$ away from $i$ and $\bmh$ analytic at least in a
neighbourhood of $i$ ---cf. \cite{CFEBook}, Definition 11.2.

\begin{remark}
{\em In the following, our discussion will the restricted to a suitably
small neighbourhood of $i$ on $\mathcal{S}$ and the Cauchy development thereof.}
\end{remark}

 \begin{remark}
 {\em There is some conformal gauge freedom left in Conditions \eqref{BoundaryConditions}. A replacement of the form $\bmh\mapsto\vartheta^4\bmh$, $\Omega \mapsto \vartheta^2\Omega$ with
 $\vartheta(i)=1$ gives rise to the same physical metric $\tilde\bmh
   =\Omega^{-4}\bmh$. This gauge freedom has been used in \cite{Fri98a}
 to construct a \emph{conformal normal gauge} for which there exist
 coordinates $\underline{x} =(x^\alpha)$ with $x^\alpha(i)=0$ such
 that 
\begin{equation}
h_{\alpha\beta} =-\delta_{\alpha\beta} + O(|x|^3).
\label{CNGauge}
\end{equation}
In particular, the curvature of $\bmh$ vanishes at $i$.
 }
\end{remark}

In what follows, we make the following assumption:

\begin{assumption}
\label{Assumption:Initial data}
The metric $\bmh$ satisfies the
boundary conditions \eqref{BoundaryConditions} with a conformal
factor $\Omega\in C^2(\mathcal{S})\cap C^\infty(\mathcal{S}\setminus
\{ i\})$. Moreover, it is analytic in a
neighbourhood of $i$ and there exists coordinates
$\underline{x}=(x^\alpha)$ for which the components of $\bmh$ satisfy \eqref{CNGauge}.
\end{assumption}

\begin{remark}
{\em The assumption of analyticity has been made for convenience and
  easy reference with the analysis in \cite{Fri98a}. As pointed out in
  that reference, this assumption is not essential and for particular
  computations (like the ones considered here) a finite degree of
  differentiability suffices ---see also similar remarks in \cite{Fri13}.} 
\end{remark}

\medskip
Under Assumption \ref{Assumption:Initial data} it follows that the
conformal factor $\Omega$ admits, in a suitable neighbourhood
$\mathcal{B}_a(i)$ of $i$,
the parametrisation
\[
\Omega = \frac{U}{|x|} + W
\]
where $U$ and $W$ are analytic functions on $\mathcal{B}_a(i)$ 
\[
U=1 + O(|x|^4), \qquad W(i) =\frac{m}{2},
\]
where $m$ is the ADM mass of the initial data. The function $U/|x|$ is
the Green's function of the Yamabe equation implied by the time
symmetric Hamiltonian constraint and contains information about the
local geometry around $i$ while $W$ encodes global information ---in
particular the mass. The function $W$ can be expanded as
\[
W = \frac{m}{2}+ W_1 |x| + \frac{1}{2} W_2 |x|^2 +O(|x|^3),
\]
where $W_1$, $W_2$ are smooth functions of the \emph{angular coordinates}.

\medskip
\noindent
\textbf{Singular behaviour of the rescaled Weyl tensor.} A direct
computation combining the expansions in the previous paragraphs with
equation \eqref{InitialData:ElectricPartWeyl} leads to
\[
d_{\bmi\bmj}= -\frac{3m x_{\{\bmi}x_{\bmj\}}}{|x|^5} + O(|x|^{-2}),
\]
where $x_{\bmi} \equiv \delta_\bmi{}^\beta \delta_{\alpha\beta}
x^\alpha$ is the position vector centred at $i$. Consequently, one has 
 that the Weyl tensor blows up at $i$ as $|x|^{-3}$. 

\begin{remark}
{\em As pointed out in the introduction the above singularity is the
  main technical difficulty in formulating an initial value problem
  for the conformal Einstein field equations in a neighbourhood of
  spatial infinity.}
\end{remark}

\subsection{The manifold $\mathcal{C}_{a}$}
As it is well known, the standard approach to the discussion of the
conformal structure of asymptotically flat spacetimes makes use of
representations in which spatial infinity is a point $i^0$. By
contrast, the representation introduced in \cite{Fri98a} describes
spatial infinity in terms of an extended set ---the \emph{cylinder at
  spatial infinity}. Friedrich's representation makes use of a blow up
of the point at infinity, $i$, on the initial hypersurface
$\mathcal{S}$ to a 2-sphere. The construction makes use of a
particular bundle of spin-frames (\emph{bundle space}) over $\mathcal{B}_a(i)$. As we will
not require the full details of this construction, in the following we
provide a brief description in terms of (vector) frames.

\medskip
\noindent
\textbf{The bundle space.} Given an $\bmh$-orthonormal frame $\{
\bme_\bmi\}$ at $i$, any other
frame at $i$ can be obtained by means of a rotation ---that is, any
other frame is of the form $e_\bmi(s) =s^\bmj{}_\bmi \bme_\bmj$ with
$s=(s^\bmj{}_\bmi)\in SO(3)$. In particular $\bme_\bmthree(s)$ sweeps all
possible directions at $i$ as one lets $s$ exhaust $SO(3)$. For a
given value of $s$, one distinguishes $\bme_\bmthree(s)$ as the
radial vector at $i$. Keeping $s$ fixed, one then constructs the
$\bmh$-geodesic starting $i$ that has tangent vector $\bme_3(s)$ and
denote by $\rho$ the affine parameter along the curve that vanishes at
$i$. One then parallely propagates the rest of the frame $\{\bme_\bmi\}$
along this curve. For a particular value of the parameter $\rho$, let denote
by $\{\bme_\bmi(\rho,s)\}$ the frame thus obtained. The subsequent
discussion will be then restricted to a suitable small metric ball
$\mathcal{B}_a(i)$ on which this construction can be carried out.  Let
$SO(\mathcal{B}_a(i))$ denote the bundle of oriented orthonormal
 frames over $\mathcal{B}_a(i)$. It follows then that the map from the
 set $(-a,a)\times SO(3)$ into $SO(3)$ given by $(\rho,s)\mapsto \bme_\bmi
 (\rho,s)$ as described above defines a smooth embedding of a
4-dimensional manifold into $SO(\mathcal{B}_a(i))$. In the following,
only non-negative values of $\rho$ will be considered.

\medskip
\noindent
\textbf{The blow-up of $i$.} Denote by $\mathcal{C}_a$ the image of the set $[0,a)\times
SO(3)$ and define
\[
\mathcal{I}^0 \equiv \{ (\rho,s)\in \mathcal{C}_a \; | \; \rho=0 \}
\approx SO(3). 
\]
Finally, denote denote by $\pi$ the projection of
$SO(\mathcal{C}_a(i))$ onto $\mathcal{C}_a(i)$. One then restricts the
attention to the the subgroup $SO(2)$ of $SO(3)$ which leaves
$\bme_\bmthree$ invariant ---i.e. $SO(2) \equiv \{ s' \in SO(3)\;|\;
s^{\prime \bmj}{}_{\bmthree} \bme_\bmj =\bme_\bmthree
\}$. Accordingly, if $s\in SO(3)$ and $s'\in SO(2)$ then
$\bme_\bmi(s)$ and $\bme_\bmi (s s')$ are parellely transported along
the same geodesic. It then follows that $\pi (\bme_\bmi (\rho,s))= \pi
(\bme_\bmi (\rho, s s')),$ so that the projection factorises as 
\[
\mathcal{C}_a \stackrel{\pi'}{\longrightarrow} \mathcal{C}'_a \equiv
\mathcal{C}_a/SO(2) \stackrel{\pi''}{\longrightarrow} \mathcal{B}_a(i).
\]
It can be then verified that the projection $\pi''$ maps
$\pi'(\mathcal{I}^0)\approx \mathbb{S}^2$ onto $i$ and induces an
isomorphism between $\mathcal{C}'_a\setminus \pi'(\mathcal{I}^0)$ onto
the punctured ball $\mathcal{B}_a(i)\setminus \{ i\}$. This
diffeomorphism allows to identify these sets. Notwithstanding the
above factorisation, in \cite{Fri98a} the data for the conformal evolution
equations, e.g. formula \eqref{InitialData:ElectricPartWeyl}, on
$\mathcal{B}_a(i)$ is lifted to
$\mathcal{C}_a$  using $\pi$. In this manner $\mathcal{C}_a$ becomes
the initial manifold, $(\rho,s)$ are used as coordinates and the
boundary $\mathcal{I}^0$ becomes a blow-up of $i$. The manifold
$\mathcal{C}_a$ has one dimension more than $\mathcal{B}_a(i)$ ---this
has to do with the action of $SO(2)$. 

\medskip
\noindent
\textbf{A frame formalism.} It can be verified that all the
 fields appearing in the extended conformal Einstein field equations
 have a well-defined transformation behaviour (spin-weight) under such
 action. Thus, it is possible to define vector fields
 $\{\bmX,\,\bmc_\bmi(\rho,s)\}$ such that $\bmX$ is generated by the action
 of $SO(2)$ and the vector fields $\{ \bmc_\bmi(\rho,s)\}$ project to
 $\bme_\bma(\rho,s)$ via $\pi$. These vector fields allow the
 introduction of a frame formalism on $\mathcal{C}_a$ in which the
 frame is expanded by $\bmpartial_\rho$ and vectors $\bmX_\pm$ which
 are a basis of the Lie algebra $\mathfrak{s}\mathfrak{u}(2)$ ---the
 vectors $\bmX_\pm$ can be shown to be related to the $\eth$ and
 $\bar\eth$ operators, see \cite{FriKan00}. 

\medskip
\noindent
\textbf{Fixing the conformal gauge.} The initial data for the
conformal factor in the conformal Gaussian gauge system is set by
requiring
\[
\Theta_\star \equiv \kappa^{-1}\Omega,
\]
where the function $\kappa$ is chosen so that
\[
\kappa =\rho \kappa', \qquad \kappa'\in C^\infty(\mathcal{C}_a),
\qquad \kappa'>0, \qquad \bmX \kappa'=0,\qquad \kappa'|_{\mathcal{I}^0}=1.
\]
The function $\kappa$ induces a conformal rescaling of the frame
$\bme_\bma \mapsto \kappa\bme_a$. This rescaling maps bijectively the
set $\mathcal{C}_{a,\kappa}\setminus \mathcal{I}^0$ onto a smooth
submanifold of the bundle of frame fields over $\mathcal{B}_a$. This
submanifold is denoted by $\mathcal{C}_{a,\kappa}$. In order to
complete the construction of the conformal Gaussian system, one needs
to provide initial data for the 1-form $f_a$. This is done setting
\[
f_\bmzero =0, \qquad f_\bmi = \kappa^{-1} D_\bmi \kappa.
\]

\subsection{The manifold $\mathcal{M}_{a,\kappa}$}
\label{Section:ManifoldMa}
Following the previous discussion, the conformal evolution
  equations \eqref{CEE1}-\eqref{CEE4}  can be regarded as equations on
  the development of $\mathcal{C}_{a,\kappa}$. This development will
  be denoted as $\mathcal{M}_{a,\kappa}$ and is a 5-dimensional
  manifold embedded in the bundle of frame fields over the unphysical
  spacetime $\mathcal{M}$. The manifold $\mathcal{M}_{a,\kappa}$ is a
  $SO(2)$-bundle over the spacetime. The projection sending
  $\mathcal{M}_{a,\kappa}$ to $\mathcal{M}$ will be denoted, in a
  slight abuse of notation, by $\pi$. The coordinates $(\rho,s)$ and
  the vector fields $\bmX$, $\bmc_\bmi$ are extended from
  $\mathcal{C}_{a,\kappa}$ into $\mathcal{M}_{a,\kappa}$ by the flow
  of conformal geodesics ruling $\mathcal{M}_{a,\kappa}$ in such a way
  that the vectors do not pick up a component in the direction of the
  fibres ---i.e. in the direction of $\bmX$. Further, we make use of
  the parameter $\tau$ of the conformal geodesics as a further
  coordinate on $\mathcal{M}_{a,\kappa}$ ---that is, we set
  $x^0=\tau$. 

\medskip
From the conformal gauge choice it follows that the conformal factor
\eqref{ConformalFactorCG} takes the form
\[
\Theta = \Theta_\star \bigg( 1-
\frac{\kappa_\star^2}{\omega_\star^2}\tau^2  \bigg)\qquad \mbox{on}
\qquad \mathcal{M}_{a,\kappa}
\]
where 
\[
\omega \equiv \frac{2\Omega}{\sqrt{|D_i \Omega D^i \Omega|}}.
\]
The subscript ${}_\star$ indicates that the relevant functions are
constant along a given conformal geodesic. 

\begin{remark}
{\em A key property in the previous construction is that if the
  initial data for the conformal evolution equations has a smooth
  limit as $\rho\rightarrow 0$, then it can be smoothly extended into
  the coordinate range $\rho\leq 0$. Similarly for the fields $\Theta$
and $d_a$. Thus, the initial value problem for the conformal evolution
equations can be extended smoothly into a range where $\rho\leq 0$ in
such a way that the reduced equations are still symmetric hyperbolic.}
\end{remark}

Assuming that the development of the initial value problem for the
conformal evolution equations \eqref{CEE1}-\eqref{CEE4} extends far
enough then one can write
\begin{eqnarray*}
&& \mathcal{M}_{a,\kappa} = \left\{ |\tau|\leq \frac{\omega}{\kappa}, \;
  s\in SO(3)  \right\}, \\
&& \tilde{\mathcal{M}}_{a,\kappa} = \left\{ |\tau|< \frac{\omega}{\kappa}, \;
  s\in SO(3)  \right\}.
\end{eqnarray*}
In addition, it is natural to define
\begin{eqnarray*}
&& \mathscr{I}^\pm \equiv \left\{ \tau =\pm \frac{\omega}{\kappa}, \; s \in   SO(3)\right\}, \\
&& \mathcal{I} \equiv \left\{ |\tau|<1, \; \rho=0, \; s\in SO(3) \right\}
\end{eqnarray*}
 and
\begin{eqnarray*}
 && \mathcal{I}^\pm \equiv \left\{ \tau = \pm 1,\; \rho=0, \; s\in    SO(3) \right\},\\
&& \mathcal{I}^0 \equiv \left\{ \tau =0, \; \rho=0, \; s\in SO(3) \right\}.
\end{eqnarray*}

Observing that $\kappa/\rho \rightarrow 1$ as $\rho\rightarrow 0$ it
 can be readily checked that
 \begin{eqnarray*}
 && \Theta >0 \qquad \mbox{on} \quad \tilde{\mathcal{M}}_{a,\kappa}, \\
 && \Theta=0, \quad \mathbf{d}\Theta \neq 0 \qquad \mbox{on} \quad
     \mathscr{I}^-\cup \mathscr{I}^+\cup \mathcal{I}, \\
  && \Theta=0, \quad \mathbf{d}\Theta =0 \qquad \mbox{on} \quad \mathcal{I}^+\cup\mathcal{I}^-,
\end{eqnarray*}
 justifying the names of the various pieces of the conformal
 boundary. Moreover, the initial hypersurface corresponds to 
 \[
 \mathcal{C}_{a,\kappa}\equiv \left\{ \tau =0, \; 0\leq \rho<a, \;
   s\in SO(3)  \right\}, 
 \]
 so that its boundary is given by $\mathcal{I}^0$. 

\begin{remark}
{\em By assumption, the initial data extend smoothly, and in a unique
  manner to $\mathcal{I}^0$. The solution on
  $\mathcal{M}_{a,\kappa}$ depends only on the data given in
  $\mathcal{C}_{a,\kappa}$ as the set $\mathcal{I}$ is a total
  characteristic of the of the conformal evolution equations
  \eqref{CEE1}-\eqref{CEE4} so that the solution on $\mathcal{I}$
  depends only on the data on $\mathcal{I}^0$ ---that is, no boundary
  data can be prescribed. The set $\mathcal{I}$ is known as the
  \emph{cylinder at spatial infinity}. It can be regarded as a blow-up of $i^0$. Of
  particular relevance in the following discussion are the sets
  $\mathcal{I}^\pm$, the \emph{critical sets}, where the conformal
  evolution equations degenerate. Dealing with the consequences of this
degeneracy is the key issue of the \emph{problem of spatial
  infinity}. }
\end{remark}

\subsection{The solutions to the transport equations}
\label{Section:TransportEquations}
In \cite{Fri98a} it has been shown how the construction of the
cylinder at spatial infinity can be used to construct Taylor like
expansions of the solutions to the conformal evolution equations
\eqref{CEE1}-\eqref{CEE4} ---to be called \emph{F-expansions}. In order to do this, one differentiates the
conformal evolutions with respect to the radial direction $\rho$ an
arbitrary number of times, say $p$, and then evaluates at the
$\rho$. In what follows, for conciseness we restrict our attention to
the Bianchi equations \eqref{CEE4}. In particular, for the Bianchi
evolution equations \eqref{CEE4}, letting
\[
\bmphi^{[p]} \equiv \partial_\rho^p \bmphi\big|_{\rho=0}, \qquad p=0,\,1,\,2,\ldots,
\]
a calculation yields that 
\begin{eqnarray}
&& \big( \mathbf{I} + \mathbf{A}^0( \bme^{[0]})\big)\partial_\tau
\bmphi^{[p]} + \mathbf{A}^+(\bme^{[0]}) \bmX_+ \bmphi^{[p]} +
\mathbf{A}^-(\bme^{[0]}) \bmX_- \bmphi^{[p]} \nonumber\\
&& \hspace{1cm}= \mathbf{B}(\hat{\Gamma}^{[0]},\bmphi^{[p]}) 
+ \sum_{j=1}^p \binom{p}{j} \big(
\mathbf{B}(\hat{\Gamma}^{[j]},\bmphi^{[p-j]}) -
\mathbf{A}^+(\bme^{[j]}) \bmpartial_+ \bmphi^{[p-j]} -
   \mathbf{A}^-(\bme^{[j]}) \bmpartial_- \bmphi^{[p-j]}
   \big). \nonumber \\
&& \label{BianchiTransportEquation}
\end{eqnarray}
The above equation will be called the \emph{Bianchi transport equation
  of oder $p$}. Similar equations can be obtained for
\[
\bme^{[p]} \equiv \partial_\rho^p \bme\big|_{\rho=0}, \quad
\bmGamma^{[p]} \equiv \partial_\rho^p \bmGamma\big|_{\rho=0}, \quad \bmPhi^{[p]} \equiv \partial_\rho^p \bmPhi\big|_{\rho=0} \qquad p=0,\,1,\,2,\ldots.
\]

\begin{remark}
{\em Observe the absence of $\rho$-derivatives in equation
  \eqref{BianchiTransportEquation}. This is a consequence of the fact
  that $\mathcal{I}$ is a total characteristic of the conformal
  evolution equations \eqref{CEE1}-\eqref{CEE4}. A similar property is
  satisfied by the equations for $\bme^{[p]}$, $\bmGamma^{[p]}$,
  $\bmPhi^{[p]}$. Collecting all these transport equations for
  $p=0,\,1,\,2,\ldots p$ one obtains a hierarchy of equations which
  can be solved recursively starting from $p=0$. This is a process completely amenable to an
implementation in a computer algebra system ---see
\cite{Val04a,Val04d}.
Thus, equation
  \eqref{BianchiTransportEquation} constitutes a system of linear
  evolution equations intrinsic to $\mathcal{I}$ for $\bmphi^{[p]}$.}
\end{remark}

 If the lower terms 
\[
\bme^{[j]}, \qquad \bmGamma^{[j]}, \quad j=0,\,1,\ldots p
\]
and
\[
\bmphi^{[j]}, \quad  j=0,\,1,\ldots p-1,
\]
are known then one can use the restriction of the initial data at $\mathcal{I}$,
$\bmphi^{[p]}_\star =\bmphi^{[p]}(0,s)$ to solve for
$\bmphi^{[p]}=\bmphi^{[p]}(\tau,s)$. The result can be collected in an expansion
\[
\bmphi \simeq \sum^p_{j=0} \frac{1}{j!}\bmphi^{[j]} \rho^j, 
\]
where the symbol $\simeq$ has been used to indicate the formal character
of the expansion. The relation of these expansion to actual
solutions to the conformal Einstein field equations is a challenging
open problem.

\subsubsection{Solutions in the $\kappa=\omega$ gauge}
The solutions of the transport equations to
\eqref{BianchiTransportEquation} for the orders $p=0,\;1,\;2$ with
initial data satisfying Assumption \ref{Assumption:Initial data} have
been studied in \cite{Fri98a} in a conformal gauge for which
$\kappa=\rho$. Solutions for the orders $p=3,\; 4,\; 5$ have been
computed in \cite{Val04d}. The conformal gauge for which $\kappa=\rho$
was used in these references as it renders simpler expressions. In the
present article we are interested in analysing expansions near null
infinity. For this, a conformal gauge for which 
\[
\kappa=\omega
\]
 is
more convenient as the location of future null infinity is given by
the simple condition $\tau=1$. Calculations in this gauge have been
reported in  \cite{Val04d} ---see also \cite{Val07a}. In the rest of
this section
we provide a summary of the properties of these expansions which will
be required in the remainder of the article.

\begin{proposition}
\label{Proposition:ExpansionsCylinder}
For initial data satisfying Assumption \ref{Assumption:Initial data},
the solutions to the transport equation for the components of the
rescaled Weyl tensor, equation \eqref{BianchiTransportEquation},
satisfy:
\begin{itemize}
\item[(i)] At orders $p=0$ and $p=1$ the solutions have polynomial
  dependence in $\tau$.

\item[(ii)] At order $p=2$ the solutions have polynomial dependence in
  $\tau$ if and only if the condition
\[
b_{ij}(i) =0
\]
is satisfied by the Bach tensor of the conformal metric $\bmh$. If
$b_{ij}(i) \neq 0$ the solution develops logarithmic singularities at
$\tau=\pm 1$. 

\item[(iii)] At order $p=3$ the solutions have polynomial dependence if
  and only if
\[
b_{ij}(i) =0, \qquad D_{\{ i} b_{jk\}}(i)=0.
\]
If $D_{\{ i} b_{jk\}}(i)\neq0$ then the solution develops logarithmic singularities at
$\tau=\pm 1$.

\item[(iv)] At order $p=4$ the solutions have polynomial dependence if
  and only if
\[
b_{ij}(i) =0, \qquad D_{\{ i} b_{jk\}}(i)=0, \qquad D_{\{ i} D_j b_{kl\}}(i)=0. 
\]
If $D_{\{ i} D_j b_{kl\}}(i)\neq0$ then the solution develops logarithmic singularities at
$\tau=\pm 1$.

\item[(v)] At order $p=5$ if
\[
b_{ij}(i) =0, \qquad D_{\{ i} b_{jk\}}(i)=0, \qquad D_{\{i_2} D_{i_1}
b_{jk\}}(i)=0, \qquad D_{\{i_3} D_{i_2}D_{i_1}
b_{jk\}}(i)=0, 
\]
then the solution develops logarithmic singularities at
$\tau=\pm 1$.
\end{itemize}
These qualitative properties of the solutions to the transport
equations are independent of whether one uses the $\kappa=\rho$ or the
$\kappa=\omega$ gauge. 
\end{proposition}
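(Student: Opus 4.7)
The plan is to establish the proposition by induction on the order $p$ of the transport-equation hierarchy, tracking the precise source-term structure that either allows a polynomial solution or forces logarithms at the critical sets $\tau = \pm 1$. First, I would set up the recursion: combining the transport equation \eqref{BianchiTransportEquation} with the analogous equations for $\bme^{[p]}$, $\bmGamma^{[p]}$, $\bmPhi^{[p]}$, one gets for each $p$ a linear symmetric-hyperbolic system intrinsic to $\mathcal{I}$ with coefficients depending only on the lower-order quantities and with source terms built from those quantities and their angular derivatives. Thus the entire hierarchy is recursively solvable provided the initial data at $\mathcal{I}^0$ is known. Since the initial data at $\mathcal{I}^0$ is obtained by $\rho$-differentiation of \eqref{InitialData:ElectricPartWeyl} and the normal-coordinate expansion of $\Omega=U/|x|+W$, the content of the data at order $p$ is encoded in the $p$-th symmetric trace-free derivative of the Bach tensor at $i$, together with universal mass-type contributions from the lower orders.

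Second, I would expand the transport equations into spin-weighted modes on $SO(3)$ using the vector fields $\bmX_\pm$ associated with the $\eth,\bar\eth$ operators. The angular part diagonalises (mode by mode) and reduces each order to a collection of linear ODEs in $\tau$ alone whose principal part is the operator $\bigl(\mathbf{I}+\mathbf{A}^0(\bme^{[0]})\bigr)\partial_\tau$ evaluated on Minkowski background data. This operator has regular singular points precisely at $\tau=\pm1$, which are exactly the critical sets $\mathcal{I}^\pm$ where the conformal evolution equations degenerate. The associated indicial exponents at these endpoints are in general integers differing by an integer, so the standard theory of Fuchsian ODEs says that polynomial solutions exist if and only if certain resonance/obstruction integrals on the source vanish; otherwise a $\ln(1\mp\tau)$ term is forced.

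Third, I would identify these obstruction integrals explicitly. The $(p=0,1)$ cases do not involve the Bach tensor at all (by inspection of the normal-coordinate expansion, the Bach tensor first contributes at the order where three derivatives of the Schouten tensor appear), and one verifies by direct computation that the source terms lie in the image of the principal operator, giving statement (i). At orders $p\geq 2$ the source picks up a contribution proportional to $D_{\{i_{p-2}}\cdots D_{i_1}b_{jk\}}(i)$ together with terms purely determined by previously resolved orders. The inductive hypothesis ensures those previously resolved orders are polynomial (under the stated regularity conditions), so the only possible logarithmic obstruction at order $p$ is the $\bmh$-trace-free symmetric derivative of $b_{jk}$ of appropriate rank; matching this contribution against the indicial obstruction gives the if-and-only-if statements (ii)--(iv) and the one-way implication (v). The last item (v) is only stated as sufficiency because at order $p=5$ the nonlinear obstructions of \cite{Val04a,Val04d} can also arise, and those are not of Bach type.

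The main obstacle will be item (iii)--(v): carrying the induction through requires showing that when the lower-order Bach conditions are imposed, all candidate log-generating terms in the source at order $p$ collapse to the single symmetrised trace-free derivative of $b_{jk}$, and that the coefficient in front is nonzero. In practice this means checking that the nonlinear cross terms coming from $\mathbf{B}(\hat\Gamma^{[j]},\bmphi^{[p-j]})$ and $\mathbf{A}^\pm(\bme^{[j]})\bmpartial_\pm \bmphi^{[p-j]}$ for $1\le j\le p-1$ produce no new obstruction when expressed in the $\{\eth,\bar\eth\}$ decomposition. Finally, independence from the choice $\kappa=\rho$ versus $\kappa=\omega$ follows because the two gauges are related by a smooth, nowhere-vanishing conformal rescaling regular at $\mathcal{I}$; this rescaling preserves both the polynomial-in-$\tau$ class and the presence of $\ln(1\mp\tau)$ singularities at the critical sets, so the qualitative statement is gauge independent.
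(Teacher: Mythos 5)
Your overall strategy --- decompose in spin-weighted harmonics via $\bmX_\pm$, reduce each order of the hierarchy to Jacobi-type ODEs in $\tau$ with regular singular points at $\tau=\pm1$, and characterise the appearance of $\ln(1\mp\tau)$ through obstructions on the source --- is indeed the route taken in the references \cite{Fri98a,Val04d} on which Proposition \ref{Proposition:ExpansionsCylinder} rests; the paper itself does not reprove the proposition but cites those computations and exhibits the explicit $p\le 2$ solutions (equations \eqref{Logarithms1}--\eqref{Logarithms4} and Appendix \ref{Appendix:Expansions}), in which the logarithms enter multiplied precisely by the components $\mathfrak{b}_k$ of the Bach spinor. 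So your outline points in the right direction.

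There is, however, a genuine gap: the decisive content of the proposition is exactly what your proposal defers. The ``if and only if'' in items (ii)--(iv) requires showing both that the coefficient multiplying $D_{\{i_{p-2}}\cdots D_{i_1}b_{jk\}}(i)$ in the logarithmic part is nonzero and that no other log-generating source terms survive once the lower-order conditions are imposed; neither follows from the Fuchsian structure alone, and both were established in the literature only by explicit (computer-algebra) evaluation of the sources order by order. More seriously, the inductive principle you propose --- that the only possible logarithmic obstruction at order $p$ is the symmetrised trace-free derivative of $b_{jk}$ --- is contradicted by item (v) itself, where logarithms appear \emph{despite} all Bach conditions holding. You acknowledge the nonlinear obstructions of \cite{Val04a,Val04d} in a final aside, but your induction as formulated would predict polynomial solutions at $p=5$ and hence cannot deliver item (v); a complete argument must identify where in the hierarchy the non-Bach obstructions first enter and why they are absent for $p\le 4$, which is again a computation rather than a structural fact. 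Your gauge-independence remark is fine in spirit, though you should note that passing from $\kappa=\rho$ to $\kappa=\omega$ reshuffles the hierarchy (order-$p$ coefficients in one gauge mix orders $\le p$ of the other), which preserves the polynomial/logarithmic dichotomy but not the individual coefficients.
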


Detailed expressions for the independent components of the
spinor $\phi_{ABCD}$ up to order $p=2$ in the gauge $\kappa=\omega$ are
given in Appendix \ref{Appendix:Expansions}. We notice, however, that
\begin{subequations}
\begin{eqnarray}
&& \phi^{[2]}_0 = \mathfrak{b}_0 (1+\tau)^4 \bigg( \ln(1-\tau) -\ln(1+\tau)
   \bigg) + \breve{\phi}^{[2]}_0 , \label{Logarithms1}\\
&& \phi^{[2]}_1 = \mathfrak{b}_1 (1-\tau)(1+\tau)^3 \bigg(  \ln(1-\tau) -
   \ln(1+\tau) \bigg) + \breve{\phi}^{[2]}_1, \\
&& \phi^{[2]}_2= \mathfrak{b}_2 (1-\tau)^2(1+\tau)^2 \bigg( \ln(1-\tau) -
   \ln(1+\tau)\bigg) + \breve{\phi}^{[2]}_2,\\
&& \phi^{[2]}_3 = \mathfrak{b}_3 (1-\tau)^3(1+\tau) \bigg( \ln(1-\tau) -
   \ln(1+\tau) \bigg) + \breve{\phi}^{[2]}_3, \\
&& \phi^{[2]}_4 = \mathfrak{b}_4 (1-\tau)^4 \bigg(\ln(1-\tau) -
   \ln(1+\tau))\bigg) + \breve{\phi}^{[2]}_4, \label{Logarithms4}
\end{eqnarray}
\end{subequations}
where 
\[
\mathfrak{b}_k(i) \equiv b_{(\bmA\bmB\bmC\bmD)_k}(i)
\]
denotes the independent components of the Bach spinor (i.e. spinorial
counterpart of the 3-dimensional tensor $b_{ij}$) evaluated at the
point at infinity and $\breve{\phi^{[2]}}_0,\ldots,\;
\breve{\phi^{[2]}}_4$ denote expressions with polynomial dependence
(hence smooth) on $\tau$.  

\begin{remark}
{\em Observe that the coefficient $\phi_0^{[2]}$ is the most singular one
  at $\tau=1$  (i.e. $\mathscr{I}^+$), not being even continuous,
  while $\phi_4^{[2]}$ is the most regular having 4 derivatives at
  $\tau=1$. The role at $\tau=-1$ is reversed. }
\end{remark}

\begin{remark}
{\em That the structure of the logarithmic singularities in \eqref{Logarithms1}-\eqref{Logarithms4} suggest
  that, generically, the development of initial data satisfying
  Assumption \ref{Assumption:Initial data} does not admit a $C^2$
  conformal extension. This observation is likely to hold also for
  more general classes of initial data.}
\end{remark}

\medskip
In order to be able to relate the formal F-gauge expansions discussed
in the previous paragraphs we needs to make the following assumption:

\begin{assumption}
\label{Assumption:FormalExpansionsAreNotFormal}
The formal F-gauge expansions correspond to the leading orders of an
actual solution to the extended conformal Einstein field equations. In
particular, for the components of the rescaled Weyl tensor, one has that  
\[
\bmphi = \sum_{j=0}^2 \frac{1}{j!}\bmphi^{[j]} \rho^j + {\bm R}_3
\]
where the remainder satisfies ${\bm R}_3 \in C^\infty (\tilde{\mathcal{M}}_{a,\kappa})\cap
C^0(\mathcal{M}_{a,\kappa})$ and ${\bm R}_3=O(\rho^3)$. Similarly, for 
the components of the frame one has
\[
\bme = \sum_{j=0}^2 \frac{1}{j!}\bme^{[j]} \rho^j + {\bm R}_3.
\]
\end{assumption}
% \mnotex{JAVK (17.5.2017): one needs also assumptions on the expansions
% of the frame. Include these!}

\begin{remark}
{\em As already mentioned in the introduction, controlling the residue
of the F-expansions by means of estimates obtained from the conformal
evolution equations is the major outstanding issue in the so-called
\emph{problem of spatial infinity}. }
\end{remark}

\section{The NP gauge}
\label{Section:NPGauge}
While the F-gauge expressions discussed in the previous section
provide a great deal of information about the singular behaviour of
the Weyl tensor at the conformal boundary of the development of the class of initial data of
Assumption \ref{Assumption:Initial data}, they are given in a gauge
which does not directly lead to assertions about the peeling (or lack
thereof) behaviour of the spacetime. For this, one has to transform
into a gauge hinged on (future) null infinity ---the so-called
Newman-Penrose (NP) gauge. 

\medskip
In the following we use the notation $\simeq$ to denote equality at
$\mathscr{I}^+$. 

\subsection{Construction of the gauge}
In what follows consider a conformal extension $(\mathcal{M},\bmg',\Xi)$ of a \emph{suitably}
asymptotically spacetime satisfying the vacuum Einstein equations
containing at least a piece of future null infinity. A frame
$\{\bme'_{\bmA\bmA'}\}$ satisfying
$\bmg'(\bme'_{\bmA\bmA'},\bme'_{\bmB\bmB'})=\epsilon_{\bmA\bmB}\epsilon_{\bmA'\bmB'}$
defined in a neighbourhood $\mathcal{U}$ of $\mathscr{I}^+$ is said to
\emph{be adapted to} $\mathscr{I}^+$ if:
\begin{itemize}
\item[(i)] The vector $\bme'_{\bm1\bm1'}$ is tangent to
  $\mathscr{I}^+$ and is parallelly propagated along its generators.

\item[(ii)] On $\mathcal{U}$ there exists a function $u$ (a
  \emph{retarded time}) which is an affine parameter of the generators
  of $\mathscr{I}^+$ satisfying $\bme'_{\bm1\bm1'}(u)\simeq 1$. The
  function $u$ is propagated off the conformal boundary by requiring
  it to be constant on null hypersurfaces transverse to $\mathscr{I}^+$
  and satisfies $\bme'_{\bm0\bm0'}=\bmg^{\prime \sharp}(\mathbf{d}u,\cdot)$
  ---thus, $\bme'_{\bm0\bm0'}$ is tangent to the hypersurfaces
  $\mathscr{N}_{u_\bullet}$ defined by the condition $u=u_\bullet$
  where $u_\bullet$ is a constant.

\item[(iii)] The fields are tangent to the cuts
  $\mathcal{C}_{u_\bullet} \equiv \mathscr{N}_{u_\bullet} \cap
  \mathscr{I}^+$ and parallely propagated along the direction of $\bme'_{\bm0\bm0'}$.
\end{itemize}

The conditions (i)-(iii) imply restrictions on the form of several
components of the spin connection coefficients. Exploiting the
conformal freedom in the choice of the conformal factor $\Xi$, the
frame can be refined even further to fix the values of certain
components of the tracefree Ricci tensor of the metric $\bmg'$. More
precisely, one has the following:

\begin{proposition}
Given a suitably asymptotically simple spacetime, locally, it is always
possible to find a conformal extension $(\mathcal{M}',\bmg',\Xi)$ for
which
\[
R[\bmg']\simeq 0,
\]
and an adapted frame $\{ \bme'_{\bmA\bmA'}\}$ such that the associated
spin connection coefficients $\Gamma'_{\bmA\bmA'\bmB\bmC}$ satisfy 
\begin{eqnarray*}
&\Gamma'_{\bm0\bm0'\bmB\bmC}\simeq 0, \qquad
  \Gamma'_{\bm1\bm1'\bmB\bmC}\simeq 0,&\\
&\Gamma'_{\bm0\bm1'\bm1\bm1}\simeq 0, \qquad
  \Gamma'_{\bm1\bm0'\bm0\bm0}\simeq0, \qquad
  \Gamma'_{\bm1\bm0\bm1\bm1}\simeq 0,&\\
&\bar{\Gamma}'_{\bm1'\bm0\bm0'\bm1'}+ \Gamma_{\bm0\bm1'\bm0\bm1}\simeq
                                      0.&
\end{eqnarray*}
In addition, one has that
\[
\Phi'_{\bm12}\simeq 0, \qquad \Phi'_{22}\simeq 0,
\]
and $\bme'_{\bm0\bm0'}(\Xi)$ is constant on $\mathscr{I}^+$. 
\end{proposition}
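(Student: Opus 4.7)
The plan is to construct the NP gauge in three successive stages, each fixing one of the natural freedoms of the problem: the conformal rescaling $\Xi\mapsto \omega\Xi$ (which acts on the metric as $\bmg'\mapsto \omega^2\bmg'$), the choice of the retarded time function $u$ and initial cut $\mathcal{C}_{u_\star}\subset\mathscr{I}^+$, and the rotational freedom of the spin dyad. The argument follows the classical NP construction (as in Penrose \& Rindler Vol.~2 or Stewart); the content of the statement is that the listed conditions are mutually compatible and can be enforced simultaneously on a neighbourhood of $\mathscr{I}^+$.

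First I would observe that a large part of the spin-coefficient identities follow immediately from the structural conditions (i)--(iii) of the adapted frame. Condition (iii) asserts parallel propagation of the full dyad along $\bme'_{\bmzero\bmzero'}$, which by definition gives $\Gamma'_{\bmzero\bmzero'\bmB\bmC}\simeq 0$; together with condition (i) this yields $\Gamma'_{\bmone\bmone'\bmB\bmC}\simeq 0$ on $\mathscr{I}^+$ after using the parallel propagation of the remaining frame vectors along the cuts. The gradient condition in (ii), $\bme'_{\bmzero\bmzero'}=\bmg^{\prime\sharp}(\mathbf{d}u,\cdot)$, implies that the outgoing null vector is geodesic and hypersurface-orthogonal; this is the geometric content of the reality relation $\bar{\Gamma}'_{\bmone'\bmzero\bmzero'\bmone'}+\Gamma'_{\bmzero\bmone'\bmzero\bmone}\simeq 0$, and it also gives $\Gamma'_{\bmone\bmzero\bmone\bmone}\simeq 0$.

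Next I would exploit the conformal freedom to impose $R[\bmg']\simeq 0$. Under $\bmg'\mapsto \omega^2\bmg'$ the scalar curvature transforms as
\[
R[\omega^2\bmg']=\omega^{-2}R[\bmg']-6\omega^{-3}\Box_{\bmg'}\omega,
\]
so the condition $R\simeq 0$ reduces to a linear equation for $\omega$; restricted to $\mathscr{I}^+$ it becomes an intrinsic transport equation along the generators whose characteristic data is a free function on the initial cut. The residual freedom in $\omega$ along the generators is then used to make $\bme'_{\bmzero\bmzero'}(\Xi)$ constant on $\mathscr{I}^+$. The remaining spin-coefficient vanishings $\Gamma'_{\bmzero\bmone'\bmone\bmone}\simeq 0$ and $\Gamma'_{\bmone\bmzero'\bmzero\bmzero}\simeq 0$ are secured using the rotational freedom of the dyad at the initial cut, together with the NP field equations, which propagate these vanishings along $\mathscr{I}^+$.

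Finally, the conditions $\Phi'_{\bm12}\simeq 0$ and $\Phi'_{\bm2\bm2}\simeq 0$ follow from the vacuum Einstein equation $\tilde R_{ab}=0$ expressed in terms of $\bmg'$ and $\Xi$: on $\mathscr{I}^+$ the Schouten tensor of $\bmg'$ is determined purely by derivatives of $\Xi$, and the combination of the adapted-frame conditions with $R\simeq 0$ and the constancy of $\bme'_{\bmzero\bmzero'}(\Xi)$ annihilates precisely these two components. The main obstacle is bookkeeping the compatibility between the stages: each conformal rescaling modifies the connection coefficients and the notion of parallel transport, so one must check that solving the intrinsic equation for $\omega$ does not spoil the vanishings obtained from the adapted-frame conditions. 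This compatibility is essentially an integrability statement which, in this setting, reduces to the Bianchi identities restricted to $\mathscr{I}^+$.
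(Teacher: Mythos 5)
The paper does not actually prove this proposition: it is stated as a summary of the NP-gauge construction of Friedrich \& K\'ann\'ar \cite{FriKan00}, to which the reader is explicitly referred. Your three-stage skeleton (conformal rescaling, choice of $u$ and of the cuts, residual dyad freedom, followed by a compatibility check) is indeed the shape of that construction, so the overall route is the right one.

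However, two of your specific attributions would fail if carried out. First, $\Gamma'_{\bm1\bm0'\bm0\bm0}$ and $\Gamma'_{\bm0\bm1'\bm1\bm1}$ are the NP coefficients $\rho$ and $\mu$, the divergences of the outgoing congruence and of the generators of $\mathscr{I}^+$. These are invariant under the residual phase rotation of the dyad (and merely rescale under boosts), and by the stage you invoke "the rotational freedom of the dyad at the initial cut" the null directions $\bme'_{\bm0\bm0'}$ and $\bme'_{\bm1\bm1'}$ are already fixed, so no null-rotation freedom remains either; no rotation can kill a nonzero $\rho$ or $\mu$, not even on the initial cut. In the actual construction these vanishings come from the conformal freedom ---one chooses the rescaling so that, in addition to $R[\bmg']\simeq 0$, the Hessian of $\Xi$ vanishes on $\mathscr{I}^+$--- combined with the conformal vacuum equations; the same choice is what delivers $\Phi'_{\bm12}\simeq 0$ and $\Phi'_{22}\simeq 0$. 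Second, $\Gamma'_{\bm1\bm0\bm1\bm1}$ (the coefficient $\lambda$, reading the first index pair as $\bm1\bm0'$) involves derivatives of $\bme'_{\bm1\bm1'}$, not of $\bme'_{\bm0\bm0'}$, so it cannot be a consequence of $\bme'_{\bm0\bm0'}$ being a gradient; that condition only controls $\kappa$, the twist of the outgoing congruence and the real part of $\varepsilon$. A smaller but related gap: conditions (i) and (iii) alone do not give all of $\Gamma'_{\bm1\bm1'\bmB\bmC}\simeq 0$, since $\tau\simeq 0$ and $\mathrm{Im}\,\gamma\simeq 0$ require the whole frame ---not just $\bme'_{\bm1\bm1'}$--- to be parallelly propagated along the generators, which is an additional choice rather than a consequence. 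These are not cosmetic points: the bookkeeping of which residual freedom removes which coefficient is precisely the content of the proposition, and it is where the compatibility issue you defer to the Bianchi identities actually has to be resolved.
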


The construction summarised in the previous proposition is
supplemented by adapted coordinates. On a fiduciary cut
$\mathscr{C}_\star\approx \mathbb{S}^2$ one chooses some coordinates
$\theta =(\theta^{\mathcal{A}})$, $\mathcal{A}=2,\,3$ and extends them
along $\mathscr{I}^+$ by requiring them to be constant along the
null generators. On the hypersurfaces $\mathscr{N}_u$ transverse to
$\mathscr{I}^+$ one identifies an affine parameter $r$ of the null
generators of these hypersurfaces in such a way that
$\bme'_{\bm0\bm0'}(r)=1$ and $r\simeq 0$. Finally, the coordinates
$\theta=(\theta^{\mathcal{A}})$ are propagated off $\mathscr{I}^+$ 
so that they are constant along the generators of
$\mathscr{N}_u$. From this construction one obtains \emph{unphysical
  Bondi coordinates} $x=(u,r,\theta^{\mathcal{A}})$ in a neighbourhood
of $\mathscr{I}^+$. \emph{Physical Bondi coordinates} are obtained
with the inversion $\tilde{r}=1/r$.

\subsection{Basic transformation formulae}
The relation between the F-gauge and the NP-gauge has been analysed in
\cite{FriKan00} ---the reader is referred to this article for full
details on this construction. The translation from the F-gauge to the
NP gauge involves:
\begin{itemize}
\item[(i)] a conformal rescaling of the metric,
\begin{equation}\label{FtoNPmetric}
 \bmg'=\varkappa^2\bmg
\end{equation}
where $\varkappa$ is a suitable conformal factor;

\item[(ii)]  a Lorentz transformation and rescaling of the frame $\{\bme_{\bmA\bmA'}\}$ of
  the form 
\begin{equation}
\label{FFrameToNPFrame}
\bme'_{\bmA\bmA'}=\varkappa^{-1}\Lambda^{\bmB}{}_{\bmA}
\bar{\Lambda}^{\bmB'}{}_{\bmA'}\bme_{\bmB \bmB'},
\end{equation}
where $(\Lambda^{\bmB}{}_{\bmA}) \in SL(2,\mathbb{C})$;

\item[(iii)] a change of coordinates
\[
u = u(\tau,\rho,s), \qquad r = r(\tau,\rho,s)
\]
where $u$ is a suitable retarded time on $\mathscr{I}^+$ and $r$ is an
affine parameter the generators of outgoing light cones such that $r|_{\mathscr{I}^+}=0$. 
\end{itemize}

As a consequence of relation \eqref{FFrameToNPFrame}, the spin dyads
$\{ \epsilon_{\bmA}{}^{A}\}$ and
$\{\epsilon^{\prime}{}_{\bmA}{}^{A}\}$, associated to the frames 
 $\{ \bme_{\bmA\bmA'}\}$ and $\{\bme'_{\bmA \bmA'}\}$, respectively, are related to
 each other via
\begin{equation}
\label{RelationSpinDyadsEqs}
\epsilon^{\prime}{}_{\bmA}{}^{A}=\varkappa^{-1/2}\Lambda^{\bmB}{}_{\bmA}\epsilon_{\bmB}{}^{A}.
\end{equation}

From the discussion in \cite{FriKan00} it readily follows that:

\begin{proposition}
\label{Proposition:LorentzExpanded}
For the development of initial data satisfying Assumption
\ref{Assumption:Initial data} and under Assumption
\ref{Assumption:FormalExpansionsAreNotFormal}  it follows that: 
 \begin{subequations}
\begin{eqnarray}
&& \Lambda^{\bm0}{}_{\bm1}=\rho^{1/2} + O\big(\rho^{7/2},(1-\tau)\big), \label{ExpansionLambdasAndConformalFactor1}
  \\ 
&& \Lambda^{\bm1}{}_{\bm1}= O\big(\rho^{5/2},
   (1-\tau)\big), \label{ExpansionLambdasAndConformalFactor2}\\ 
&&
  \Lambda^{\bm0}{}_{\bm0}=O\big(\rho^{3/2},(1-\tau)\big)
 \label{ExpansionLambdasAndConformalFactor3}\\ 
&&  \Lambda^{\bm1}{}_{\bm0}=-\rho^{-1/2}+ O\big(\rho^{3/2},(1-\tau)\big),
\end{eqnarray}
% \begin{eqnarray}
% && \Lambda^{\bm0}{}_{\bm1}=\rho^{1/2}\Big(1-\frac{3}{4}m\rho +
%   \bigg(\frac{15}{32}m^2-2W_{1}\bigg)\rho^2 +
%   O(\rho^3)\Big), \label{ExpansionLambdasAndConformalFactor1}
%   \\ && \Lambda^{\bm1}{}_{\bm1}=\rho^{5/2} \Big(\frac{1}{2}X_{+}W_{1}
%   + O(\rho)
%   \Big), \label{ExpansionLambdasAndConformalFactor2}\\ &&
%   \Lambda^{\bm0}{}_{\bm0}=\rho^{1/2}\Big(
%   -\frac{101}{10}X_{-}W_{1}\rho +
%   O(\rho^2)\Big), \label{ExpansionLambdasAndConformalFactor3}\\ &&
%   \Lambda^{\bm1}{}_{\bm0}=\rho^{-1/2}\Big(-1-\frac{3}{4}m\rho +O(\rho^2) \Big) \\ 
% \end{eqnarray}
\end{subequations}
and
\[
\varkappa= 1+ O\big(\rho,(1-\tau)\big).
\]
Moreover, one has that
\begin{eqnarray*}
&& u = \sqrt{2}\left( -\frac{1}{\rho} + 4m \ln \rho + u_\star\right) +
   O\big( \rho,(1-\tau) \big), \\
&& r = (1-\tau) + O\big(\rho,(1-\tau)^2\big).
\end{eqnarray*}
% \begin{eqnarray*}
% && u = \sqrt{2}\left( -\frac{1}{\rho} + 4m \ln \rho + u_\star +\bigg(
%    \frac{195}{28}m^2 + \frac{74}{5}W_1
%    \bigg)\rho + O(\rho^2)\right), \\
% && r = (1-\tau) + O\big(\rho,(1-\tau)^2\big).
% \end{eqnarray*}
\end{proposition}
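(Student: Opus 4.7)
The plan is to derive each of the expansions by specialising the general F-to-NP transformation formulas developed in \cite{FriKan00} to the class of initial data satisfying Assumptions \ref{Assumption:Initial data} and \ref{Assumption:FormalExpansionsAreNotFormal}, and then reading off the orders in the double expansion parameters $\rho$ and $(1-\tau)$. The starting point is that at the cylinder $\mathcal{I}$ and at $\mathscr{I}^{+}$ every conformal field appearing in the extended conformal field equations admits a known leading expression in the F-gauge (conformal factor $\Theta=\Theta_\star(1-\tau^2)$ with $\kappa=\omega$, frame $\bme^{[0]}$ etc.), so the task reduces to checking that the NP-gauge conditions listed in the proposition of the previous subsection uniquely pin down $\varkappa$, $\Lambda^{\bmB}{}_{\bmA}$ and the coordinates to the stated orders.

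First, I would determine $\Lambda^{\bmA}{}_{\bmB}$ at leading order by imposing the requirements that, at $\tau=1$, $\bme'_{\bm1\bm1'}$ be tangent to $\mathscr{I}^{+}$ and parallelly propagated along its generators. Since in the F-gauge near $\mathscr{I}^{+}$ the vectors $\bme_{\bm0\bm0'}$ and $\bme_{\bm1\bm1'}$ of the Weyl-propagated frame are \emph{not} adapted to null infinity, a boost and spatial rotation are needed. Using relation \eqref{FFrameToNPFrame} together with the known leading F-frame coefficients (in particular the asymptotic null structure of $\bme^{[0]}_{\bmA\bmA'}$ on $\mathcal{I}$) one obtains the boost weight $\rho^{\pm 1/2}$ in $\Lambda^{\bm 0}{}_{\bm 1}$ and $\Lambda^{\bm 1}{}_{\bm 0}$. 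The off-diagonal entries $\Lambda^{\bm 1}{}_{\bm 1}$ and $\Lambda^{\bm 0}{}_{\bm 0}$ are forced to vanish on $\mathscr{I}^{+}\cap \mathcal{I}$ by the parallel propagation conditions of the NP frame, so that they pick up only contributions of higher order in $\rho$ and in $(1-\tau)$; tracking these contributions from the next-to-leading F-frame expansion (available via Assumption \ref{Assumption:FormalExpansionsAreNotFormal}) gives the orders $\rho^{5/2}$ and $\rho^{3/2}$ respectively.

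Second, the rescaling factor $\varkappa$ in \eqref{FtoNPmetric} is fixed by the NP gauge requirements $R[\bmg']\simeq 0$, $\Phi'_{\bm1\bm2}\simeq 0$, $\Phi'_{\bm2\bm2}\simeq 0$ together with the condition that $\bme'_{\bm0\bm0'}(\Xi)$ be constant on $\mathscr{I}^+$. These conditions are an overdetermined but compatible system on $\mathscr{I}^{+}$ whose solution normalises $\varkappa$ to $1$ at the intersection of $\mathscr{I}^{+}$ with $\mathcal{I}$, yielding $\varkappa=1+O(\rho,(1-\tau))$. The coordinate change is obtained by fixing a retarded time $u$ on $\mathscr{I}^{+}$ as an affine parameter of the generators of null infinity for the rescaled frame and then integrating $\bme'_{\bm 0\bm 0'}$ off $\mathscr{I}^{+}$ to define $r$. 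The divergent term $-\sqrt{2}/\rho$ in $u$ is forced by the fact that $\rho\to 0$ corresponds to $i^{0}$, where $u\to -\infty$; the coefficient follows from the normalisation of the null generators induced by \eqref{FtoNPmetric}. The $4m\ln\rho$ correction appears because the ADM mass contributes a Schouten-tensor term in the transport equations along $\mathscr{I}^{+}$, producing the familiar logarithmic lag between the affine parameter of the rescaled metric and the F-parameter $\rho$; this is precisely the computation carried out in \cite{FriKan00} using $W(i)=m/2$.

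The hardest part is bookkeeping: one must simultaneously control the expansions in $\rho$ (the radial F-coordinate) and in $(1-\tau)$ (the distance to $\mathscr{I}^{+}$), because the NP conditions are imposed at $\tau=1$ while the output must be valid in a whole neighbourhood of $\mathcal{I}^{+}$. The delicate point is that the remainder estimates in Assumption \ref{Assumption:FormalExpansionsAreNotFormal} only give $C^{0}$ control at the cylinder, so one must verify that the order-$\rho^{2}$ F-expansions used as input (those of Proposition \ref{Proposition:ExpansionsCylinder} and Appendix \ref{Appendix:Expansions}) are enough to fix the stated orders of $\Lambda^{\bmA}{}_{\bmB}$, $\varkappa$ and $(u,r)$, and that the logarithmic singularities in $\phi^{[2]}_{k}$ do not contaminate the leading terms computed here. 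Once this is checked, reading off the orders from the explicit formulas in \cite{FriKan00} completes the proof.
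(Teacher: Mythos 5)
Your proposal is correct and follows essentially the same route as the paper: the paper offers no independent argument but simply reads the stated expansions off the F-gauge to NP-gauge transformation analysis of Friedrich \& K\'ann\'ar \cite{FriKan00}, specialised to the $\kappa=\omega$ gauge and the time symmetric data of Assumption \ref{Assumption:Initial data} --- which is precisely the computation you outline (adapted-frame conditions fixing the boost weights $\rho^{\pm1/2}$, the conformal gauge conditions fixing $\varkappa$, and the mass term producing the $4m\ln\rho$ lag in $u$). Your closing remark about verifying that the order-$\rho^2$ logarithmic terms do not contaminate the leading orders is a point the paper leaves implicit, so it is a welcome addition rather than a deviation.
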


\begin{remark}
{\em Observe that $\varkappa \neq 0$ on $\mathscr{I}^+$, thus, the NP
  gauge used in the previous discussion is expressed in the unphysical
spacetime.}
\end{remark}

\begin{figure}[t]
\centering
\includegraphics[width=0.9\textwidth]{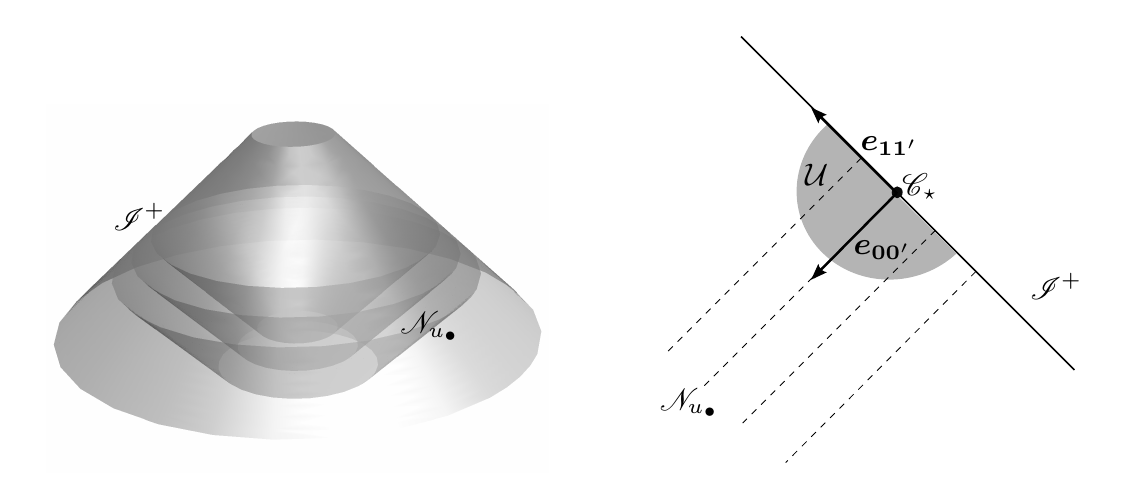}
\caption{Schematic depiction of the NP gauge. Left: a foliation of outgoing
  null hypersurfaces near future null infinity. Right: the vectors of
  the adapted frame.}
\label{Figure:NPGauge}
\end{figure}

\subsection{Transformation formulae for the rescaled Weyl spinor}
\label{Section:TransformationWeyl}

Consistent with equation \eqref{FtoNPmetric} one has that, 
the rescaled Weyl spinor associated with the $\bmg'$ and $\bmg$
representation denoted respectively 
 as $\phi_{ABCD}'$  and $\phi_{ABCD}$, are related to each
other via
\[
\phi'_{ABCD}=\varkappa^{-1} \phi_{ABCD}.
\]
Transvecting this last equation
with $\epsilon^{\prime}_{\bmA}{}^{A}$ and using equation
\eqref{RelationSpinDyadsEqs} one concludes that the relation between
the components of the rescaled Weyl spinors, respect to the
NP and F-frames, is given by
\begin{equation} 
\label{CompsNPWeylInTermsOfCompsFrescaledWeylGeneral}
\phi'_{\bmA \bmB \bmC \bmD} =\varkappa^{-3} \Lambda^{\bmF}{}_\bmA\Lambda^{\bmH}{}_\bmB\Lambda^{\bmP}{}_\bmC\Lambda^{\bmQ}{}_\bmD\phi_{\bmF \bmH \bmP \bmQ}.
\end{equation}
The components of the
rescaled Weyl spinor $\phi_{ABCD}$ in the F-gauge will be expressed
through contractions with the spin dyad $\epsilon_{\bmA}{}^{A}$. Following the standard conventions of \cite{PenRin86,Ste91} one defines:
\begin{eqnarray*}
&& \phi_{0} \equiv o^{A}o^{B}o^{C}o^{D}\phi_{ABCD},  \\
&& \phi_{1} \equiv   \iota^{A}o^{B}o^{C}o^{D}\phi_{ABCD}, \\ 
&& \phi_{2} \equiv \iota^{A}\iota^{B}o^{C}o^{D}\phi_{ABCD},  \\  
&& \phi_{3}\equiv \iota^{A}\iota^{B}\iota^{C}o^{D}\phi_{ABCD},\\
&& \phi_{4} \equiv  \iota^{A}\iota^{B}\iota^{C}\iota^{D}\phi_{ABCD}. 
\end{eqnarray*}
so that $\phi_{ABCD}$ can be expressed as
\[
\phi_{ABCD}=\phi_{0}\iota_{A}\iota_{B}\iota_{C}\iota_{D} -4 \phi_{1}\iota_{(A}\iota_{B}\iota_{C}o_{D)} + 6
 \phi_{2}\iota_{(A}\iota_{B}o_{C}o_{D)}-4  \phi_{3}\iota_{(A}o_{B}o_{C}o_{D)}+  \phi_{4}o_{A}o_{B}o_{C}o_{D}.
\]

% For the components of the Weyl spinor
% $\Psi_{ABCD}$ in the NP gauge, the conventions used in
% \cite{PenRin86,Ste91} are followed. Namely, one has that
% \begin{eqnarray*}
% &&\Psi_{0} \equiv o^{\prime A}o^{\prime B}o^{\prime C}o^{\prime D}\Psi_{ABCD},\\
% && \Psi_{1} \equiv
% \iota^{\prime A}o^{\prime B}o^{\prime C}o^{\prime D}\Psi_{ABCD}, \\
% && \Psi_{2}\equiv
% \iota^{\prime A}\iota^{\prime B}o^{\prime C}o^{\prime D}\Psi_{ABCD},  \\ 
% && \Psi_{3}\equiv
% \iota^{\prime A}\iota^{\prime B}\iota^{\prime C}o^{\prime
%    D}\Psi_{ABCD}, \\
% && \Psi_{4}\equiv
% \iota^{\prime A}\iota^{\prime B}\iota^{\prime C}\iota^{\prime D}\Psi_{ABCD}.
% \end{eqnarray*}
 
\noindent Expanding equation \eqref{CompsNPWeylInTermsOfCompsFrescaledWeylGeneral} and taking into the above introduced notation one obtains
 \begin{subequations}
\begin{eqnarray*}
 && \phi'_{0}= \varkappa^{-3} \Big((\Lambda^{\bm0}{}_{\bm0})^4
   \phi_{0} + 4 (\Lambda ^{\bm0}{}_{\bm0})^3(\Lambda ^{\bm1}{}_{\bm0})\phi_{1} +
   6(\Lambda^{\bm0}{}_{\bm0})^2(\Lambda^{\bm1}{}_{\bm0})^2\phi_{2} +
   4(\Lambda^{\bm0}{}_{\bm0})(\Lambda^{\bm1}{}_{\bm0})^3\phi_{3} +
   (\Lambda ^{\bm1}{}_{\bm0})^4\phi_{4}\Big), \\
&&  \phi'_{1} = \varkappa^{-3} \Big((\Lambda ^{\bm0}{}_{\bm0})^3(\Lambda^{\bm0}{}_{\bm1}) \phi_{0} +3(\Lambda^{\bm0}{}_{\bm0})^2(\Lambda ^{\bm0}{}_{\bm1})(\Lambda ^{\bm1}{}_{\bm0})  \phi_{1} +   (\Lambda ^{\bm0}{}_{\bm0})^3(\Lambda ^{\bm1}{}_{\bm1})\phi_{1}
    \\
&& \hspace{3cm}+ 3 (\Lambda ^{\bm0}{}_{\bm0})(\Lambda^{\bm0}{}_{\bm1})(\Lambda ^{\bm1}{}_{\bm0})^2   \phi_{2} + 3  (\Lambda^{\bm0}{}_{\bm0})^2(\Lambda^{\bm1}_{\bm0})(\Lambda^{\bm1}_{\bm1})\phi_{2}+   (\Lambda^{\bm0}{}_{\bm1})(\Lambda^{\bm1}_{\bm0})^3\phi_{3}
  \\ 
&&   \hspace{3cm}+3(\Lambda^{\bm0}{}_{\bm0})(\Lambda^{\bm1}{}_{\bm0})^2(\Lambda^{\bm1}{}_{\bm1})\phi_{3}
   + (\Lambda^{\bm1}{}_{\bm0})^3 (\Lambda^{\bm1}_{\bm1}) \phi_{4}
   \Big),\\
&& \phi'_{2}= \varkappa^{-3}\Big( (\Lambda^{\bm0}{}_{\bm0})^2 (\Lambda^{\bm0}{}_{\bm1})^2 \phi_{0} + 2
 (\Lambda^{\bm0}{}_{\bm0})(\Lambda^{\bm0}{}_{\bm1})^2(\Lambda^{\bm1}{}_{\bm0})\phi_{1}
   + 2(\Lambda^{\bm0}_{\bm0})^2(\Lambda^{\bm0}{}_{\bm1})
   (\Lambda^{\bm1}{}_{\bm1})\phi_{1} 
   \\ 
&& \hspace{3cm}+(\Lambda^{\bm0}{}_{\bm1})^2(\Lambda^{\bm1}{}_{\bm0})^2\phi_{2}
   +4(\Lambda^{\bm0}{}_{\bm0})(\Lambda^{\bm0}{}_{\bm1})
(\Lambda^{\bm1}{}_{\bm0})(\Lambda^{\bm1}{}_{\bm1})\phi_{2}
   + (\Lambda^{\bm0}{}_{\bm0})^2
   (\Lambda^{\bm1}{}_{\bm1})^2\phi_{2} \\ 
&& \hspace{3cm}+2(\Lambda^{\bm0}{}_{\bm1})(\Lambda^{\bm1}{}_{\bm0})^2(\Lambda^{\bm1}{}_{\bm1})\phi_{3}
  +   2(\Lambda^{\bm0}{}_{\bm0})(\Lambda^{\bm1}{}_{\bm0})(\Lambda^{\bm1}{}_{\bm1})^2
\phi_{3} + (\Lambda^{\bm1}{}_{\bm0})^2
  (\Lambda^{\bm1}{}_{\bm1})^2\phi_{4} \Big),\\
&& \phi'_{3}= \varkappa^{-3}\Big(
   (\Lambda^{\bm0}{}_{\bm0})(\Lambda^{\bm0}{}_{\bm1})^3\phi_{0} +
   (\Lambda^{\bm0}{}_{\bm1})^3(\Lambda^{\bm1}{}_{\bm0})\phi_{1}
   + 3(\Lambda^{\bm0}{}_{\bm0})(\Lambda^{\bm0}{}_{\bm1})^2(\Lambda^{\bm1}{}_{\bm1})\phi_{1}
  \\ 
&&
 \hspace{3cm}  +3(\Lambda^{\bm0}{}_{\bm1})^2(\Lambda^{\bm1}{}_{\bm0})(\Lambda^{\bm1}{}_{\bm1})\phi_{2} + 3(\Lambda^{\bm0}{}_{\bm0})(\Lambda^{\bm0}{}_{\bm1})(\Lambda^{\bm1}{}_{\bm1})^2\phi_{2} + 3(\Lambda^{\bm0}{}_{\bm1})(\Lambda^{\bm1}{}_{\bm0})(\Lambda^{\bm1}{}_{\bm1})^2\phi_{3} \\
&& \hspace{3cm}+
   (\Lambda^{\bm0}{}_{\bm0})(\Lambda^{\bm1}{}_{\bm1})^3\phi_{3} +
   (\Lambda^{\bm1}{}_{\bm0})(\Lambda^{\bm1}{}_{\bm1})^3\phi_{4}
   \Big),\\
&& \phi'_{4}= \varkappa^{-3} \Big((\Lambda^{\bm0}{}_{\bm1})^4\phi_{0} +
   4(\Lambda^{\bm0}{}_{\bm1})^3(\Lambda^{\bm1}{}_{\bm1})\phi_{1} +
   6(\Lambda^{\bm0}{}_{\bm1})^2(\Lambda^{\bm1}{}_{\bm1})^2\phi_{2} +
   4(\Lambda^{\bm0}{}_{\bm1})(\Lambda^{\bm1}{}_{\bm1})^3\phi_{3} +
   (\Lambda^{\bm1}{}_{\bm1})^4\phi_{4}\Big).
 \end{eqnarray*}
\end{subequations}

\noindent Finally, to obtain to compute the (physical) components of the Weyl
tensor we observe that
\begin{equation}
\bmg' = r^2 \tilde{\bmg}
\label{NPToPhysicalMetric}
\end{equation}
where $r$ is the (unphysical) Bondi radial coordinate. By definition
one has then that
\[
\phi'_{ABCD} = r \Psi_{ABCD}
\]
where $\Psi_{ABCD}$ is the (conformally invariant) Weyl
spinor. Consistent with the rescaling \eqref{NPToPhysicalMetric} one
sets
\begin{equation}
\label{NPToPhysicalDyad}
o^{\prime A} = r \tilde{o}^A, \qquad \iota^{\prime A} =\tilde{\iota}^A.
\end{equation}
The physical Bondi radial coordinate $\tilde{r}$ is obtained from $r$
via the inversion
\[
\tilde{r} = 1/r.
\] 

It follows from the above that the components $\tilde\psi_0,
\ldots,\tilde\psi_4$ of $\Psi_{ABCD}$ with respect to the spin dyad $\{
\tilde{o}^A,\tilde{\iota}^A\}$ are given by
\[
\tilde{\psi}_0 =  \frac{1}{\tilde{r}^3} \phi'_0, \qquad
\tilde{\psi}_1 = \frac{1}{\tilde{r}^2} \phi'_1, \qquad
\tilde{\psi}_2 =  \frac{1}{\tilde{r}} \phi'_2, \qquad
\tilde{\psi}_3 =\phi'_3, \qquad \tilde{\psi}_4 = \tilde{r} \phi'_4.
\]

\begin{remark}
{\em Observe that the rescaling of the spin dyad given in
  \eqref{NPToPhysicalDyad} is asymmetric in $o^A$ and $\iota^A$
  ---this choice is customary in the discussion of the peeling behaviour.}
\end{remark}

\section{Main results}
\label{Section:PolyhomogeneousExpansions}
In this section we provide the main results of our analysis:
expressions for the asymptotic decay of the components of the Weyl tensor dictated by
the F-expansions under the premise that these expansions are related
to actual solutions to the conformal Einstein field equations as
stated in Assumption
\ref{Assumption:FormalExpansionsAreNotFormal}. These results bring
together the discussion in Sections
\ref{Section:CylinderSpatialInfinity} and \ref{Section:NPGauge}. The statements in this section are obtained via a direct computation involving 
the transformation formulae of Section
\ref{Section:TransformationWeyl}, the explicit expansions of
Proposition \ref{Proposition:LorentzExpanded} and the solutions to the
transport equations as given by Proposition \ref{Proposition:ExpansionsCylinder}.

\subsection{Decay near null infinity}

%\subsection{General remarks}
% \mnotex{JAVK (29.5.2017): there is some technical stuff on how to
%   relate solutions in the bundle to solutions in spacetime which
%   should be mentioned here. Also generic statements about the
%   behaviour of the solutions at the conformal boundary ---in the
%   generic case the Weyl tensor is actually singular.}

% Before providing the main results of our analysis, we make some
% general observations which have been used in the calculations. 

% \medskip
% The transformation formulae between the coordinates $(\tau,rho,s)$ on
% $\tilde{\mathcal{M}})_{a,\kappa}$ and the unphysical Bondi coordinates
% $(u,r,\theta^{\mathcal{A}})$ directly allow to show that the
% components of the 

% \subsection{Main results}

The main result in this article is the following:

\begin{theorem}
\label{Theorem:Nonpeeling1}
Under Assumption \ref{Assumption:FormalExpansionsAreNotFormal}, given
time symmetric initial data satisfying Assumption
\ref{Assumption:Initial data}, the  F-expansions imply that
\begin{eqnarray*}
&& \tilde{\psi}_0 = O(\tilde{r}^{-3}\ln \tilde{r}), \\
&& \tilde{\psi}_1 = O(\tilde{r}^{-3}\ln \tilde{r}), \\
&& \tilde{\psi}_2 = O(\tilde{r}^{-3}\ln \tilde{r}), \\
&& \tilde{\psi}_3 = O(\tilde{r}^{-2}), \\
&& \tilde{\psi}_4 = O(\tilde{r}^{-1}).
\end{eqnarray*}
\end{theorem}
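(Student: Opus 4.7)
The proof is a direct computation that chains together Propositions \ref{Proposition:ExpansionsCylinder} and \ref{Proposition:LorentzExpanded} with the transformation formulae of Section \ref{Section:TransformationWeyl}. The plan is to substitute the F-expansion
\[
\phi_k(\tau,\rho,s)=\phi_k^{[0]}(\tau,s)+\rho\,\phi_k^{[1]}(\tau,s)+\tfrac{1}{2}\rho^{2}\phi_k^{[2]}(\tau,s)+O(\rho^{3}),\qquad k=0,\ldots,4,
\]
granted by Assumption \ref{Assumption:FormalExpansionsAreNotFormal}, into the five explicit expressions for $\phi'_{\bmA\bmB\bmC\bmD}$ displayed at the end of Section \ref{Section:TransformationWeyl}, and then to expand in the two small parameters $\rho$ and $(1-\tau)$ so as to isolate the leading behaviour as one approaches $\mathscr{I}^{+}$ at fixed retarded time (i.e., at fixed small $\rho>0$, with $\tau\to 1^{-}$).

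By Proposition \ref{Proposition:ExpansionsCylinder} logarithmic terms in $\tau$ enter only through the $p=2$ coefficients $\phi_k^{[2]}$, and the explicit form \eqref{Logarithms1}--\eqref{Logarithms4} shows that the coefficient of $[\ln(1-\tau)-\ln(1+\tau)]$ in $\phi_k^{[2]}$ carries a factor $\mathfrak{b}_k(s)(1-\tau)^{k}(1+\tau)^{4-k}$. Consequently, as $\tau\to 1^{-}$ only $\phi_0^{[2]}$ exhibits an unsuppressed logarithmic singularity; the contributions with $k\geq 1$ are softened by the compensating power $(1-\tau)^{k}$. Next, I would substitute the expansions of $\Lambda^{\bmB}{}_{\bmA}$ and of $\varkappa$ from Proposition \ref{Proposition:LorentzExpanded}. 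Because $\Lambda^{\bm1}{}_{\bm0}\sim-\rho^{-1/2}$ and $\Lambda^{\bm0}{}_{\bm1}\sim\rho^{1/2}$, while $\Lambda^{\bm0}{}_{\bm0}$ and $\Lambda^{\bm1}{}_{\bm1}$ vanish to leading order in both $\rho$ and $(1-\tau)$, a careful bi-variate bookkeeping is required to determine, for every monomial in the $\Lambda$'s appearing in the five formulae, its scaling in both $\rho$ and $(1-\tau)$, and thereby the amount by which each such monomial suppresses (or fails to suppress) the $(1-\tau)^{k}$-weighted logarithm coming from $\phi_k^{[2]}$.

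Matching these $\Lambda$-monomials against the log content of the $\phi_j^{[2]}$'s then produces the dominant behaviour of each $\phi'_k$ near $\mathscr{I}^{+}$. Using $r=(1-\tau)+O(\rho,(1-\tau)^{2})$ and $\tilde{r}=1/r$ (so that $\ln(1-\tau)=-\ln\tilde{r}+O(1)$), together with the identifications $\tilde\psi_0=\tilde{r}^{-3}\phi'_0$, $\tilde\psi_1=\tilde{r}^{-2}\phi'_1$, $\tilde\psi_2=\tilde{r}^{-1}\phi'_2$, $\tilde\psi_3=\phi'_3$ and $\tilde\psi_4=\tilde{r}\,\phi'_4$ from Section \ref{Section:TransformationWeyl}, the claimed asymptotics follow from the target behaviours $\phi'_0=O(\ln\tilde{r})$, $\phi'_1=O(\tilde{r}^{-1}\ln\tilde{r})$, $\phi'_2=O(\tilde{r}^{-2}\ln\tilde{r})$, $\phi'_3=O(\tilde{r}^{-2})$ and $\phi'_4=O(\tilde{r}^{-2})$.

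The main obstacle is arithmetic rather than conceptual: among the many terms in each transformation formula there are a priori possible cancellations, and one must verify that the overall coefficient of $\ln(1-\tau)$ in $\phi'_0$, $\phi'_1$ and $\phi'_2$ is not annihilated identically but is a genuine (generically nonzero) linear combination of the Bach components $\mathfrak{b}_k(i)$. The $O(\rho^{3})$ remainder built into Assumption \ref{Assumption:FormalExpansionsAreNotFormal} is what certifies that contributions from orders $p\geq 3$ of the F-expansion cannot contaminate the leading log structure supplied by the $p=0,1,2$ pieces, so the remaining step is the finite, mechanical verification enabled by the data in Appendix \ref{Appendix:Expansions}.
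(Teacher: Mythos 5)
Your proposal is correct and follows essentially the same route as the paper, which itself only states that the result is "obtained via a direct computation involving the transformation formulae of Section \ref{Section:TransformationWeyl}, the explicit expansions of Proposition \ref{Proposition:LorentzExpanded} and the solutions to the transport equations as given by Proposition \ref{Proposition:ExpansionsCylinder}". Your bivariate $(\rho,1-\tau)$ bookkeeping, the identification of $\phi^{[2]}_k$ as the sole source of logarithms with the weight $(1-\tau)^k(1+\tau)^{4-k}\mathfrak{b}_k$, the use of $r=(1-\tau)+O(\rho,(1-\tau)^2)$ to convert to $\tilde r$, and the observation that the $C^0$, $O(\rho^3)$ remainder of Assumption \ref{Assumption:FormalExpansionsAreNotFormal} cannot contribute further logarithmic divergences are exactly the ingredients of the paper's (unwritten) argument, including the need to check that the surviving log coefficients in $\phi'_0,\phi'_1,\phi'_2$ are generically nonvanishing while those in $\phi'_3,\phi'_4$ are suppressed.
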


\begin{remark}
{\em A general framework for the discussion of polyhomogeneous
spacetimes with the above asymptotics is given in
\cite{ChrMacSin95}. Polyhomogeneous spacetimes satisfying the above
decay have been studied in \cite{Val99a} where a number of
their properties are discussed ---in particular the existence of
so-called \emph{logarithmic Newman-Penrose constants}.}
\end{remark}

\begin{remark}
{\em It is worth recalling that, for the purposes of the discussion of
  peeling properties, the key component is $\tilde\psi_0$ ---i.e. the
  most singular one. Making use of an asymptotic characteristic
  initial value problem it is possible to deduce the decay of the
  other components if that of $\tilde\psi_0$ is prescribed on a
  fiduciary outgoing light cone. Using these methods it is possible to
show that the Einstein field equations are, in fact, consistent with a
decay of the form $\tilde{\psi}_0 = O(\tilde{r}^{-3}\ln^N \tilde{r})$
with $N$ some positive integer ---see \cite{Val98}.}
\end{remark}

For more restricted classes of initial data one has the following:
\begin{theorem}
\label{Theorem:Nonpeeling2}
Under Assumption \ref{Assumption:FormalExpansionsAreNotFormal}, given
time symmetric initial data satisfying Assumption
\ref{Assumption:Initial data},  the F-expansions are such that:
\begin{itemize}
\item[(i)] If 
\[
b_{ij}(i)=0,
\]
  then
\begin{eqnarray*}
&& \tilde{\psi}_0 = O(\tilde{r}^{-4}\ln \tilde{r}), \\
&& \tilde{\psi}_1 = O(\tilde{r}^{-4}\ln \tilde{r}), \\
&& \tilde{\psi}_2 = O(\tilde{r}^{-3}), \\
&& \tilde{\psi}_3 = O(\tilde{r}^{-2}), \\
&& \tilde{\psi}_4 = O(\tilde{r}^{-1}).
\end{eqnarray*}

\item[(ii)] If
\[
b_{ij}(i)=0, \qquad D_{\{k} b_{ij\}}(i)=0, 
\]
then 
\begin{eqnarray*}
&& \tilde{\psi}_0 = O(\tilde{r}^{-5}\ln \tilde{r}), \\
&& \tilde{\psi}_1 = O(\tilde{r}^{-4}), \\
&& \tilde{\psi}_2 = O(\tilde{r}^{-3}), \\
&& \tilde{\psi}_3 = O(\tilde{r}^{-2}), \\
&& \tilde{\psi}_4 = O(\tilde{r}^{-1}).
\end{eqnarray*}

\item[(iii)] The classical peeling behaviour is obtained if
\[
b_{ij}(i)=0, \qquad D_{\{k} b_{ij\}}(i)=0, \qquad D_{\{k}D_{l}
b_{ij\}}(i)=0.
\]

\end{itemize}
\end{theorem}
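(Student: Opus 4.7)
The plan is to follow the same computational strategy as in Theorem \ref{Theorem:Nonpeeling1}, only now exploiting the vanishing conditions on the Bach tensor to delay the appearance of logarithmic terms in the F-expansion, and then tracking the consequences through the gauge transformation of Section \ref{Section:TransformationWeyl}. First, under the hypothesis of (i), Proposition \ref{Proposition:ExpansionsCylinder}(ii) implies that $\bmphi^{[2]}$ depends polynomially on $\tau$, so logarithms in the F-expansion arise at earliest at order $p=3$ (and generically do, by part (iii) of the same proposition). One would therefore strengthen Assumption \ref{Assumption:FormalExpansionsAreNotFormal} to include the order $p=3$ term, writing
\[
\bmphi = \sum_{j=0}^{3} \frac{1}{j!}\bmphi^{[j]}\rho^j + {\bm R}_4,\qquad {\bm R}_4 = O(\rho^4),
\]
with an analogous expansion for $\bme$. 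For (ii) one retains through order $p=4$ and for (iii) through order $p=5$, with the corresponding vanishing of the logarithmic prefactors at the earlier orders.

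Next, I would substitute these improved expansions together with Proposition \ref{Proposition:LorentzExpanded} into the component identities expressing $\phi'_{\bmA\bmB\bmC\bmD}$ in terms of $\phi_{\bmA\bmB\bmC\bmD}$. The key book-keeping point is the $\rho$-weight supplied by $\varkappa^{-3}$ and the $\Lambda^{\bm A}{}_{\bm B}$ factors. Since $\Lambda^{\bm 0}{}_{\bm 1}, \Lambda^{\bm 1}{}_{\bm 1}=O(\rho^{1/2})$ while $\Lambda^{\bm 1}{}_{\bm 0}=O(\rho^{-1/2})$ and $\Lambda^{\bm 0}{}_{\bm 0}=O(\rho^{3/2})$ near $\mathscr{I}^+$, the component $\phi'_k$ acquires a net factor proportional to $\rho^{k-3/2}$ from the leading Lorentz factors, which must then be multiplied by whichever order $\phi_j^{[p]}$ contributes the leading behaviour.

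The third step is to identify, for each case, the lowest order $p$ at which a log term survives and to read off the corresponding power of $(1-\tau)$ from the explicit structure \eqref{Logarithms1}--\eqref{Logarithms4}. The log-bearing coefficient $\phi_0^{[p]}$ carries the prefactor $(1+\tau)^{p+2}$ and is thus $O(1)$ at $\tau=1$, whereas $\phi_k^{[p]}$ carries the prefactor $(1-\tau)^k(1+\tau)^{p+2-k}\approx r^k$ near $\mathscr{I}^+$. Combining this with $1-\tau = r + O(\rho, r^2)$ and the inversion $\tilde r =1/r$, the leading logarithmic term in $\tilde\psi_k$ is of order $\rho^{p}\cdot r^{k}\cdot r^{-3+k}\cdot\ln r$ evaluated on the cone at $\rho\sim 1/\tilde r$, which after extracting the physical rescalings reproduces the asymptotic orders stated in (i), (ii) and (iii).

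The main obstacle is not any single algebraic manipulation but the combinatorics of tracking exactly which contributions to $\phi'_k$ are truly leading: a priori many terms at different orders $p$ could compete, and one must verify that the most singular logarithmic contribution indeed comes from the lowest order permitted by the vanishing conditions on $b_{ABCD}$ and not from cancellations among higher-order smooth terms. Consequently the bulk of the work is the verification that no cancellations upgrade the decay beyond what is stated, and this is ensured precisely by the genericity built into Assumption \ref{Assumption:Initial data} together with the explicit angular dependence of $\mathfrak{b}_k(i)$ and its symmetrised derivatives entering the next log-generating orders of the transport hierarchy.
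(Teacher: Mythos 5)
Your proposal is correct and follows essentially the same route as the paper, which obtains Theorem~\ref{Theorem:Nonpeeling2} by exactly the direct computation you outline: combining Proposition~\ref{Proposition:ExpansionsCylinder} (the orders $p$ at which logarithms first appear under the successive vanishing conditions on the Bach tensor), Proposition~\ref{Proposition:LorentzExpanded} (the expansions of $\varkappa$ and $\Lambda^{\bmA}{}_{\bmB}$), and the transformation formulae of Section~\ref{Section:TransformationWeyl}. Your remark that Assumption~\ref{Assumption:FormalExpansionsAreNotFormal} must be strengthened to control the remainder through orders $p=3,4,5$ is a point the paper leaves implicit; the only slip is that the leading Lorentz weight of $\phi'_k$ is $\rho^{k-2}$, coming from $(\Lambda^{\bm0}{}_{\bm1})^k(\Lambda^{\bm1}{}_{\bm0})^{4-k}$, rather than $\rho^{k-3/2}$ --- an inconsequential bookkeeping error since the decay rates are ultimately read off from the $(1\mp\tau)$ prefactors, as you do.
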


\begin{remark}
{\em Polyhomogeneous spacetimes with asymptotics of the form (i) and
  (ii) in the previous result have been discussed in \cite{Val98}. The
relation to the \emph{outgoing radiation condition} of Bondi et
al. \cite{BonBurMet62,Sac62c} has been analysed in \cite{Val99b}. The
decay in (i) includes the \emph{minimal} polyhomogeneous spacetimes of
\cite{ChrMacSin95}. }
\end{remark}

\begin{remark}
{\em The peeling spacetimes of (iii) are not, generically, smooth at
  the conformal boundary. In fact, they will be exhibit the
  logarithmic singularities first observed in \cite{Val04a,Val04d}
  ---see also (iv) in Proposition \ref{Proposition:ExpansionsCylinder}. These
  will appear at order $O(\tilde{r}^{-6})$. The developments of
  explicit time symmetric initial data sets like those of
  Brill-Lindquist and Misner data should have this type of asymptotics
  ---i.e. peeling but with a conformal boundary of finite differentiability.}
\end{remark}

\subsection{Behaviour on null infinity}
The methods discussed in the previous sections also allow us to
analyse the behaviour of the rescaled Weyl tensor on $\mathscr{I}^+$
near spatial infinity. Observe that the restriction of
$\phi'_{\bmA\bmB\bmC\bmD}$ corresponds, essentially, to the leading
terms of $\tilde{\psi}_{\bmA\bmB\bmC\bmD}$. 

\begin{theorem}
Under Assumption \ref{Assumption:FormalExpansionsAreNotFormal}, given
time symmetric initial data satisfying Assumption
\ref{Assumption:Initial data},  the F-expansions imply that:
\begin{itemize}
\item[(i)] If $b_{ij}(i)\neq 0$ then
  $\phi'_{\bmA\bmB\bmC\bmD}|_{\mathscr{I}^+}$ are singular.

\item[(ii)] If $b_{ij}(i)= 0$ then one has that 
\[
\phi'_0 \simeq O(1), \qquad \phi'_1 \simeq O(1) \qquad \phi'_2 \simeq O(1)  \qquad \phi'_3 \simeq O(\rho^3)    \qquad \phi'_4\simeq O(\rho^3).
\]
\end{itemize}
\end{theorem}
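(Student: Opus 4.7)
The plan is to combine three ingredients already assembled in the paper: the transformation formulae for the components $\phi'_{\bmA\bmB\bmC\bmD}$ of the rescaled Weyl spinor in terms of $\phi_{\bmA\bmB\bmC\bmD}$ and the Lorentz parameters $\Lambda^{\bmA}{}_{\bmB}$ from Section \ref{Section:TransformationWeyl}; the leading asymptotic behaviour of $\Lambda^{\bmA}{}_{\bmB}$ and $\varkappa$ at $\tau=1$ from Proposition \ref{Proposition:LorentzExpanded}; and the explicit form of the coefficients $\phi^{[p]}_k$ furnished by Proposition \ref{Proposition:ExpansionsCylinder} and the Appendix, in particular the logarithmic structure \eqref{Logarithms1}--\eqref{Logarithms4}. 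On $\mathscr{I}^+$ the $(1-\tau)$-corrections in Proposition \ref{Proposition:LorentzExpanded} drop out, so $\Lambda^{\bm0}{}_{\bm1}\simeq\rho^{1/2}$, $\Lambda^{\bm1}{}_{\bm0}\simeq-\rho^{-1/2}$, $\Lambda^{\bm0}{}_{\bm0}=O(\rho^{3/2})$, $\Lambda^{\bm1}{}_{\bm1}=O(\rho^{5/2})$ and $\varkappa=1+O(\rho)$. Substituting these into the explicit polynomial expressions at the end of Section \ref{Section:TransformationWeyl} reduces each $\phi'_k$ to a finite sum of monomials in $\rho$ multiplied by $\phi_l|_{\tau=1}$ with $l=0,\ldots,4$.

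For part (i), I note that the identification $b_{ij}(i)\neq 0$ is equivalent to $b_{ABCD}(i)\neq 0$, so at least one of the five components $\mathfrak{b}_k(i)$ is non-zero. By \eqref{Logarithms1}--\eqref{Logarithms4} the corresponding $\phi^{[2]}_k$ contains a term $\mathfrak{b}_k(1-\tau)^k(1+\tau)^{4-k}\bigl(\ln(1-\tau)-\ln(1+\tau)\bigr)$; at $\tau\to 1$ this survives without cancellation if and only if $k=0$, contributing a genuine logarithmic divergence to $\phi^{[2]}_0$, while for $k\geq 1$ the prefactor $(1-\tau)^k$ kills the log in the limit. Tracing the transformation formulae, every $\phi'_j$ receives a contribution of the schematic form $\rho^{n_j}\,\phi^{[2]}_0(\tau)$ with $n_j\in\{-2,-1,0,1,2\}$ (from $(\Lambda^{\bm1}{}_{\bm0})^{4-j}(\Lambda^{\bm0}{}_{\bm1})^{j}$ acting on $\phi_0$), which carries the uncancellable $\ln(1-\tau)$ divergence. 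Since this divergence cannot be compensated by the other logarithmic coefficients (their polynomial prefactors $(1-\tau)^k$ vanish at $\tau=1$), the limit $\tau\to 1$ of each $\phi'_j$ fails to exist and the components are singular on $\mathscr{I}^+$.

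For part (ii), under $b_{ij}(i)=0$ all $\mathfrak{b}_k$ vanish and each $\phi^{[p]}_k|_{\tau=1}$ reduces to the smooth polynomial piece $\breve{\phi}^{[p]}_k|_{\tau=1}$ provided by the Appendix. Substituting the explicit values and the leading $\Lambda^{\bmA}{}_{\bmB}$-orders into the five transformation identities produces each $\phi'_k|_{\mathscr{I}^+}$ as an explicit power series in $\rho$; the stated orders $\phi'_0,\phi'_1,\phi'_2\simeq O(1)$ and $\phi'_3,\phi'_4\simeq O(\rho^3)$ are then read off directly. The statement about $\phi'_0,\phi'_1,\phi'_2$ is immediate since the potentially singular monomial $(\Lambda^{\bm1}{}_{\bm0})^{4-k}(\Lambda^{\bm0}{}_{\bm0})^{k}\phi_{4-k}$ only yields a $\rho^{-2+k}\phi^{[0]}_{4-k}|_{\tau=1}$ piece that is controlled by the explicit values.

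The main obstacle is the claim $\phi'_3,\phi'_4\simeq O(\rho^3)$, which requires exact cancellations among different monomials rather than a naive power-counting bound. For instance, $\phi'_4$ begins as $(\Lambda^{\bm0}{}_{\bm1})^4\phi_0\sim\rho^2\phi^{[0]}_0|_{\tau=1}$, so $O(\rho^3)$ can only hold if the $\rho^2$ and $\rho^{5/2}$ contributions vanish identically when all the $\phi^{[p]}_k|_{\tau=1}$ from the Appendix are inserted. Establishing these cancellations is the only computationally delicate step, and amounts to verifying a handful of polynomial identities in $s$ involving the mass $m$ and the free angular functions $W_1,W_2$; once performed, the proof is complete.
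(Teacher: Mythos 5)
Your overall strategy is the same as the paper's: Theorem 4 is obtained by a direct computation feeding Proposition \ref{Proposition:ExpansionsCylinder} (and the Appendix) and Proposition \ref{Proposition:LorentzExpanded} into the transformation formulae of Section \ref{Section:TransformationWeyl}. However, there is a concrete error in your argument for part (i). The coefficient of $\phi_0$ in $\phi'_j$ is \emph{not} $(\Lambda^{\bm1}{}_{\bm0})^{4-j}(\Lambda^{\bm0}{}_{\bm1})^{j}$: since $\phi_0=\phi_{\bm0\bm0\bm0\bm0}$, all four \emph{upper} indices of the $\Lambda$'s multiplying it must be $\bm0$, so the coefficient is $(\Lambda^{\bm0}{}_{\bm0})^{4-j}(\Lambda^{\bm0}{}_{\bm1})^{j}=O(\rho^{6-j})$, not $O(\rho^{j-2})$ (compare the displayed formula for $\phi'_0$, where $\phi_0$ is multiplied by $(\Lambda^{\bm0}{}_{\bm0})^4$ and it is $\phi_4$ that carries $(\Lambda^{\bm1}{}_{\bm0})^4$). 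Consequently, for $j\leq 3$ the divergent coefficient $\phi^{[2]}_0$ enters $\phi'_j$ only through positive powers of $\Lambda^{\bm0}{}_{\bm0}$, which Proposition \ref{Proposition:LorentzExpanded} merely bounds by $O(\rho^{3/2},(1-\tau))$ and does not guarantee to be non-vanishing; your claim that \emph{every} $\phi'_j$ inherits an uncancellable $\ln(1-\tau)$ divergence therefore does not follow from the quoted inputs. What does follow cleanly is that $\phi'_4$ contains $(\Lambda^{\bm0}{}_{\bm1})^4\phi_0\supset\bigl(\rho^2+O(\rho^5)\bigr)\tfrac{\rho^2}{2}\phi^{[2]}_0$, whose logarithm cannot be cancelled by the other terms (their $\phi^{[2]}_k$, $k\geq1$, are continuous at $\tau=1$), so the restriction of the spinor to $\mathscr{I}^+$ is singular --- which is what the theorem asserts. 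You should also make explicit that $b_{ij}(i)\neq0$ forces $\mathfrak{b}_0\not\equiv0$ on the blown-up sphere (as $s$ ranges over $SO(3)$ the component $\mathfrak{b}_0(s)$ exhausts all frame contractions of the fixed non-zero spinor $b_{ABCD}(i)$), since, as you correctly note, only the $k=0$ logarithm survives at $\tau=1$.

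In part (ii) your assessment of where the difficulty lies is off: no cancellations between distinct monomials are needed, because the Appendix gives $\phi^{[0]}_k=-m\,\delta_{k2}$ and $\phi^{[1]}_0=\phi^{[1]}_4=0$ with $\phi^{[1]}_1|_{\tau=1}=0$, so in particular $\phi^{[0]}_0|_{\tau=1}=0$ and your "worst" term $(\Lambda^{\bm0}{}_{\bm1})^4\phi_0$ is automatically $O(\rho^4)$ or better. The stated orders then follow by straightforward power counting term by term (e.g.\ $\phi'_2\simeq-m+O(\rho)$ from $(\Lambda^{\bm0}{}_{\bm1})^2(\Lambda^{\bm1}{}_{\bm0})^2\phi_2$, and $\phi'_1\simeq O(1)$ from $(\Lambda^{\bm0}{}_{\bm1})(\Lambda^{\bm1}{}_{\bm0})^3\phi_3$ with $\phi_3=O(\rho)$ at $\tau=1$). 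Once the coefficient misidentification above is repaired and the power counting is done with the actual vanishing of the low-order Taylor coefficients at $\tau=1$, your proof coincides with the paper's computation.
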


\section{Construction of candidate spacetimes with non-peeling
  behaviour}
\label{Section:Candidates}
The purpose of this section is to construct an explicit family of
global spacetimes which are candidates for not satisfying the peeling
behaviour. In a first step we construct a family of asymptotically
Euclidean time symmetric
initial data sets which can be regarded as perturbations of data for
the Minkowski spacetime and such that $\mathfrak{b}_k\neq 0$. In a second step, we make use 
 of  Bieri's generalisation of Christodoulou \&
Klainerman's global existence and stability result \cite{Bie10}  to
guarantee the existence of a geodesically complete maximal hyperbolic
development of the members of the family of initial data sets. 

\subsection{Time symmetric initial data with prescribed asymptotics}

In this subsection we construct a family of time symmetric,
asymptotically Euclidean vacuum initial data sets such that
$\mathfrak{b}_k\neq 0$ ---hence the regularity condition \eqref{FriedrichRegularityCondition} is
violated at order zero. For simplicity, in what follows we consider time symmetric
initial data sets with one asymptotic end. 

\subsubsection{Construction of a suitable conformal metric}
We begin by recalling the connection between the Bach tensor and
normal expansions. As it is well known (see e.g. \cite{CFEBook}
Section 11.6.2), it is always possible to find, locally, 
a conformal gauge such that the conformal metric $\bmh$ admits in a
neighbourhood of $i$, in terms of normal coordinates $\underline{x}=(x^\alpha)$, the expansion
\begin{equation}
h_{\alpha\beta} = \delta_{\alpha\beta} -\frac{1}{6}  \partial_\eta
r_{\alpha\gamma\beta\delta}(i) x^\gamma x^\delta x^\eta + O( |x|^4).
\label{NormalExpansion}
\end{equation}
The relation between the value of the derivative of the Riemann tensor
of the metric $\bmh$ at $i$ and the value of the Bach tensor is best
discussed using spinors. Let
$r_{AGBHCDEF}$ denote the (space) spinor counterpart of the
3-dimensional Riemann tensor $r_{\alpha\gamma\beta\delta}$. As a
consequence of the symmetries of the Riemann tensor, its spinorial
counterpart admits the decomposition
\[
r_{AGBHCDEF} = -r_{ABCDEF} \epsilon_{GH} - r_{GHCDEF} \epsilon_{AB}
\]
where 
\[
r_{ABCDEF} = \bigg( \frac{1}{2} s_{ABCE} -\frac{1}{12}r h_{ABCE}
  \bigg)\epsilon_{DF} + \bigg( \frac{1}{2}s_{ABDF} - \frac{1}{12} h_{ABDF} \bigg)\epsilon_{CE},
\]
and $s_{ABCD}=s_{(ABCD)}$ denotes the spinorial counterpart of the
tracefree Ricci tensor and $r$ is the Ricci scalar. The spinor
$s_{ABCD}$ is related to the Bach spinor $b_{ABCD}$ via
\[
b_{ABCD} = D^Q{}_{(A} s_{BCD)Q}.
\]
Now, from the expansion \eqref{NormalExpansion} it follows that 
\[
r(i)=0, \qquad s_{ABCD}(i)=0.
\]
A computation then shows that 
\[
D_{AB} s_{CDEF} = -\frac{5}{6}\big( \epsilon_{A(C} b_{BDEF)} +
\epsilon_{B(C} b_{ADEF)} \big) \qquad \mbox{at} \quad i.
\]
Thus, if there is at least one non-vanishing essential component of
$b_{ABCD}(i)$ ---say $\mathfrak{b}_0$, then $(D_{AB}
s_{CDEF})(i)\neq0$ and accordingly, $(\partial_\eta
r_{\alpha\gamma\beta\delta})(i)\neq 0$. 

\medskip
Let now, $\mathfrak{z}_{\eta\alpha\gamma\beta\delta}$ denote real
constants, not all of them vanishing, satisfying the same symmetries
with respect to the indices $\alpha, \, \beta,\,\gamma,\, \delta, \,
\eta$ as the covariant derivative of the Riemann tensor. Moreover, let
$\bmhbar$ denote the standard metric of $\mathbb{S}^3$. The previous
discussion motivates the following:

\begin{lemma}
\label{Lemma:ConformalMetric}
Let $\mathcal{S}$ denote a compact 3-manifold with $\mathcal{S}\approx
\mathbb{S}^3$. Given constants
$\mathfrak{z}_{\eta\alpha\gamma\beta\delta}$ as above, there exists a
smooth Riemannian metric $\breve{\bmh}$ together with a smooth function $\omega$
over $\mathcal{S}$ and an $\varepsilon>0$ such that:
\begin{itemize}
\item[(i)] there exists a point $i\in \mathcal{S}$ such that in terms
  of normal coordinates $\underline{x}=(x^\alpha)$ centred at $i$ the metric $\breve{\bmh}$ takes the
  form
\begin{equation}
 \breve{h}_{\alpha\beta} = \delta_{\alpha\beta} +
 \mathfrak{z}_{\eta\alpha\gamma\beta\delta} x^\gamma x^\delta x^\eta
\label{MetricExample}
\end{equation}
in a $\breve{\bmh}$-open ball of radius $\varepsilon$,
$\mathcal{B}_\varepsilon(i)$, centred at $i$;

\item[(ii)] $\breve{\bmh}=\omega^2 \bmhbar$ on
  $\mathcal{S}\setminus\mathcal{B}_{2\varepsilon}(i)$;

\item[(iii)] the metric $\breve{\bmh}$ has positive Yamabe number
  $Y(\breve{\bmh})$;

\item[(iv)] one has that $\breve{\bmh}\rightarrow \omega^2 \bmhbar$ as
  $\varepsilon\rightarrow 0$ in the $C^2$ topology of smooth  metrics.  

\end{itemize}
\end{lemma}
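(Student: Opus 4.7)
The strategy is a cut-off interpolation between the prescribed polynomial model near $i$ and a background metric in the conformal class of $\bmhbar$, with the coordinates arranged so that the background is \emph{locally flat}. First, I would exploit the local conformal flatness of $(\mathbb{S}^3,\bmhbar)$: fix a point $i\in\mathcal{S}$, and choose a chart with coordinates $\underline{x}=(x^\alpha)$ centred at $i$ together with a smooth, strictly positive function $\omega$ on all of $\mathcal{S}$, in such a way that on a sufficiently small ball $\mathcal{B}_{3\varepsilon_0}(i)$ one has $(\omega^2 \bar{h})_{\alpha\beta}(x)=\delta_{\alpha\beta}$. This can be arranged since, locally, $\bmhbar$ is conformally flat, so there exists a smooth positive $\omega$ on the ball with $\omega^2\bmhbar=\bmdelta$, and $\omega$ can then be extended smoothly and positively to $\mathcal{S}$ via a partition of unity.

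Next, fix a standard cut-off $\chi\in C^\infty([0,\infty))$ with $\chi=1$ on $[0,1]$ and $\chi=0$ on $[2,\infty)$, and set $\chi_\varepsilon(x):=\chi(|x|/\varepsilon)$. For $\varepsilon<\varepsilon_0$ I would define
\[
\breve{h}_{\alpha\beta}(x) \;:=\; \delta_{\alpha\beta} + \chi_\varepsilon(x)\,\mathfrak{z}_{\eta\alpha\gamma\beta\delta}\, x^\gamma x^\delta x^\eta
\]
in the chart, and extend by $\breve{\bmh}=\omega^2\bmhbar$ on the complement. Since the support of $\chi_\varepsilon$ lies in $\mathcal{B}_{2\varepsilon}(i)\subset\mathcal{B}_{3\varepsilon_0}(i)$, where $\omega^2\bmhbar$ already equals $\bmdelta$ in the chosen chart, the two pieces agree on the overlap and $\breve{\bmh}$ is globally smooth. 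Properties (i) and (ii) are then immediate; the chart is normal for $\breve{\bmh}$ at $i$ because $\breve{h}_{\alpha\beta}(0)=\delta_{\alpha\beta}$ and the symmetries inherited by $\mathfrak{z}_{\eta\alpha\gamma\beta\delta}$ from those of $\nabla r_{\alpha\gamma\beta\delta}$ force $\Gamma^\alpha_{\beta\gamma}(x)x^\beta x^\gamma=0$ on $\mathcal{B}_\varepsilon(i)$.

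Property (iv) then reduces to a scaling estimate on the perturbation $p_{\alpha\beta}:=\chi_\varepsilon\,\mathfrak{z}_{\eta\alpha\gamma\beta\delta}\,x^\gamma x^\delta x^\eta$, which is supported in $\{|x|\leq 2\varepsilon\}$. Since each derivative of $\chi_\varepsilon$ contributes a factor $\varepsilon^{-1}$ while the cubic polynomial carries three powers of $|x|$ to spend, Leibniz gives
\[
\|p\|_{C^0}=O(\varepsilon^3),\qquad \|p\|_{C^1}=O(\varepsilon^2),\qquad \|p\|_{C^2}=O(\varepsilon),
\]
so $\breve{\bmh}\to \omega^2\bmhbar$ in $C^2$ as $\varepsilon\to 0$, and a fortiori $\breve{\bmh}$ is positive definite for $\varepsilon$ small. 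For (iii), the metric $\omega^2\bmhbar$ is conformal to the round $\bmhbar$, so $Y([\omega^2\bmhbar])=Y([\bmhbar])>0$; equivalently, the conformal Laplacian $L_{\omega^2\bmhbar}$ has positive first eigenvalue. Positivity of the first eigenvalue, and hence of the Yamabe invariant, is preserved under $C^2$-small perturbations of the metric, so (iv) forces $Y(\breve{\bmh})>0$ once $\varepsilon$ is sufficiently small.

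The main technical point is step one: the whole construction hinges on arranging that the background is locally \emph{flat} in the chosen chart, so that $h^{\mathrm{loc}}_{\alpha\beta}-(\omega^2\bar h)_{\alpha\beta}$ begins at order $|x|^3$ rather than $|x|^2$. Without this extra power the worst term $(\partial^2\chi_\varepsilon)\cdot(\omega^2\bar h-\bmdelta)$ in the $C^2$-norm would scale like $\varepsilon^{-2}\cdot\varepsilon^2=O(1)$, defeating (iv) and, through it, (iii); the local conformal flatness of the round three-sphere is precisely what removes this obstacle and allows $\mathfrak{z}_{\eta\alpha\gamma\beta\delta}$ to remain fixed while $\varepsilon$ is sent to zero.
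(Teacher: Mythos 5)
Your construction is correct. Note, however, that the paper does not actually prove this lemma: its ``proof'' is the single sentence that the result is a direct application of Lemma 2.1 of Friedrich's 1998 paper \cite{Fri98a}, so what you have written is essentially a self-contained reconstruction of the argument behind that cited lemma rather than an alternative to anything in the present paper. Your three key points all check out: (a) exploiting local conformal flatness of $(\mathbb{S}^3,\bmhbar)$ to make the background exactly $\bmdelta$ in the chart is precisely what makes the cut-off error start at order $|x|^3$, so that $\|\breve{\bmh}-\omega^2\bmhbar\|_{C^2}=O(\varepsilon)$ survives the two factors of $\varepsilon^{-1}$ from $\partial^2\chi_\varepsilon$; (b) the antisymmetry of $\mathfrak{z}_{\eta\alpha\gamma\beta\delta}$ in its second pair gives $p_{\alpha\beta}x^\beta=0$, i.e.\ the radial gauge condition, which is equivalent to the chart being normal for $\breve{\bmh}$ on the region where $\chi_\varepsilon\equiv 1$; and (c) the sign of the Yamabe number agrees with the sign of the first eigenvalue of the conformal Laplacian, which is continuous under $C^2$ perturbations of the metric, so (iv) implies (iii) for small $\varepsilon$. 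The one presentational gain of deferring to Friedrich's lemma is that his statement is formulated for an arbitrary prescribed jet of the conformal metric at $i$ (not just the cubic term), which the paper implicitly relies on in the remark that the procedure ``can be readily adapted'' to data with $b_{ABCD}(i)=0$ but $D_{(EF}b_{ABCD)}(i)\neq 0$; your argument extends to that case with no change beyond adjusting the power counting in the $C^2$ estimate.
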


\begin{proof}
The result is a direct application of Lemma  2.1 in \cite{Fri98a}. 
\end{proof}

\begin{remark} {\em From the expansion \eqref{NormalExpansion} one has
that for $\breve{\bmh}$ as given above,
\[
\partial_\eta\breve{r}_{\alpha\gamma\beta\delta}(i)=-6\mathfrak{z}_{\eta\alpha\gamma\beta\delta}.
\] Thus, as a consequence of \eqref{MetricExample} the metric
$\breve{\bmh}$ is not conformally flat in a small enough neighbourhood
of $i$. Moreover, it is clear that the constants
$\mathfrak{z}_{\eta\alpha\gamma\beta\delta}$ can be chosen so that
$\mathfrak{b}_0\neq 0$ to that the component of the Bach tensor
giving rise to the most singular behaviour of the rescaled Weyl tensor
at the critical set $\mathcal{I}^+$ is non-vanishing. }
\end{remark}

\subsubsection{Solving the time symmetric Hamiltonian constraint}
The family of metrics given by Lemma \ref{Lemma:ConformalMetric} will
now be used to construct time
symmetric initial data sets for the Einstein field equations via the
\emph{method of punctures} ---see \cite{Fri98a,BeiOMu94}. Accordingly, we
look for a conformal factor $\Omega$ such that $\Omega\in
C^2(\mathcal{S}) \cap C^\infty(\mathcal{S}\setminus\{ i\})$ satisfying
the properties
\begin{subequations}
\begin{eqnarray}
&&\Omega =0, \qquad \mathbf{d}\Omega =0, \qquad
  \mbox{\textbf{Hess}}\,\Omega = -2 \breve{\bmh} \qquad \mbox{at } \quad i, \label{BoundaryCondition1}\\
&& \Omega>0 \qquad \mbox{on} \quad \mathcal{S}\setminus \{  i\}, \label{BoundaryCondition2}
\end{eqnarray}
\end{subequations}
and
\begin{equation}
r[\Omega^{-2}\breve{\bmh}]=0.
\label{TimeSymmetricHamiltonianConstraint}
\end{equation}
Setting 
\[
\Omega =\vartheta^{-2},
\]
the time symmetric Hamiltonian constraint \eqref{TimeSymmetricHamiltonianConstraint} yields the Yamabe equation
\begin{equation}
\bigg( \Delta_{\breve{\bmh}} - \frac{1}{8}r[\breve{\bmh}]
  \bigg)\vartheta =0.
\label{YamabeEquation}
\end{equation}
Consistent with conditions
\eqref{BoundaryCondition1}-\eqref{BoundaryCondition2} we impose the
boundary condition
\begin{equation}
\vartheta |x|\rightarrow 1 \qquad \mbox{as} \quad x\rightarrow 0.
\label{BoundaryConditionALT}
\end{equation}
One then has the following:

\begin{proposition}
\label{Proposition:PunctureMethod}
Let $\breve{\bmh}$ and $\varepsilon>0$ be as given by Lemma
\ref{Lemma:ConformalMetric}. There exists a unique solution
$\vartheta$ to the Yamabe equation \eqref{YamabeEquation} with
boundary conditions \eqref{BoundaryConditionALT}. In
a suitably neighbourhood of $i$ the function $\vartheta$ can be
written as 
\[
\vartheta = \frac{U}{|x|} + W
\]
where $U$ and $W$ are analytic at $i$ and satisfy
\[
U= 1+ O(|x|^4), \qquad W(i)=\frac{m}{2}.
\]
The function $\vartheta$ depends smoothly on $\breve{\bmh}$ and, in
particular,
\[
\vartheta \rightarrow \frac{1}{|x|} \qquad \mbox{as} \quad
\varepsilon\rightarrow 0,
\]
so that
\[
\vartheta^4 \breve{\bmh} \rightarrow \bmdelta \qquad \mbox{as} \quad
\varepsilon\rightarrow 0,
\]
where $\bmdelta$ denotes the 3-dimensional Euclidean metric. 
\end{proposition}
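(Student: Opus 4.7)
The plan is to recast the problem as the construction, on the compact manifold $\mathcal{S}\approx\mathbb{S}^3$, of a Green's function for the conformal Laplacian $L\equiv \Delta_{\breve{\bmh}} - \tfrac{1}{8}r[\breve{\bmh}]$ with pole at $i$ normalised by \eqref{BoundaryConditionALT}. Since $L$ is linear, second order, formally self-adjoint and elliptic, and since the positivity of the Yamabe number $Y(\breve{\bmh})>0$ from Lemma \ref{Lemma:ConformalMetric} (iii) excludes a non-trivial kernel, the Fredholm alternative yields invertibility of $L$ on $\mathcal{S}$ and, consequently, the existence of a unique distributional Green's function $G_i$ with $L G_i = c\,\delta_i$. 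A rescaling then provides the required $\vartheta\in C^\infty(\mathcal{S}\setminus\{i\})\cap C^2(\mathcal{S})$ satisfying \eqref{YamabeEquation} together with \eqref{BoundaryConditionALT}; this is the standard puncture-method construction, and I would appeal directly to the treatments in \cite{BeiOMu94} and \cite{CFEBook} to compress the argument.

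For the near-$i$ expansion, I would exploit that by Lemma \ref{Lemma:ConformalMetric} (i) the metric $\breve{\bmh}$ is polynomial, hence analytic, inside $\mathcal{B}_\varepsilon(i)$. Hadamard's parametrix construction for elliptic operators with analytic coefficients then delivers a representation $\vartheta = U/|x| + W$ with both $U$ and $W$ analytic at $i$. For the sharper statement $U = 1 + O(|x|^4)$ I would expand $U = \sum_{k\geq 0} U_k$ with $U_k$ homogeneous of degree $k$ in $x^\alpha$, substitute into $L(U/|x|+W)=0$, and match orders in $|x|$. The key observation is that, by \eqref{MetricExample}, $\breve{\bmh}$ deviates from the Euclidean metric only at cubic order; in particular $\breve{r}_{\alpha\beta}(i)=0$ and $r[\breve{\bmh}](i)=0$. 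The transport equations along the $\breve{\bmh}$-geodesics emanating from $i$ which determine $U_1$, $U_2$, $U_3$ therefore coincide with the flat-space ones, whose unique solution compatible with $U_0=1$ forces $U_1=U_2=U_3=0$. The value $W(i)$ is then \emph{defined} to be $m/2$, consistent with the identification of the ADM mass used in Section \ref{Section:CylinderSpatialInfinity}.

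Smooth dependence on $\breve{\bmh}$ is obtained by applying the implicit function theorem to the map $\breve{\bmh}\mapsto L^{-1}$, which is analytic, and in particular continuous, as a map from the open set of metrics with positive Yamabe number into the bounded operators on suitable weighted Sobolev spaces adapted to the puncture. The limit statement follows because, by Lemma \ref{Lemma:ConformalMetric} (iv), $\breve{\bmh}\to \omega^2\bmhbar$ in $C^2$ as $\varepsilon\to 0$; by the construction of $\omega$, the limit metric is conformal to flat Euclidean $\mathbb{R}^3$ via stereographic projection from $i$, and the corresponding Green's function normalised by $|x|\vartheta\to 1$ is precisely $1/|x|$, so that $\vartheta\to 1/|x|$ and $\vartheta^4\breve{\bmh}\to \bmdelta$.

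The main obstacle I anticipate is not existence, which is classical, but rather the order matching that produces $U=1+O(|x|^4)$. This requires a careful bookkeeping of how the cubic perturbation \eqref{MetricExample} enters $\Delta_{\breve{\bmh}}$ and $r[\breve{\bmh}]$ and propagates into the Hadamard transport hierarchy, together with the observation that in normal coordinates the expansion of the conformal Laplacian sees curvature only at quadratic order. Thanks to the linearity of \eqref{YamabeEquation} and the analyticity of $\breve{\bmh}$ on $\mathcal{B}_\varepsilon(i)$, this step, while lengthy, reduces to direct verification and poses no analytic difficulty.
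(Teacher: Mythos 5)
Your proposal is correct, and for the existence and expansion part it follows essentially the same route as the paper: the paper's proof is a two-line appeal to the positivity of the Yamabe invariant for solvability and to Friedrich's Section 2.2 (the standard puncture construction) for the Hadamard-type representation $\vartheta = U/|x| + W$ with $U = 1 + O(|x|^4)$ --- which is exactly the Green's-function-plus-parametrix argument you spell out in more detail. Where you genuinely diverge is the limit $\varepsilon \to 0$. You argue by continuity of $\breve{\bmh} \mapsto L^{-1}$ on weighted spaces together with the observation that the limiting metric $\omega^2 \bmhbar$ decompactifies to flat space, so the limiting Green's function is $1/|x|$. The paper instead observes that $m \to 0$ in the limit and invokes the rigidity part of the Schoen--Yau positive mass theorem to conclude that the limiting time symmetric, asymptotically Euclidean, scalar-flat data must be Minkowski data, i.e.\ $\vartheta^4\breve{\bmh} \to \bmdelta$. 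Your route is more self-contained and constructive but leans on knowing that $(\mathbb{S}^3\setminus\{i\}, \omega^2\bmhbar)$ decompactifies \emph{exactly} to $(\mathbb{R}^3,\bmdelta)$, which is not fully pinned down by the statement of Lemma \ref{Lemma:ConformalMetric} as given (only that $\omega$ is some smooth conformal factor); the paper's rigidity argument sidesteps this by using a big theorem that needs no information about $\omega$ beyond $m=0$. Either closes the gap; if you keep your version you should justify why the decompactified limit metric is flat rather than merely conformally flat --- which, amusingly, is most quickly done by the very positive-mass rigidity argument the paper uses.
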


\begin{proof}
The positivity of the Yamabe invariant of $\breve{\bmh}$ ensures the
solvability of equation \eqref{YamabeEquation} subject to the
condition \eqref{BoundaryConditionALT}. A detailed discussion of the
construction of the solution $\vartheta$ can be found in
\cite{Fri98a}, Section 2.2. Observe that in the limit
$\varepsilon\rightarrow 0$ one has that $m=0$ so from the rigidity
part of the mass positivity theorem \cite{SchYau79} the limiting initial data must be
data for the Minkowski spacetime. 
\end{proof}

\begin{remark}
{\em The limit $\varepsilon\rightarrow 0$ is independent of the choice
  of constants $\mathfrak{z}_{\eta\alpha\gamma\beta\delta}$. 
}
\end{remark}

\begin{remark}
{\em The decompactification obtained via the conformal factor
  $\vartheta$ gives rise to a metric $\tilde{\bmh}
  \equiv \vartheta^4 \breve{\bmh}$ defined over $\tilde{\mathcal{S}}\approx
    \mathbb{R}^3$ solving the time symmetric
  Hamiltonian constraint. By choosing $\varepsilon$ small enough one can make
    $\tilde{\bmh}$ suitably close to the 3-dimensional Euclidean metric. 
  }
\end{remark}

\begin{remark}
{\em The procedure describe in the previous paragraphs can be readily
  adapted to construct initial data sets for which $b_{ABCD}(i)=0$ but
  $D_{(EF} b_{ABCD)}(i)\neq 0$. 
}
\end{remark}

\begin{remark}
{\em A tedious calculation shows that if $b_{ABCD}(i)=0$ then the Bach
tensor of the metric $\tilde{\bmh}=\vartheta^4 \breve{\bmh}$,
expressed in terms of asymptotically Cartesian coordinates
$\underline{y}=(y^\alpha)$ in the asymptotic end of $\tilde{\mathcal{S}}$ satisfies
\begin{equation}
b_{\alpha\beta} =\frac{\mathfrak{b}_{\alpha\beta}}{|y|^4} +O(|y|^5),
\label{OurBachTensor}
\end{equation}
where $\mathfrak{b}_{\alpha\beta}$ are constants, not all of them zero.
}
\end{remark}

\subsection{Global existence of a candidate non-peeling spacetimes}
\label{Subsection:GlobalExistence}

In a next step we make use, for suitably small $\varepsilon>0$, of Bieri's
global existence and stability result of \cite{Bie10} to construct
the maximal globally hyperbolic development of the metric
$\tilde{\bmh}$ of Proposition \ref{Proposition:PunctureMethod}. The
resulting spacetime will be a candidate spacetime for exhibiting
non-peeling behaviour.  

\medskip
The analysis in \cite{Bie10} considers initial data sets
$(\tilde{\mathcal{S}},\tilde\bmh, \tilde\bmK)$,
$\tilde{\mathcal{S}}\approx \mathbb{R}^3$,  which outside a compact
set admit asymptotically Euclidean coordinates $\underline{x}=(x^\alpha)$ such
that:
\begin{subequations}
\begin{eqnarray}
&& \tilde{h}_{\alpha\beta} = \delta_{\alpha\beta} + O_{H^3} (|x|^{-1/2}), \label{CKAsymptotics1}\\
&& \tilde{K}_{\alpha\beta} = O_{H^2}(|x|^{-3/2}). \label{CKAsymptotics2}
\end{eqnarray}
\end{subequations}
% \begin{subequations}
% \begin{eqnarray}
% && \tilde{h}_{\alpha\beta} = \left( 1+ \frac{2m}{|x|}
%    \right)\delta_{\alpha\beta} + O_{H^4} (|x|^{-3/2}), \label{CKAsymptotics1}\\
% && \tilde{K}_{\alpha\beta} = O_{H^3}(|x|^{-5/2}). \label{CKAsymptotics2}
% \end{eqnarray}
% \end{subequations}
\footnote{One says that $u=O_{H^k}(r^{\alpha})$ if the weighted
  Sobolev norm $|| u ||_{H^{\alpha}_k}$ is finite ---see Appendix A in
  \cite{BieChr16} for detailed definition. In particular, if $k>l +
  3/2$ then $|\partial^l u| = o(r^\alpha)$. }
The metric $\tilde\bmh$ obtained in Proposition \ref{Proposition:PunctureMethod} can
be readily seen to satisfy the above asymptotic conditions ---in
particular $\tilde{K}_{\alpha\beta}=0$. 

In what follows $x_\odot\in \tilde{\mathcal{S}}$ is an arbitrary
origin in the (physical) hypersurface $\tilde{\mathcal{S}}$. Let
$d_\odot$ denote the distance function from the origin $x_\odot$. The
global smallness assumption in Bieri's global existence result makes
use of the quantity
\begin{eqnarray*}
&& Q( x_\odot,a) \equiv a^{-1} \int_{\mathbb{R}^3} \Big( |K|^2 + (a^2
   +d_\odot^2)|D\bmK|^2 + (a^2
   +d_\odot^2)^2 |D^2\bmK|^2  \\
&& \hspace{4cm} +(a^2 + d_\odot^2)  |\mbox{Ric}[\tilde\bmh]|^2 + (a^2 +
   d_\odot^2)^2 |D\mbox{Ric}[\tilde\bmh]|^2\Big)\mbox{d}\mu.
\end{eqnarray*}
with $a\in \mathbb{R}^+$ and where $\mbox{Ric}[\tilde\bmh]$ denotes
the Ricci tensor of the metric $\tilde{\bmh}$. The global existence and stability result in
\cite{Bie10} can be phrased as ---see \cite{BieChr16}:

% % % The global smallness assumption in Christodoulou \&
% % % Klainerman's global existence result make use of the quantity
% % \begin{eqnarray*}
% % && Q( x_\odot,b) \equiv \sup_{\mathbb{R}^3} \bigg( b^{-2}(d_\odot^2 + b^2)|
% %   \mbox{Ric}[\tilde\bmh]|^2\bigg) \\
% % && \hspace{3cm} + b^{-3} \int_{\mathbb{R}^3} \left(
% %    \sum_{l=0}^3(d_\odot^2+b^2)|D^l \bmK|^2 + \sum_{l=0}^1 (d_\odot^2+b^2)^{l+3}
% %    |D^l \mbox{Cotton}[\tilde\bmh]|^2 \right),
% % \end{eqnarray*}
% for $b\in \mathbb{R}^+$ and where $\mbox{Ric}[\tilde\bmh]$ and
% $\mbox{Cotton}[\tilde\bmh]$ denote, respectively, the Ricci and Cotton
% tensors of the metric $\tilde{\bmh}$. The global existence and stability result in
% \cite{ChrKla93} can be phrased as ---see \cite{BieChr16}:
\begin{theorem}
\label{Theorem:CK}
There exists $\epsilon>0$ such that for all smooth vacuum initial
data sets with \eqref{CKAsymptotics1}-\eqref{CKAsymptotics2}, $K_\alpha{}^\alpha=0$ and
satisfying 
\[
\inf_{x_\odot\in\mathbb{R}^3, \; a>0} Q(x_\odot,a) <\epsilon,
\]
the associated maximal globally hyperbolic vacuum development is
geodesically complete, with the metric asymptotically approaching the
Minkowski metric in all directions.  
\end{theorem}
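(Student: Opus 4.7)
The plan is to follow the overall strategy of Christodoulou \& Klainerman \cite{ChrKla93}, but with the weaker asymptotic and smallness assumptions forcing a more delicate choice of norms and weights, as in Bieri's original argument. The proof is by a continuity/bootstrap argument on the maximal globally hyperbolic development $\tilde{\mathcal{M}}$ of the initial data $(\tilde{\mathcal{S}},\tilde\bmh,\tilde\bmK)$: one assumes that on a spacetime slab foliated by maximal spacelike hypersurfaces the geometry remains close to Minkowskian in suitable weighted norms, and then one shows that the bounds can be strictly improved, which by a continuation criterion for the vacuum Einstein equations (no-go for focal points of the optical foliation, boundedness of curvature in $L^\infty$, etc.) rules out a finite-time breakdown.

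The main geometric ingredients are a \emph{maximal time function} $t$ together with an \emph{outgoing optical function} $u$ solving $\tilde{g}^{ab}\partial_a u\partial_b u =0$ and normalised at past null infinity, giving a double-null-like foliation with leaves $S_{t,u}$. I would attach to this foliation all the standard optical quantities (lapse, second fundamental forms, torsion, connection coefficients, conformal factor of the induced metric on $S_{t,u}$) and express the \emph{Bel--Robinson tensor} $Q[W]_{abcd}$ built from the spacetime Weyl tensor $W$ in this frame. The main analytic device is a hierarchy of weighted energies obtained by contracting $Q$ and its Lie derivatives $\hat{\mathcal{L}}_X Q$, $\hat{\mathcal{L}}_X\hat{\mathcal{L}}_Y Q$ (for the approximate Killing fields $T=\partial_t$, the scaling $S$, and a set of approximate rotations $\Omega$) with an appropriate timelike multiplier, integrated over constant-$t$ slices. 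The weights attached to the null components $\alpha,\beta,\rho,\sigma,\underline\beta,\underline\alpha$ of $W$ must be taken one power of $r$ weaker than in \cite{ChrKla93}, consistent with the borderline decay in \eqref{NonPeelingCK}, so that the initial value of the total energy is controlled by precisely the quantity $\inf_{x_\odot,a} Q(x_\odot,a)$ appearing in the hypothesis.

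The heart of the argument is then a pair of estimates run simultaneously. On the one hand, the Bianchi identities in the null frame give transport and elliptic equations for the Weyl components which, combined with integration by parts against the multiplier, yield a Grönwall-type inequality for the weighted energies provided the connection coefficients and deformation tensors of $T,S,\Omega$ are controlled. On the other hand, the structure equations of the double-null foliation give transport equations for those very connection coefficients whose sources are quadratic expressions in the Weyl components; these are integrated along the null generators using the energy bounds, together with Sobolev embeddings on $S_{t,u}$ adapted to the weights. Closing the loop requires that the constants produced at each step absorb into the bootstrap smallness, which is possible only because the initial smallness $\epsilon$ can be taken arbitrarily small and because the nonlinearities have the well-known null structure.

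The main obstacle, and the point where Bieri's work genuinely goes beyond \cite{ChrKla93}, is that the weaker fall-off \eqref{CKAsymptotics1}--\eqref{CKAsymptotics2} means several Weyl components (notably $\alpha$ and $\beta$) are no longer in $L^2$ against the Christodoulou--Klainerman weights, so one cannot use the Morawetz vector $K=u^2 L+\underline{u}^2\underline{L}$ in the same way. One must instead build the multipliers and rotation vector fields from approximate Killing fields whose deformation tensors are only borderline integrable, and establish borderline log-type estimates that are still compatible with a strict improvement of the bootstrap. Assuming these estimates close, geodesic completeness follows from the uniform boundedness of the lapse and second fundamental form of the maximal foliation and from the fact that the optical function $u$ ranges over all of $\mathbb{R}$; the Minkowskian limit in all directions is a consequence of the decay of the total energy in $t$.
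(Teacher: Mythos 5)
You should first be aware that the paper does not prove this statement at all: Theorem~\ref{Theorem:CK} is imported verbatim from Bieri's work \cite{Bie10} (in the formulation of \cite{BieChr16}), and the authors' ``proof'' is the citation. So there is no internal argument to compare yours against; the only fair comparison is with Bieri's actual proof, which occupies several hundred pages.

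Measured against that, your proposal is a reasonable road map of the Christodoulou--Klainerman strategy, but it is not a proof, and the gap is not cosmetic. The entire mathematical content of the theorem lives in the step you defer with ``assuming these estimates close'': with the weakened fall-off \eqref{CKAsymptotics1}--\eqref{CKAsymptotics2} the error integrals produced by the deformation tensors of the multiplier and commutator vector fields are exactly borderline, and whether the bootstrap closes is precisely the question, not a technicality one may postpone. A proof would have to exhibit the specific weighted energies, show that the initial energies are finite and controlled by $Q(x_\odot,a)$, and verify term by term that every error integral is either absorbable or carries a smallness factor. Your sketch also drifts toward the original CK architecture rather than Bieri's: Bieri works with one derivative less of the curvature and, crucially, \emph{dispenses with the rotational vector fields altogether}, using only $T$, $S$ and the Morawetz multiplier $\bar K$ (which she retains, contrary to your suggestion that it must be abandoned); the elimination of the rotations and the associated second-fundamental-form estimates is one of the genuinely new structural points of her argument. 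As written, your proposal would be an acceptable introduction to a proof, but it does not establish the theorem, and the paper itself offers no proof for you to have reconstructed.
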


Now, as $\tilde{\bmh}\rightarrow \bmdelta$ for $\varepsilon\rightarrow
0$,  and $\inf_{x_\odot\in\mathbb{R}^3, \; a>0} Q(x_\odot,a)=0$
for time symmetric Minkowski data, one can choose the $\varepsilon>0$
in Proposition \ref{Proposition:PunctureMethod} so
that $\inf_{x_\odot\in\mathbb{R}^3, \; a>0} Q(x_\odot,a)$ is small
enough for Theorem \ref{Theorem:CK} to apply. Summarising, one has the
following:

\begin{proposition}
\label{Proposition:NonpeelingExample}
For suitably small $\varepsilon>0$, the maximal globally hyperbolic
development, $(\tilde{\mathcal{M}},\tilde\bmg)$, of the time symmetric initial data set
$(\tilde{\mathcal{S}},\tilde\bmh)$ is geodesically complete with
$\tilde\bmg$ asymptotically approaching the Minkowski metric in all directions. 
\end{proposition}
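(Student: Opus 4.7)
The plan is to combine the constructive result of Proposition \ref{Proposition:PunctureMethod} with Bieri's global existence result, Theorem \ref{Theorem:CK}, by verifying that the initial data set $(\tilde{\mathcal{S}}, \tilde{\bmh}, \tilde{\bmK})$ falls, for sufficiently small $\varepsilon$, into the regime of applicability of Theorem \ref{Theorem:CK}.

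First, I would check the asymptotic and constraint hypotheses of Theorem \ref{Theorem:CK}. Since the data is time symmetric, one has $\tilde{\bmK} \equiv 0$, which trivially satisfies the maximal slicing condition $K_\alpha{}^\alpha = 0$ and the decay \eqref{CKAsymptotics2}. For the metric decay \eqref{CKAsymptotics1}, I would use the explicit asymptotic expansion of $\vartheta = |x|^{-1} + W + \ldots$ from Proposition \ref{Proposition:PunctureMethod} to show that, after passing to the asymptotically Cartesian coordinates $\underline{y}$ on the physical manifold $\tilde{\mathcal{S}} \approx \mathbb{R}^3$, the physical metric $\tilde{\bmh} = \vartheta^4 \breve{\bmh}$ satisfies $\tilde{h}_{\alpha\beta} - \delta_{\alpha\beta} = O(|y|^{-1})$ with corresponding decay of derivatives. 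Because $\breve{\bmh}$ is smooth on $\mathcal{S}$ and coincides with $\omega^2 \bmhbar$ outside $\mathcal{B}_{2\varepsilon}(i)$, the conformal decompactification yields the required weighted Sobolev decay \eqref{CKAsymptotics1} with plenty of room to spare.

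Second, I would show that the global smallness quantity $Q(x_\odot, a)$ can be made arbitrarily small by choosing $\varepsilon$ small. Since $\tilde{\bmK} = 0$, the contributions from $\bmK$, $D\bmK$ and $D^2\bmK$ in the definition of $Q$ vanish identically, so
\[
Q(x_\odot, a) = a^{-1}\int_{\mathbb{R}^3} \Big( (a^2 + d_\odot^2)|\mathrm{Ric}[\tilde\bmh]|^2 + (a^2 + d_\odot^2)^2 |D\mathrm{Ric}[\tilde\bmh]|^2 \Big)\, \mathrm{d}\mu.
\]
By Proposition \ref{Proposition:PunctureMethod}, $\tilde{\bmh} \to \bmdelta$ in $C^2$ as $\varepsilon \to 0$; in fact the convergence improves to higher Sobolev regularity because $\breve{\bmh} \to \omega^2 \bmhbar$ is smooth outside a shrinking ball and $\vartheta \to |x|^{-1}$. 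Consequently $\mathrm{Ric}[\tilde\bmh]$ and $D\mathrm{Ric}[\tilde\bmh]$ tend to zero in the appropriate weighted norms. Since these quantities vanish identically for flat data, a continuity/limit argument with the explicit dependence of $\vartheta$ on $\breve{\bmh}$ gives $\inf_{x_\odot,a} Q(x_\odot, a) \to 0$ as $\varepsilon \to 0$. Choosing $\varepsilon$ small enough that this infimum lies below the threshold $\epsilon$ of Theorem \ref{Theorem:CK}, Bieri's theorem applies directly and yields the claimed geodesically complete maximal globally hyperbolic development with Minkowskian asymptotic behaviour.

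The main obstacle is the second step, namely justifying the limit $Q(x_\odot, a) \to 0$ with sufficient uniformity in the parameters $(x_\odot, a)$, because the weights $(a^2 + d_\odot^2)^k$ interact nontrivially with the support of $\mathrm{Ric}[\tilde{\bmh}]$ (which is concentrated near the image of $\mathcal{B}_{2\varepsilon}(i)$ in the physical end). One must ensure that both the pointwise bounds on the Ricci curvature and its derivatives, and the decay of these quantities in the asymptotic region, scale favourably with $\varepsilon$; this relies on the smooth dependence of the solution $\vartheta$ to the Yamabe equation on the conformal metric $\breve{\bmh}$, as stated in Proposition \ref{Proposition:PunctureMethod}. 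Once this limit is established, the rest is a direct invocation of Theorem \ref{Theorem:CK}.
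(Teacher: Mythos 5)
Your proposal follows essentially the same route as the paper: verify the asymptotic hypotheses of Theorem \ref{Theorem:CK} (with $\tilde{\bmK}=0$ handling the second fall-off condition and the maximal slicing requirement), then use the convergence $\tilde{\bmh}\rightarrow\bmdelta$ as $\varepsilon\rightarrow 0$ from Proposition \ref{Proposition:PunctureMethod} to make $\inf_{x_\odot,a}Q(x_\odot,a)$ small enough to invoke Bieri's theorem. The only difference is that you explicitly flag the uniformity of the limit $Q\rightarrow 0$ in $(x_\odot,a)$ as the delicate point, which the paper passes over silently; this is a fair observation but does not change the argument.
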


Based on the results of Section
\ref{Section:PolyhomogeneousExpansions} we have the following:

\begin{conjecture}
\label{Conjecture:Polyhomogeneous}
The spacetime $(\tilde{\mathcal{M}},\tilde\bmg)$ obtained in
Proposition \ref{Proposition:NonpeelingExample} has polyhomogeneous
(non-peeling) asymptotics as in Theorem \ref{Theorem:Nonpeeling1}. 
\end{conjecture}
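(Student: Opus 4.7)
The plan is to combine the three main ingredients already assembled in the paper: the construction of the initial data of Section \ref{Section:Candidates} (designed to violate Friedrich's regularity condition at order $p=0$), the global existence result of Proposition \ref{Proposition:NonpeelingExample}, and the asymptotic analysis of Theorem \ref{Theorem:Nonpeeling1}. The missing link --- and the real content of the conjecture --- is to justify Assumption \ref{Assumption:FormalExpansionsAreNotFormal} for this particular family of spacetimes, i.e.\ to show that the formal F-expansions near the cylinder at spatial infinity are actually the leading terms of the true solution to the extended conformal Einstein field equations obtained by Bieri's theorem.

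First I would verify that the initial conformal metric $\breve{\bmh}$ produced by Lemma \ref{Lemma:ConformalMetric} satisfies the hypotheses of Assumption \ref{Assumption:Initial data} near the point $i$, and that the constants $\mathfrak{z}_{\eta\alpha\gamma\beta\delta}$ can be tuned so that the essential component $\mathfrak{b}_0$ of the Bach spinor at $i$ is non-zero (in particular, $b_{ij}(i)\neq 0$). This places the physical initial data squarely in the generic class of Theorem \ref{Theorem:Nonpeeling1}. Next I would invoke Proposition \ref{Proposition:NonpeelingExample} (hence Bieri's Theorem \ref{Theorem:CK}) to obtain a geodesically complete maximal globally hyperbolic development $(\tilde{\mathcal{M}},\tilde\bmg)$ asymptotically approaching Minkowski in all directions. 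This ensures that a future null infinity $\mathscr{I}^+$ exists in the sense required to set up the NP gauge of Section \ref{Section:NPGauge} in a neighbourhood of spatial infinity.

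The third and decisive step is to show that the components of the rescaled Weyl tensor, the frame, and the connection of the Bieri spacetime, when pulled back to the manifold $\mathcal{M}_{a,\kappa}$ of Section \ref{Section:ManifoldMa} via the conformal Gaussian gauge, admit expansions in $\rho$ whose leading coefficients coincide with the F-expansions $\bmphi^{[j]}$, $\bme^{[j]}$ computed from the transport equations \eqref{BianchiTransportEquation}, with a remainder satisfying ${\bm R}_3 \in C^\infty(\tilde{\mathcal{M}}_{a,\kappa}) \cap C^0(\mathcal{M}_{a,\kappa})$ and ${\bm R}_3 = O(\rho^3)$, up to and including the critical sets $\mathcal{I}^\pm$. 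Once this is established, Theorem \ref{Theorem:Nonpeeling1} applies verbatim and yields the desired polyhomogeneous decay $\tilde\psi_0,\tilde\psi_1,\tilde\psi_2 = O(\tilde{r}^{-3}\ln\tilde{r})$, $\tilde\psi_3 = O(\tilde{r}^{-2})$, $\tilde\psi_4 = O(\tilde{r}^{-1})$ near $\mathscr{I}^+$.

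The hard part will be precisely this third step. Bieri's theorem is proved by vector field methods in the physical spacetime and provides $L^\infty$ decay estimates but no a priori control on the behaviour of the rescaled conformal fields uniformly up to the critical sets $\mathcal{I}^\pm$, where the conformal evolution equations \eqref{CEE1}-\eqref{CEE4} degenerate (loss of rank in the time-derivative matrix). Controlling the remainder ${\bm R}_3$ there is what Friedrich has long identified as the main open problem of spatial infinity, and a direct attack would require deriving weighted energy estimates for the extended conformal field equations that remain finite across $\mathcal{I}^\pm$ --- a natural route being the generalisation of the spin-2 estimates of \cite{Fri03b} to the full nonlinear system, as already flagged in Remark 3 of the introduction. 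Assuming such estimates could be produced for data $C^2$-close to Minkowski, the polyhomogeneous asymptotics would follow immediately from the formal computations already in place. In the absence of such estimates the statement genuinely remains a conjecture, and this is the obstacle that prevents us from upgrading it to a theorem.
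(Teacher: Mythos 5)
Your proposal correctly recognises that the statement is a genuine conjecture: the paper offers no proof, and your outline (tune the data of Lemma \ref{Lemma:ConformalMetric} so that $\mathfrak{b}_0\neq 0$, invoke Proposition \ref{Proposition:NonpeelingExample} via Bieri's theorem for global existence, then apply Theorem \ref{Theorem:Nonpeeling1}) is exactly the chain of reasoning the authors use to motivate it. You also pinpoint the same obstruction the paper identifies in its concluding remarks --- namely that Assumption \ref{Assumption:FormalExpansionsAreNotFormal} is unverified because no estimates control the remainder of the F-expansions up to the degenerate critical sets $\mathcal{I}^\pm$, the proposed route being a generalisation of the spin-2 estimates of \cite{Fri03b} --- so your assessment that the statement cannot currently be upgraded to a theorem is precisely the paper's own position.
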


\begin{remark}
{\em The construction in the previous paragraphs can be readily adapted to
  obtain global spacetimes with polyhomogeneous (non-peeling) asymptotics as in
  Theorem \ref{Theorem:Nonpeeling2} part (i) or the peeling
  asymptotics of part (ii).} 
\end{remark}

\begin{remark}
\label{Remark:CannotUseCK}
{\em It should be pointed out that Christodoulou \& Klainerman's
global existence result in \cite{ChrKla93} cannot be applied to the
time symmetric initial data $(\tilde{\mathcal{S}},\tilde{\bmh})$ as
the quantity measuring the global smallness in this case requires $
d_\odot^6|\mbox{Bach}[\tilde\bmh]|^2$ to be integrable over
$\mathbb{R}^3$ ---this is not satisfied by a Bach tensor of the form
given by \eqref{OurBachTensor}.}
\end{remark}

\section{Concluding remarks}
\label{Section:Conclusions}
The main analysis of this article concerning the potential non-peeling
behaviour of the development of time symmetric initial data sets can,
in principle, be extended to more general classes of data for which
the extrinsic curvature is non-vanishing. Preliminary computations
and the general structure of the conformal Einstein field equations
suggest that the main picture will remain the same
---in particular, the generic non-peeling behaviour seems to be that
of Theorem \ref{Theorem:Nonpeeling1}. Its of interest to remark
that an analysis of the peeling properties of the development of
(conformally flat) Bowen-York initial data sets has been given in
\cite{Val07a} ---notice however, that because of the conformal flatness of
this family of initial data, the regularity condition
\ref{FriedrichRegularityCondition} is automatically satisfied and the
spacetimes have much better peeling properties. 

%%%%%%%%%%%%%%%%%%% Revised up to here! (2.6.2017) %%%%%%%%%%%%%%%%%%%%%%%%%%%

\medskip
As already mentioned in the introduction, the key open problem in
the analysis of the \emph{problem of spatial infinity} through the
cylinder at spatial infinity is to establish the relation between the
F-expansions obtained by solving the transport equations along
$\mathcal{I}$ and actual solutions to the conformal Einstein field
equations. More precisely, to assert that the F-expansions correspond
to a Taylor-like polynomial of a solution one needs to be able to
control the remainder of the expansion. A type of estimates that, in
principle, should provide the desired control have been constructed in
\cite{Fri03b} for the (linear) spin-2 massless field on the Minkowski
spacetime. Adapting this construction to the case of the full
conformal Einstein field equations is a challenging endeavour which
requires a much bigger insight into the structure of the equations
than the one it is currently available ---however, some preliminary
investigations have been carried out in \cite{Val09a}. Further
investigations along these lines will be pursued elsewhere. Needless
to say that the development of techniques to transform, say,
Conjecture \ref{Conjecture:Polyhomogeneous} into rigorous statement
would constitute a major milestone in mathematical General Relativity
and the conclusion of a research programme started in the early
1960's.

\section*{Acknowledgements}
We thank Piotr Chrusciel for encouraging us to
revisit the problem discussed in this article and for suggesting an
outline of the arguments in Section \ref{Subsection:GlobalExistence}.
We thank the Gravitational Physics Group of the University of Vienna
for its hospitality. The computation of the transformation formulae
for the Weyl tensor have been obtained using the suite {\tt xAct} for
tensorial and spinorial computations in {\tt Mathematica} ---see
\cite{xAct}. 

\appendix
\section{Expressions in the gauge $\kappa=\omega$}
\label{Appendix:Expansions}

In this appendix we present the explicit solutions to the transport
equations at $\mathcal{I}$ for the orders $p=0, \, 1, \, 2$. In view of
the applications in this article we restrict the attention to the
coefficients of the frame and the components of the rescaled Weyl spinor.

\subsection{Expansions for the frame coefficients}

In what follows let 
\[
(c^\mu_{\bmA\bmB})^{[p]} \equiv (\partial^p_\rho c^\mu_{\bmA\bmB})|_{\mathcal{I}},
\]
and 
\[
x_{\bmA\bmB} \equiv \sqrt{2} \delta_{(\bmA}{}^\bmzero
\delta_{\bmB)}{}^\bmone, \qquad y_{\bmA\bmB} \equiv
-\frac{1}{\sqrt{2}} \delta_\bmA{}^\bmone \delta_\bmB{}^\bmone,
\qquad z_{\bmA\bmB} \equiv \frac{1}{\sqrt{2}}\delta_\bmA{}^\bmzero \delta_\bmB{}^\bmone.
\]

\subsubsection{Order $p=0$}
\begin{eqnarray*}
&& (c^0_{\bmA\bmB})^0 = -\tau x_{\bmA\bmB}, \\
&& (c^1_{\bmA\bmB})^0 = 0, \\
&& (c^+_{\bmA\bmB})^0 = z_{\bmA\bmB}, \\
&& (c^-_{\bmA\bmB})^0 = y_{\bmA\bmB}.
\end{eqnarray*}

\subsubsection{Order $p=1$}
\begin{eqnarray*}
&& (c^0_{\bmA\bmB})^1 =m (-\tau+\tfrac{4}{3}\tau^3 -\tfrac{1}{3}\tau^5)x_{\bmA\bmB}, \\
&& (c^1_{\bmA\bmB})^1 = x_{\bmA\bmB}, \\
&& (c^+_{\bmA\bmB})^1 = m(\tfrac{1}{2}+\tau^2 -\tfrac{1}{6}\tau^4)z_{\bmA\bmB}, \\
&& (c^-_{\bmA\bmB})^1 = m(\tfrac{1}{2}+\tau^2 -\tfrac{1}{6}\tau^4)y_{\bmA\bmB}.
\end{eqnarray*}

\subsubsection{Order $p=2$}
\begin{eqnarray*}
&& (c^0_{\bmA\bmB})^2 = \big( m^2( 4
   \tau^3-5\tau^5+\tfrac{8}{7}\tau^7-\tfrac{1}{7}\tau^9) + W_1
   (-12\tau+16\tau^3-\tfrac{26}{5}\tau^5+\tfrac{6}{5}\tau^7) \big)
   x_{\bmA\bmB}\\
&& \hspace{3cm}  +
   (4\tau-8\tau^3+\tfrac{7}{5}\tau^5+\tfrac{3}{5}\tau^7)(X_-W_1 y_{\bmA\bmB} + X_+W_1 z_{\bmA\bmB}), \\
&& (c^1_{\bmA\bmB})^2 = m( 1-4\tau^2+\tfrac{2}{3}\tau^4) x_{\bmA\bmB}, \\
&& (c^+_{\bmA\bmB})^2 = \big( m^2(
   \tau^2+\tfrac{13}{6}\tau^4-\tfrac{8}{9}\tau^6+\tfrac{1}{14}\tau^8)
   +W_1( 4+12\tau^2-3\tau^4+\tfrac{3}{5}\tau^6)  \big) z_{\bmA\bmB} \\
&& \hspace{3cm}- (6\tau^2+\tfrac{1}{2}\tau^4-\tfrac{3}{10}\tau^6)
   X_-W_1 x_{\bmA\bmB}, \\
&& (c^-_{\bmA\bmB})^2 =\big( m^2(
   \tau^2+\tfrac{13}{6}\tau^4-\tfrac{8}{9}\tau^6+\tfrac{1}{14}\tau^8)
   +W_1(4+12\tau^2-3\tau^4+\tfrac{3}{5}\tau^6)  \big) z_{\bmA\bmB} \\
&& \hspace{3cm} -(6\tau^2+\tfrac{1}{2}\tau^4-\tfrac{3}{10}\tau^6)
   X_+W_1 x_{\bmA\bmB}.
\end{eqnarray*}

\subsection{Expansions of the rescaled Weyl tensor}
% \mnotex{JAVK $\mapsto$EGG (2.6.2017): harmonise the conventions with
%   the main text.\\
% EGG: taking into account these  observations  we have
% $\iota_{A}\epsilon^{A}{}_{\bmA}=-\epsilon_{\bmA}{}^{\bm0}$ $o_{A}\epsilon^{A}{}_{\bmA}=\epsilon_{\bmA}{}^{\bm1}$
%  then using the expression for $\phi_{ABCD}$ in abstract index notation given in the main text we get the expression for $\phi_{\bmA\bmB\bmC\bmD}$.}
Consistent with the notation introduced above let 
\[
(\phi_{\bmA\bmB\bmC\bmD})^{[p]} \equiv (\partial^p_\rho \phi_{\bmA\bmB\bmC\bmD})|_{\mathcal{I}}.
\]
Taking into account that 
\begin{equation*}
\epsilon^{\bm0}{}_{A}=-\iota_{A}, \qquad \epsilon^{\bm1}{}_{A}=o_{A},
\end{equation*}
and 
\begin{equation*}
\epsilon_{\bmA}{}^{A}\epsilon^{\bmB}{}_{A}=\delta_{\bmA}{}^{\bmB}.
\end{equation*}
the components $\phi_{\bmA\bmB\bmC\bmD}$ can be expressed as
\begin{eqnarray*}
&& \phi_{\bmA\bmB\bmC\bmD}=\phi_{0}\delta_{\bmA}{}^{\bm0}\delta_{\bmB}{}^{\bm0}\delta_{\bmC}{}^{\bm0}\delta_{\bmD}{}^{\bm0} + 4 \phi_{1}\delta_{(\bmA}{}^{\bm0}\delta_{\bmB}{}^{\bm0}\delta_{\bmC}{}^{\bm0}\delta_{\bmD)}{}^{\bm1} 
\\ && \hspace{3cm}+ 6
 \phi_{2}\delta_{(\bmA}{}^{\bm0}\delta_{\bmB}{}^{\bm0}\delta_{\bmC}{}^{\bm1}\delta_{\bmD)}{}^{\bm1}  + 4  \phi_{3}\delta_{(\bmA}{}^{\bm0}\delta_{\bmB}{}^{\bm1}\delta_{\bmC}{}^{\bm1}\delta_{\bmD)}{}^{\bm1} + \phi_{4}\delta_{\bmA}{}^{\bm1}\delta_{\bmB}{}^{\bm1}\delta_{\bmC}{}^{\bm1}\delta_{\bmD}{}^{\bm1}.
\end{eqnarray*}
where
\begin{eqnarray*}
&& \phi_{0} \equiv o^{A}o^{B}o^{C}o^{D}\phi_{ABCD},  \\
&& \phi_{1} \equiv   \iota^{A}o^{B}o^{C}o^{D}\phi_{ABCD}, \\ 
&& \phi_{2} \equiv \iota^{A}\iota^{B}o^{C}o^{D}\phi_{ABCD},  \\  
&& \phi_{3}\equiv \iota^{A}\iota^{B}\iota^{C}o^{D}\phi_{ABCD},\\
&& \phi_{4} \equiv  \iota^{A}\iota^{B}\iota^{C}\iota^{D}\phi_{ABCD}. 
\end{eqnarray*}

\subsubsection{Order $p=0$}
% \mnotex{EGG: changed using the relation between old and new notation for components $\phi$. Check old notes to see relation. We only have to multiply by $1/6$ whenever we see $\phi_{2}$, by $1/4$ whenever we see $\phi_{1}$ or $\phi_{3}$ in the old version. }
\[
\phi^{(0)}_{\bmA\bmB\bmC\bmD} = -m \delta_{(\bmA}{}^{\bm0}\delta_{\bmB}{}^{\bm0}\delta_{\bmC}{}^{\bm1}\delta_{\bmD)}{}^{\bm1}.
\]

\subsubsection{Order $p=1$}
\begin{multline*}
 \phi^{(1)}_{\bmA\bmB\bmC\bmD} = -\frac{1}{6}\big(
   36W_1(1-\tau^2)+m^2(9+18\tau^2-3\tau^4) \big)
   \delta_{(\bmA}{}^{\bm0}\delta_{\bmB}{}^{\bm0}\delta_{\bmC}{}^{\bm1}\delta_{\bmD)}{}^{\bm1}
 \\ -3 (1-\tau)^2X_+W_{1}
   \delta_{(\bmA}{}^{\bm0}\delta_{\bmB}{}^{\bm0}\delta_{\bmC}{}^{\bm0}\delta_{\bmD)}{}^{\bm1}
 +3(1+\tau)^2 X_-W_1
   \delta_{(\bmA}{}^{\bm0}\delta_{\bmB}{}^{\bm1}\delta_{\bmC}{}^{\bm1}\delta_{\bmD)}{}^{\bm1}. 
\end{multline*}

\subsubsection{Order $p=2$}

\begin{eqnarray*}
&&\phi^{(2)}_{\bmA\bmB\bmC\bmD}=\phi_{0}^{(2)}\delta_{\bmA}{}^{\bm0}\delta_{\bmB}{}^{\bm0}\delta_{\bmC}{}^{\bm0}\delta_{\bmD}{}^{\bm0} + 4 \phi_{1}^{(2)}\delta_{(\bmA}{}^{\bm0}\delta_{\bmB}{}^{\bm0}\delta_{\bmC}{}^{\bm0}\delta_{\bmD)}{}^{\bm1} 
\\ 
&&\hspace{3cm}+ 6
 \phi_{2}^{(2)}\delta_{(\bmA}{}^{\bm0}\delta_{\bmB}{}^{\bm0}\delta_{\bmC}{}^{\bm1}\delta_{\bmD)}{}^{\bm1} + 4  \phi_{3}^{(2)}\delta_{(\bmA}{}^{\bm0}\delta_{\bmB}{}^{\bm1}\delta_{\bmC}{}^{\bm1}\delta_{\bmD)}{}^{\bm1} + \phi_{4}^{(2)}\delta_{\bmA}{}^{\bm1}\delta_{\bmB}{}^{\bm1}\delta_{\bmC}{}^{\bm1}\delta_{\bmD}{}^{\bm1}
\end{eqnarray*}
where
\begin{eqnarray*}
&& \phi_0^{(2)} =  -2X_+^2 W_2 (1-\tau)^4 +
   \frac{\sqrt{2}}{3}b_0 f_0(\tau), \\
&& \phi_1^{(2)} = \frac{1}{4}a_1(\tau) m X_+W_1 -2 X_+W_2 (1+\tau)(1-\tau)^3+
   \frac{\sqrt{2}}{6}b_1 f_1(\tau),\\
&& \phi_2^{(2)} = \frac{1}{6}c_2(\tau) m^3+ \frac{1}{6}a_2(\tau) m W_1 -4W_2
   (1+\tau)^2(1-\tau)^2+ \frac{\sqrt{3}}{9} b_2 f_2(\tau),\\
&& \phi_3^{(2)} = \frac{1}{4}a_3(\tau) m X_-W_1  +2 X_-W_2
   (1+\tau)^3(1-\tau)+\frac{\sqrt{2}}{6}b_3 f_3(\tau),\\
&& \phi_4^{(2)} = -2X_-^2 W_2 (1+\tau)^4 +\frac{\sqrt{2}}{3}b_4 f_4(\tau),\\
\end{eqnarray*}
where
\begin{eqnarray*}
&& c_2(\tau) = -9-54\tau^2-69\tau^4+28\tau^6 -\frac{16}{7}\tau^8,\\
&& a_1(\tau) =-36+96\tau -132\tau^2 +184\tau^3 -82\tau^4 -\frac{164}{5}\tau^5+\frac{74}{5}\tau^6,\\
&& a_2(\tau) = -180-252\tau^2+372\tau^4-\frac{276}{5}\tau^6,\\
&& a_3(\tau) = -36-96\tau -132\tau^2 -184\tau^3 -82\tau^4 +\frac{164}{5}\tau^5+\frac{74}{5}\tau^6,\\
&& f_0(\tau) = 2(1-\tau)^4 K(-\tau), \\
&& f_1(\tau) = 4(1-\tau)^3(1+\tau) K(-\tau) - \frac{3}{1-\tau},\\
&& f_2(\tau) = \sqrt{6}\left(
   \frac{2-\tau}{(1+\tau)^2}-2(1-\tau)^2(1+\tau)^2K(\tau) \right),\\
&& f_3(\tau) = -4(1+\tau)(1-\tau)^3 K(\tau) + \frac{3}{1+\tau}, \\
&& f_4(\tau) = -2(1+\tau)^4K(\tau),
\end{eqnarray*}
and
\begin{eqnarray*}
&& K(\tau) = 1- 3\int_0^\tau \frac{\mbox{d}s}{(1-s)(1+s)^5}\\
&& \phantom{K(\tau)} = \frac{1}{32}\left( 3\ln(1-\tau) -3\ln(1+\tau)
   + \frac{32+38\tau+24\tau^2+6\tau^3}{(1+\tau)^4} \right)
\end{eqnarray*}
while
\[
b_k \equiv b_{(\bmA\bmB\bmC\bmD)_k}, \qquad k=0,\,1,\,\ldots 4,
\]
are the independent components of the Bach spinor evaluated at $i$. 

% QM 
%\bibliographystyle{reporthack}
% Ludovica
%\bibliographystyle{/Users/Juan/Documents/tex/reporthack}

% Path in QM 
%\bibliography{ThesisGRbib}
% Path in Ludovica
%\bibliography{/Users/Juan/Documents/tex/Newgrbib}

\begin{thebibliography}{10}

\bibitem{AceVal11}
A.~{Ace\~{n}a} \& J.~A. {Valiente Kroon},
\newblock {\em Conformal extensions for stationary spacetimes},
\newblock Class. Quantum Grav. {\bf 28}, 225023 (2011).

\bibitem{AndChr93}
L.~Andersson \& P.~T. Chru\'sciel,
\newblock {\em Hyperboloidal Cauchy data for vacuum Einstein equations and
  obstructions to smoothness of null infinity},
\newblock Phys. Rev. Lett. {\bf 70}, 2829 (1993).

\bibitem{AndChr94}
L.~Andersson \& P.~T. Chru\'{s}ciel,
\newblock {\em On ``hyperboloidal'' {Cauchy} data for vacuum {Einstein}
  equations and obstructions to smoothness of scri},
\newblock Comm. Math. Phys. {\bf 161}, 533 (1994).

\bibitem{AndChrFri92}
L.~Andersson, P.~T. Chru\'{s}ciel, \& H.~Friedrich,
\newblock {\em On the regularity of solutions to the {Yamabe} equation and the
  existence of smooth hyperboloidal initial data for {Einstein}'s field
  equations},
\newblock Comm. Math. Phys. {\bf 149}, 587 (1992).

\bibitem{Bei84}
R.~Beig,
\newblock {\em Integration of {Einstein}'s equations near spatial infinity},
\newblock Proc. Roy. Soc. Lond. A {\bf 391}, 295 (1984).

\bibitem{Bei91b}
R.~Beig,
\newblock {\em Conformal properties of static spacetimes},
\newblock Class. Quantum Grav. {\bf 8}, 263 (1991).

\bibitem{BeiOMu94}
R.~Beig \& N.~O'Murchadha,
\newblock {\em Trapped surfaces in vacuum spacetimes},
\newblock Class. Quantum Grav. {\bf 11}, 419 (1994).

\bibitem{BeiSch82}
R.~Beig \& B.~G. Schmidt,
\newblock {\em {Einstein}'s equation near spatial infinity},
\newblock Comm. Math. Phys. {\bf 87}, 65 (1982).

\bibitem{Bie10}
L.~Bieri,
\newblock {\em An extension of the stability theorem of the Minkowski space in
  general relativity},
\newblock J. Diff. Geom. {\bf 86}, 17 (2010).

\bibitem{BieChr16}
L.~Bieri \& P.~T. Chru\'sciel,
\newblock {\em Future-complete null hypersurfaces, interior gluings and the
  Trautman-Bondi mass},
\newblock In {\tt arXiv:1612.04359}.

\bibitem{BonBurMet62}
H.~Bondi, M.~G.~J. van~der Burg, \& A.~W.~K. Metzner,
\newblock {\em Gravitational waves in general relativity {V}{I}{I}. {Waves}
  from axi-symmetric isolated systems},
\newblock Proc. Roy. Soc. Lond. A {\bf 269}, 21 (1962).

\bibitem{BriLin63}
D.~R. Brill \& R.~W. Lindquist,
\newblock {\em Interaction energy in geometrostatics},
\newblock Phys. Rev. {\bf 131}, 471 (1963).

\bibitem{Chr03}
D.~Christodoulou,
\newblock {\em The initial value problem in General Relatity},
\newblock in {\em Proceedings of the IX Marcel Grossmann meeting on General
  Relativity}, edited by V.~G. Gurzadyan, R.~T. Jantzen, \& R.~Ruffini,
  volume~A, World Scientific, 2003.

\bibitem{ChrKla93}
D.~Christodoulou \& S.~Klainerman,
\newblock {\em The global nonlinear stability of the {Minkowski} space},
\newblock Princeton University Press, 1993.

\bibitem{ChrMacSin95}
P.~T. Chru\'{s}ciel, M.~A.~H. Mac{Callum}, \& D.~B. Singleton,
\newblock {\em Gravitational waves in general relativity {X}{I}{V}. {Bondi}
  expansions and the ``polyhomogeneity'' of $\mathscr{I}$},
\newblock Phil. Trans. Roy. Soc. Lond. A {\bf 350}, 113 (1995).

\bibitem{Fra04}
J.~Frauendiener,
\newblock {\em Conformal infinity},
\newblock Living Rev. Rel. {\bf 7} (2004).

\bibitem{Fri86a}
H.~Friedrich,
\newblock {\em On purely radiative space-times},
\newblock Comm. Math. Phys. {\bf 103}, 35 (1986).

\bibitem{Fri86b}
H.~Friedrich,
\newblock {\em On the existence of n-geodesically complete or future complete
  solutions of {E}instein's field equations with smooth asymptotic structure},
\newblock Comm. Math. Phys. {\bf 107}, 587 (1986).

\bibitem{Fri92}
H.~Friedrich,
\newblock {\em Asymptotic structure of space-time},
\newblock in {\em Recent Advances in general relativity}, edited by A.~I. Janis
  \& J.~R. Porter, Einstein studies vol. 4, page 147, Birkhauser, 1992.

\bibitem{Fri98a}
H.~Friedrich,
\newblock {\em Gravitational fields near space-like and null infinity},
\newblock J. Geom. Phys. {\bf 24}, 83 (1998).

\bibitem{Fri99}
H.~Friedrich,
\newblock {\em Einstein's equation and conformal structure},
\newblock in {\em The Geometric Universe. {S}cience, Geometry and the work of
  {R}oger {P}enrose}, edited by S.~A. Huggett, L.~J. Mason, K.~P. Tod, S.~T.
  Tsou, \& N.~M.~J. Woodhouse, page~81, Oxford University Press, 1999.

\bibitem{Fri03b}
H.~Friedrich,
\newblock {\em Spin-2 fields on Minkowski space near space-like and null
  infinity},
\newblock Class. Quantum Grav. {\bf 20}, 101 (2003).

\bibitem{Fri04}
H.~Friedrich,
\newblock {\em Smoothness at null infinity and the structure of initial data},
\newblock in {\em 50 years of the Cauchy problem in general relativity}, edited
  by P.~T. Chru\'{s}ciel \& H.~Friedrich, Birkhausser, 2004.

\bibitem{Fri13}
H.~Friedrich,
\newblock {\em Conformal structure of static vacuum data},
\newblock Comm. Math. Phys. {\bf 321}, 419 (2013).

\bibitem{FriKan00}
H.~Friedrich \& J.~K\'{a}nn\'{a}r,
\newblock {\em Bondi-type systems near space-like infinity and the calculation
  of the {N}{P}-constants},
\newblock J. Math. Phys. {\bf 41}, 2195 (2000).

\bibitem{KlaNic03}
S.~Klainerman \& F.~Nicol\`{o},
\newblock {\em Peeling properties of asymptotically flat solutions to the
  Einstein vacuum equations},
\newblock Class. Quantum Grav. {\bf 20}, 3215 (2003).

\bibitem{xAct}
J.~M. Mart\'{\i}n-Garc\'{\i}a,
\newblock http://www.xact.es, 2014.

\bibitem{Mis63}
C.~W. Misner,
\newblock {\em The method of images in geometrodynamics},
\newblock {A}nn. {P}hys. {\bf 24}, 102 (1963).

\bibitem{NovGol82}
S.~Novak \& J.~N. Goldberg,
\newblock {\em Conformal properties of nonpeeling vacuum spacetimes},
\newblock Gen. Rel. Grav. {\bf 14}, 655 (1982).

\bibitem{Pen65a}
R.~Penrose,
\newblock {\em Zero rest-mass fields including gravitation: asymptotic
  behaviour},
\newblock Proc. Roy. Soc. Lond. A {\bf 284}, 159 (1965).

\bibitem{PenRin84}
R.~Penrose \& W.~Rindler,
\newblock {\em Spinors and space-time. {V}olume 1. {T}wo-spinor calculus and
  relativistic fields},
\newblock Cambridge University Press, 1984.

\bibitem{PenRin86}
R.~Penrose \& W.~Rindler,
\newblock {\em Spinors and space-time. {V}olume 2. {S}pinor and twistor methods
  in space-time geometry},
\newblock Cambridge University Press, 1986.

\bibitem{Sac62c}
R.~K. Sachs,
\newblock {\em Gravitational waves in general relativity {V}{I}{I}{I}. {Waves}
  in asymptotically flat space-time},
\newblock Proc. Roy. Soc. Lond. A {\bf 270}, 103 (1962).

\bibitem{SchYau79}
R.~Schoen \& S.~T. Yau,
\newblock {\em On the proof of the positive mass conjecture in general
  relativity},
\newblock Comm. Math. Phys. {\bf 65}, 45 (1979).

\bibitem{Ste91}
J.~Stewart,
\newblock {\em Advanced general relativity},
\newblock Cambridge University Press, 1991.

\bibitem{Val98}
J.~A. Valiente~Kroon,
\newblock {\em Conserved Quantities for polyhomogeneous spacetimes},
\newblock Class. Quantum Grav. {\bf 15}, 2479 (1998).

\bibitem{Val99b}
J.~A. Valiente~Kroon,
\newblock {\em A Comment on the Outgoing Radiation Condition and the Peeling
  Theorem},
\newblock Gen. Rel. Grav. {\bf 31}, 1219 (1999).

\bibitem{Val99a}
J.~A. Valiente~Kroon,
\newblock {\em Logarithmic {Newman}-{Penrose} Constants for arbitrary
  polyhomogeneous spacetimes},
\newblock Class. Quantum Grav. {\bf 16}, 1653 (1999).

\bibitem{Val04d}
J.~A. Valiente~Kroon,
\newblock {\em Does asymptotic simplicity allow for radiation near spatial
  infinity?},
\newblock Comm. Math. Phys. {\bf 251}, 211 (2004).

\bibitem{Val04a}
J.~A. Valiente~Kroon,
\newblock {\em A new class of obstructions to the smoothness of null infinity},
\newblock Comm. Math. Phys. {\bf 244}, 133 (2004).

\bibitem{Val07a}
J.~A. Valiente~Kroon,
\newblock {\em Asymptotic properties of the development of conformally flat
  data near spatial infinity},
\newblock Class. Quantum Grav. {\bf 24}, 3037 (2007).

\bibitem{Val09a}
J.~A. Valiente~Kroon,
\newblock {\em Estimates for the Maxwell field near the spatial and null
  infinity of the Schwarzschild spacetime},
\newblock J. Hyp. Diff. Eqns. {\bf 6}, 229 (2009).

\bibitem{CFEBook}
J.~A. {Valiente Kroon},
\newblock {\em Conformal Methods in General Relativity},
\newblock Cambridge University Press, 2016.

\bibitem{Win85}
J.~Winicour,
\newblock {\em Logarithmic asymptotic flatness},
\newblock Found. Phys. {\bf 15}, 605 (1985).

\end{thebibliography}

\end{document}